\newcolumntype{P}[1]{>{\centering\arraybackslash}p{#1}}
\newcolumntype{M}[1]{>{\centering\arraybackslash}m{#1}}
\newtheorem{theorem}{Theorem}[section]
\newtheorem{lemma}[theorem]{Lemma}
\newtheorem{proposition}[theorem]{Proposition}
\newtheorem{corollary}[theorem]{Corollary}
\newtheorem{definition}[theorem]{Definition}
\newcounter{pfxc}[section]
\newenvironment{pfxc}[1][]{\refstepcounter{pfxc}\par\medskip
   \noindent \textbf{Prefix Constraint~\thepfxc. #1} \rmfamily}{\medskip}
\preto\subequations{\ifhmode\unskip\fi}
\renewcommand{\mod}{ \text{ mod }  }
\newcommand{\rvec}[1]{{\boldsymbol{\mathbf{\MakeLowercase{#1}}}}}
\newcommand{\rs}[1]{{\boldsymbol{\mathbf{\MakeLowercase{#1}}}}}
\newcommand{\dvec}[1]{{#1}}
\newcommand{\Cesaro}{Ces\'{a}ro }
\newcommand{\indep}{\perp \!\!\! \perp}
\newcommand{\cov}{\text{cov}}
\newcommand{\meig}{\rho_{\max}}
\newcommand{\tp}{\mathrm{T}}
\begin{document}
%
% paper title
% can use linebreaks \\ within to get better formatting as desired
\title{Time-invariant Prefix Coding for LQG Control}
%
%
% author names and IEEE memberships
% note positions of commas and nonbreaking spaces ( ~ ) LaTeX will not break
% a structure at a ~ so this keeps an author's name from being broken across
% two lines.
% use \thanks{} to gain access to the first footnote area
% a separate \thanks must be used for each paragraph as LaTeX2e's \thanks
% was not built to handle multiple paragraphs
%

\author{Travis~Cuvelier,~\IEEEmembership{Student Member,~IEEE,} Takashi~Tanaka,~\IEEEmembership{Senior Member,~IEEE}
        and~Robert~W.~Heath~Jr.,~\IEEEmembership{Fellow,~IEEE,}
    % <-this % stops a space
\thanks{T. Cuvelier is with the Department
of Electrical and Computer Engineering, The University of Texas at Austin,
TX, 78712 USA (e-mail: tcuvelier@utexas.edu).}% <-this % stops a space
\thanks{T. Tanaka is with the Department
of Aerospace Engineering and Engineering Mechanics, The University of Texas at Austin,
TX, 78712 USA (e-mail: ttanaka@utexas.edu).}% <-this % stops a space
\thanks{R. Heath is with the Department
of Electrical and Computer Engineering, North Carolina State University, Raleigh, 
NC, 27606 USA (e-mail: rwheathjr@ncsu.edu).}
\thanks{This paper has supplementary downloadable material available at http://ieeexplore.ieee.org., provided by the author. This material includes an appendix containing proofs of various results. Contact tcuvelier@utexas.edu for further questions about this work. The originally published version of the supplementary material included a proof that contained an error that turned out to be inconsequential. This updated preprint corrects this error, which originally appeared in Lemma A.7.  }}

% note the % following the last \IEEEmembership and also \thanks - 
% these prevent an unwanted space from occurring between the last author name
% and the end of the author line. i.e., if you had this:
% 
% \author{....lastname \thanks{...} \thanks{...} }
%                     ^------------^------------^----Do not want these spaces!
%
% a space would be appended to the last name and could cause every name on that
% line to be shifted left slightly. This is one of those "LaTeX things". For
% instance, "\textbf{A} \textbf{B}" will typeset as "A B" not "AB". To get
% "AB" then you have to do: "\textbf{A}\textbf{B}"
% \thanks is no different in this regard, so shield the last } of each \thanks
% that ends a line with a % and do not let a space in before the next \thanks.
% Spaces after \IEEEmembership other than the last one are OK (and needed) as
% you are supposed to have spaces between the names. For what it is worth,
% this is a minor point as most people would not even notice if the said evil
% space somehow managed to creep in.

% The paper headers
\markboth{IEEE Journal on Selected Areas in Information Theory: ``Modern Compression"}%
{Submitted paper}
% The only time the second header will appear is for the odd numbered pages
% after the title page when using the twoside option.
% 
% *** Note that you probably will NOT want to include the author's ***
% *** name in the headers of peer review papers.                   ***
% You can use \ifCLASSOPTIONpeerreview for conditional compilation here if
% you desire.

% If you want to put a publisher's ID mark on the page you can do it like
% this:
%\IEEEpubid{0000--0000/00\$00.00~\copyright~2007 IEEE}
% Remember, if you use this you must call \IEEEpubidadjcol in the second
% column for its text to clear the IEEEpubid mark.

% use for special paper notices
%\IEEEspecialpapernotice{(Invited Paper)}

% make the title area
\maketitle

\begin{abstract}
Motivated by control with communication constraints, in this work we develop a time-invariant data compression architecture for linear–quadratic–Gaussian (LQG) control with minimum bitrate prefix-free feedback. For any fixed control performance, the approach we propose nearly achieves known directed information (DI) lower bounds on the time-average expected codeword length. We refine the analysis of a classical achievability approach, which required quantized plant measurements to be encoded via a time-varying lossless source code. We prove that the sequence of random variables describing the quantizations has a limiting distribution and that the quantizations may be encoded with a fixed source code optimized for this distribution without added time-asymptotic redundancy. Our result follows from analyzing the long-term stochastic behavior of the system, and permits us to additionally guarantee that the time-average codeword length (as opposed to expected length) is almost surely within a few bits of the minimum DI. To our knowledge, this time-invariant achievability result is the first in the literature. 
\end{abstract}
% IEEEtran.cls defaults to using nonbold math in the Abstract.
% This preserves the distinction between vectors and scalars. However,
% if the journal you are submitting to favors bold math in the abstract,
% then you can use LaTeX's standard command \boldmath at the very start
% of the abstract to achieve this. Many IEEE journals frown on math
% in the abstract anyway.

% Note that keywords are not normally used for peerreview papers.
\begin{IEEEkeywords}
 Control systems, control with communication constraints, network control theory, source coding.
\end{IEEEkeywords}

% For peer review papers, you can put extra information on the cover
% page as needed:
% \ifCLASSOPTIONpeerreview
% \begin{center} \bfseries EDICS Category: 3-BBND \end{center}
% \fi
%
% For peerreview papers, this IEEEtran command inserts a page break and
% creates the second title. It will be ignored for other modes.
\IEEEpeerreviewmaketitle

\section{Introduction}
\IEEEPARstart{I}{n} this work we consider LQG control over communication networks. Our motivation is a scenario where measurements from a remote sensor platform are conveyed wirelessly to a controller. In such a system, the  bitrate of the feedback channel can be tied directly to the amount of physical layer resources (e.g., time, bandwidth, and power) that must be allocated to attain satisfactory control performance. Such resources are inherently scarce. This motivates approaches to control that minimize communication overhead; potentially enabling, for example, automated factories where many agents share the communication medium \cite{procSmartManufacturing}. 

We attack this problem via data compression; we develop quantizers and variable-length codecs for the LQG feedback link. We consider a setup where at each discrete timestep an encoder, co-located with a sensor that can fully observe the plant, conveys a variable-length packet of bits to a decoder co-located with the controller. We discuss various prefix constraints that can be imposed on the packets. Such constraints allow the decoder, and possibly other users sharing the same communication medium, to uniquely identify the end of the encoder's transmission. This can enable efficient resource sharing. The packet bitrate provides a notion of communication cost. We prove that for a fixed control performance, the approach we propose nearly achieves known lower bounds on the minimum achievable bitrate. We presently summarize our contribution.

\subsection{Our Contribution}
There have been several data compression architectures proposed in the prior literature for LQG control with near-minimum bitrate variable-length feedback. While several approaches are known to satisfy fixed constraints on the control cost with near-minimum bitrates, e.g. \cite{silvaFirst}\cite{tanakaISIT}\cite{kostinaTradeoff}, these approaches generally require that the output of a quantizer be losslessly encoded using a \textit{time-varying} source code; nominally a lossless code perfectly adapted to the probability distribution of the quantizer's output at every time $t$. In this work, we use tools from ergodic theory to demonstrate that an architecture based on that of \cite{tanakaISIT} can be used to achieve near minimum prefix-free bitrate LQG control with a completely time-invariant quantizer and prefix-free code design. As the prefix-free code used to encode the quantizer output is fixed, the scheme satisfies a well-motivated time-invariant prefix constraint that is significantly stronger than those considered in the prior art.  To our knowledge, this is the first such result in the literature.
\subsection{Literature Review}
This work considers minimum bitrate LQG control via dithered uniform quantization and variable length coding. An early paper to consider stabilizing a linear system with uniformly quantized feedback measurements was \cite{dfd1}. For a deterministic system, \cite{dfd1} analyzed the long-term behavior of the chaotic dynamics of the state vector using ergodic theory. The problem of stabilizing a Gauss--Markov plant over a feedback channel with a random, time-varying rate was considered in \cite{mineroStabilization}.  In the scalar case, a necessary and sufficient condition for stabilization was derived. In contrast, our work considers the problem of attaining a fixed control cost with variable-length coding (the number of bits to be transmitted at each time is chosen by the encoder, not by nature). This line of research follows from a model for LQG control with minimum rate variable-length feedback from \cite{silvaFirst}. For scalar plants, \cite{silvaFirst} lower bounded the time average expected bitrate of a prefix-free source codec used as an LQG feedback channel in terms of Massey's directed information (DI) \cite{masseyDI}. This motivated a rate-distortion problem for the tradeoff between (a lower bound on) communication cost, quantified by the DI, and LQG control performance. The rate distortion problem was solved for a restricted class of encoders in \cite{silvaFirst} and a more general class in \cite{silvaFirstPrime}. Using entropy-coded dithered uniform quantization (ECDQ), \cite{silvaFirst} and \cite{silvaFirstPrime} demonstrated that the DI lower bound was nearly achievable. ECDQ uses uniform quantizers and a sequence of independent, identically distributed (IID) uniform random variables shared between the encoder and decoder to effectively whiten the reconstruction error \cite{ditherQuant}. While under some assumptions (e.g. high quantizer resolutions and smooth source densities \cite{gish}), the reconstruction error in uniform quantization is approximately uniform over the quantizer cell and uncorrelated with the input, if a dither is used these hold exactly. Furthermore, ECDQ has an intuitive rate analysis. In \cite{silvaFirst} and \cite{silvaFirstPrime} it was assumed that at every timestep a quantized measurement is encoded using Shannon-Fano-Elias (SFE) prefix coding. In the codeword length analysis, it is assumed that the SFE codec used is designed optimally at each timestep for the conditional probability mass function (PMF) of the quantizer output given the dither realization. The proof of the near-achievability of the lower bounds then followed from \cite{ditherQuant}'s rate analysis.

The quantizer and source codec designs we propose follow from analyzing a DI/LQG cost rate-distortion function. While the DI-based bitrate lower bound in \cite{silvaFirst} purported to apply to systems using dithering, an error was discovered in \cite{milanCorrection}. Revised proofs in \cite{milanCorrection} (see also \cite{milanMainResult}) and \cite{ourConverseLetter} established that the DI lower bound on time average bitrate holds even when the encoder and decoder share randomness. The rate-distortion formulation of  \cite{silvaFirstPrime} was extended to MIMO plants in \cite{SDP_DI}. In particular, \cite{SDP_DI} analyzed the optimization over a randomized encoder and decoder policy space. This lead to a formulation of an optimal test channel consisting of an ``encoder" that conveys a linear/Gaussian plant measurement to a ``decoder/controller" consisting of a Kalman filter (KF) and certainty equivalent controller. The minimal DI attainable for any limit on LQG control performance was shown to be a convex (log-det) optimization. In \cite{kostinaTradeoff}, via \cite{verduVariableLength}, the DI lower bound for prefix-free codes was extended to the more general class of uniquely decodable codes. Analytical lower bounds on the DI cost as a function of control performance were also derived. The lower bounds in \cite{kostinaTradeoff} are applicable to plants with non-Gaussian process noise.  Our work is also related to nonanticipative rate distortion theory and its application to the causal tracking of Gauss/Markov sources (cf. \cite{ognrdf} and \cite{ognrdfprime}). In particular, a rate-distortion lower bound on the bitrate required to asymptotically estimate the state of an uncontrolled system is computed in \cite{photisSRDF_SIAM} and \cite{photisSRDF_STSP} via dynamic programming and reverse waterfilling. 

In \cite{tanakaISIT}, the achievability approach from \cite{silvaFirst} was extended to MIMO plants. In \cite{tanakaISIT},  linear measurements, dithered element-wise uniform quantization, KFs, and certainty equivalent control are used to develop a system where the feedback from plant to controller is discrete but with system variables with identical means and covariances to those in \cite{SDP_DI}'s optimal test channel. This ensures that the LQG performance is equivalent to that in the test channel, and leads to an asymptotic bound on the conditional entropy of the quantizer output (given the dither) within a few bits of the DI lower bound. This result proved that conveying the quantized measurements from the encoder to the decoder via a time-varying SFE codec that accounts for the dither asymptotically achieves a time average bitrate near the lower bound. Dithered quantization and time-varying entropy coding is likewise used in \cite{photisSRDF_STSP} to demonstrate the near-achievability of the respective lower bounds. An achievability approach not relying on dithered quantization was provided in \cite{kostinaTradeoff}. The approach in  \cite{kostinaTradeoff} uses lattice quantization and entropy coding. In particular, using a bound on the output entropy of a lattice quantizer from \cite{kostinaOutputEntropy}, \cite{kostinaTradeoff} demonstrates that the entropy of quantized innovations is close to a corresponding lower bound in the high rate/strict control cost regime. While the quantization/coding approaches in \cite{tanakaISIT}, \cite{kostinaTradeoff}, and \cite{photisSRDF_STSP} can be shown to nearly achieve respective lower bounds, they rely on time-varying lossless source codecs.  

The upper bounds on achievable rate in \cite{silvaFirst}, \cite{tanakaISIT} and \cite{kostinaTradeoff} are developed in terms of the output entropy of a quantizer. While a lossless codec can be used to encode the quantizations into a variable-length binary string without delay and with an expected length close to this entropy, this generally requires the codec to be adapted, at every timestep, to the probability distribution of the quantizer output. This complication is compounded in \cite{silvaFirst} and   \cite{tanakaISIT}, as the source codec must be adapted to the conditional probability distribution of the quantizer output given the dither.

Work on control with fixed-length feedback is also relevant. It is well established that a linear plant driven by unbounded process noise cannot be stabilized in the mean square sense with feedback that undergoes time-invariant, memoryless, fixed-length quantization \cite{nairevans}.  The problem of minimum bitrate stabilization with fixed-length feedback was considered in \cite{yukselStabilization}, \cite{r3_add2}, and \cite{r3_add3}. Stabilization via an adaptive (zooming) fixed-length quantizer was considered in \cite{yukselStabilization}. Using tools from ergodic theory, \cite{yukselStabilization} analyzed the long-term behavior of the state and quantizer parameters and proved the existence of limiting distributions. It is proven that a particular quantizer achieves finite control cost \cite{yukselStabilization}. In the present work, we will use similar theory to prove time-invariant achievability results for variable-length coding under a constraint on LQG cost. In \cite{r3_add2}, a theoretical analysis was conducted to determine the minimum necessary and sufficient fixed-length feedback bitrate required to stabilize an unstable scalar linear system driven by process noise with a bounded $\alpha$ moment. The minimum bitrate required to asymptotically stabilize the system in any moment $\beta < \alpha$ is shown to exceed the plant's autoregressive coefficient by at most one bit. This analysis unified special cases appearing in prior work. A fixed-length stabilization algorithm (a time-varying quantizer design) that achieves \cite{r3_add2}'s fundamental limit in the presence of unbounded process noise was proposed in  \cite{r3_add3}. In \cite{fixedLenCoding}, fixed-length quantizers were designed to  minimize control cost. Using a Lloyd-Max style quantizer designed at each timestep, an optimal greedy control policy was developed and exhibited competitive performance \cite{fixedLenCoding}. In our work we consider the less restrictive variable-length feedback setting. 

There is also relevant work pertaining to fixed and variable length strategies for joint-source channel coding. In an early work considering feedback over noisy communication channels, \cite{r3_add1} proposed to design fixed-length encoder and controller strategies to minimize LQG cost via an alternating optimization. Dynamic programming optimizations for the optimal controller given a fixed encoder, the optimal encoder given fixed controllers, and related structural results were derived. More recently \cite{r1_17_oster_cdc} developed a family of \textit{stabilizing codes} for stabilizing and controlling linear systems over a packetized erasure channel. Essentially, sequences of packet messages are designed such that performance guarantee holds given that some fraction of the packets arrive. 

In this work, we refine the analysis on the dithered quantizer output entropy from \cite{tanakaISIT}; restating classical results that reduce the space-filling gap and and bound the unconditioned output entropy of the quantizer. We use ergodic theory to analyze the long-term behavior of the system, and demonstrate that it is sufficient to encode the quantizer outputs using a fixed, time-invariant entropy code without incurring an appreciable increase in communication cost over \cite{tanakaISIT}. In particular, we use results from \cite{ito_invariant} to prove the existence of an invariant measure for the Markov chain that describes the quantizer's \textit{inputs}. We then use theorems from \cite{mcmcReview} to verify that the chain both converges to the invariant measure and has an ergodic property. Our proof of this measure's existence follows from an analysis of Lebesgue weakly transient sets, which, for Markov chains in Euclidean spaces, provide a necessary and sufficient condition for the existence of an invariant measure with a strictly positive probability density function (PDF). The convergence and ergodicity of the chain is more-or-less immediate via the verification of an irreducibility condition often encountered in the literature on Markov Chain Monte Carlo \cite{mcmcReview}. We propose to encode the quantizations using a fixed time-invariant SFE--style source codec designed for the quantizer output PMF induced the invariant measure. Our use of a fixed prefix code ensures that the system satisfies a stronger prefix constraint with respect to prior approaches. The ergodic property leads to a novel ``almost sure" guarantee on the system's time average codeword length (as opposed to time average \textit{expected} length). We then use basic information theoretic inequalities to demonstrate that the Kullback–Leibler (KL) divergence (relative entropy) between the true quantizer output at time $t$ and the output induced by the invariant measure tends to zero as $t\rightarrow\infty$. This recovers a guarantee on the time average \textit{expected} codeword length. After the initial submission of this work, we generalized our initial results on time-invariant achievability to a more general class of LQG control systems. This work's revision incorporates these generalizations, some of which appear in the conference proceedings \cite{meditcomm}.

Before concluding our discussion of the prior art, it worth mentioning that the mathematical machinery used to establish our main result (namely the proofs pertaining to the existence of the limiting distribution, its ergodic properties, and proof of the chain's convergence in the KL sense) are not the only relevant tools available. In particular, in \cite{yuk_feller} a generalization of the notion of Feller regular Markov kernels (cf. e.g. \cite{harMinicourse}) was introduced and used to study the invariance and convergence properties of various adaptive quantization schemes. In the context of quantized control over an erasure channel, the theory of petite sets was used in \cite{yukpetithr} to establish positive Harris recurrence for the general state space Markov chain describing the adapted quantizer bin size and the system state.  Such chains necessarily admit an invariant probability measure \cite{yukpetithr}. There is recent work relating a general state space Markov chain's convergence to an invariant measure in the total-variation sense to convergence in sense of KL divergence \cite{yukarxiv}. In \cite{yukpub}, this result is used to analyze the stochastic stability of nonlinear filters in controlled dynamical dynamical systems. A nonasymptotic analysis of the KL-sense convergence of Langevin Markov chain Monte Carlo was performed in \cite{mlr_langegin_mcmc} via viewing the diffusion as a gradient flow (path of steepest descent) in the space of probability measures. For completeness, in this work we provide a direct, simple proof via Shannon-type inequalities that the our quantizer's outputs converge in the KL sense to the relevant limiting distribution.  

\textbf{Notation and Organization:} Constant scalars and vectors are denoted by lower-case letters $x$. If ${x}$ is a vector, $[{x}]_{i}$ denotes its $i^{\mathrm{th}}$ element. For vectors let $\lVert x \rVert_{2}$ denote the Euclidean norm, and let $\lVert x \rVert_{\infty} = \max_{i} |[x]_{i}|$. Matrices are denoted by capital letters $X$, the identity matrix in $\mathbb{R}^{m\times m}$ by $I_{m}$, the $0$ vector in $\mathbb{R}^{m}$ by $0_{m}$, and the $0$ matrix in $\mathbb{R}^{m\times m}$ by $0_{m\times m}$. Let $\lVert X \rVert_{2}$ denote the largest singular value of $X$. Let $\meig(X)$ denote the spectral radius of $X$, namely $\meig(X) = \max |\lambda| \text{ s.t. } \lambda \text{ is an eigenvalue of } X$. We write P(S)D for ``symmetric positive (semi)definite", and let $\mathbb{S}^{m}_{+}$ denote the set of $m\times m$ PSD matrices. We let $\succ$, $\succeq$ denote the standard partial order on the PSD cone, e.g. if $A,B\in\mathbb{R}^{m}$, we write $A\succ B$ if $A-B$ is PD, likewise $A\succeq B$ if $A-B$ is PSD. Random scalars or vectors are written in boldface $\rvec{x}$. If $\rvec{a}$ is discrete, we write $\mathbb{P}_{\rvec{a}}[\rvec{a}=a]=\mathbb{P}_{\rvec{a}}[a]$, likewise for conditional PMFs. We write $\rs{a}\indep \rs{b}$ to denote that $\rs{a}$ and $\rs{b}$ are independent. We write $\rs{a}\overset{\mathrm{a.s.}}{=}\rs{b}$ if $\mathbb{P}_{\rs{a},\rs{b}}[\rs{a}=\rs{b}]=1$, and define $\overset{\mathrm{a.s.}}{\ge}\rs{b}$, $\overset{\mathrm{a.s.}}{<}\rs{b}$, etc. analogously. For $\rvec{x}$ a random vector, $\text{cov}(\rvec{x}) = \mathbb{E}[\rvec{x}\rvec{x}^{\mathrm{T}}]- \mathbb{E}[\rvec{x}]\mathbb{E}[\rvec{x}]^{\mathrm{T}}$. 
Denote the set of finite-length binary strings $\{0,1\}^*$. For time domain sequences, let $\{\rs{x}_{t}\}$ denote $(\rs{x}_{0},\rs{x}_{1},\dots)$, $\rs{x}_{a}^{b}=(\rs{x}_{a},\dots,\rs{x}_{b})$ if $b\ge a$, and $\rs{x}_{a}^{b}=\emptyset$ otherwise. We let $\rs{x}^{b}=  \rs{x}_{0}^{b}$. For a topological space $\mathbb{X}$, let $\mathbb{B}(\mathbb{X})$ denote the standard Borel $\sigma$-algebra of $\mathbb{X}$. For Euclidean spaces, let $\lambda$ denote the Lebesgue measure (e.g., if $\mathbb{X}$ is $\mathbb{R}^n$, then for $\mathcal{K}\in\mathbb{B}(\mathbb{X})$, $\lambda(\mathcal{K})$ is the volume of $\mathcal{K}$ in $\mathbb{R}^{n}$). For a set $\mathcal{K}$, define the indicator function of $\dvec{x}\in\mathcal{K}$ as $1_{\dvec{x}\in\mathcal{S}}$.  

In Section \ref{sec:systemmodel} we formulate the problem of LQG control with minimum rate prefix-free coding in the feedback link. Section \ref{sec:converse} restates the rate-distortion formulation and overviews the optimal test channel from \cite{SDP_DI}. Our main results are in Section \ref{sec:main}. We begin by overviewing the achievability approach and its key ingredients in Section \ref{ssec:allthekeys}. Section \ref{ssec:tiachiev} provides an overview of our time-invariant availability approach, together with a statement of our main result. We prove the main result in Section \ref{ssec:tiachievpf}, relegating the proofs of some lemmas to Appendix A in the online supplementary material. We conclude in Section \ref{sec:conclcusion}. 
\section{System Model and Problem Formulation}\label{sec:systemmodel}
We consider the system model depicted in Fig. \ref{fig:ditharch}. We consider a time-invariant MIMO plant controlled via a feedback model where communication occurs over an ideal (delay and error free) binary channel. The plant is fully observable to an encoder/sensor block, which conveys a variable-length binary codeword $\rvec{a}_{t}\in\{0,1\}^*$ over the channel to a combined decoder/controller. Upon receipt of the codeword, the decoder/controller designs the control input. Denote the state vector $\rvec{x}_t\in \mathbb{R}^{m}$, the control input $\rvec{u}_{t}\in\mathbb{R}^{u}$, and let $\rvec{w}_{t}\sim\mathcal{N}(\dvec{0}_{m},{W})$ denote processes noise assumed to be IID over time. We assume ${W}\succ{0}_{m\times m}$, i.e., the process noise covariance is full rank. We assume assume that $\rvec{x}_{0}\sim\mathcal{N}(\dvec{0},{X}_0)$ for some $X_0\succeq 0$.  For ${A}\in\mathbb{R}^{m\times m}$ the system matrix  and ${B}\in\mathbb{R}^{m\times u}$ the  feedback gain matrix, for $t\ge 0$ the plant dynamics are given by
\begin{align}\label{eq:ssmodel}
    \rvec{x}_{t+1} = {A}\rvec{x}_{t}+{B}\rvec{u}_{t}+\rvec{w}_{t}. 
\end{align} To ensure finite control cost is attainable, we assume $({A},{B})$ are stabilizable. 

For generality, we assume that the encoder/sensor and the decoder/controller may be randomized. In Fig. \ref{fig:ditharch}, we assume that the encoder/sensor and decoder/controller share access to a common random \textit{dither signal}, $\{\rs{\delta}_{t}\}$. The dither is assumed to be IID over time. In real-world systems, this \textit{shared randomness} can be effectively accomplished using two synchronized pseudorandom number generators at the encoder and decoder. The encoder/sensor policy in Fig. \ref{fig:ditharch} is a sequence of causally conditioned Borel measurable kernels denoted
\begin{align}\label{eq:encdithpol}
    \mathbb{P}_{\mathrm{E}}[\rvec{a}_{0}^{\infty}|| \rs{\delta}_{0}^{\infty},\rvec{x}_{0}^{\infty}] = \left\{ \mathbb{P}_{\mathrm{E},t}=\mathbb{P}_{\rvec{a}_{t}|\rvec{a}_{0}^{t-1},\rs{\delta}_{0}^{t},\rvec{x}_{0}^{t}}\right\}_t. 
\end{align} The corresponding decoder/controller policy is given by 
\begin{align}\label{eq:contdithpol}
    \mathbb{P}_{\mathrm{C}}[\rvec{u}_{0}^{\infty}|| \rvec{a}_{0}^{\infty},\rs{\delta}_{0}^{\infty}] = \left\{ \mathbb{P}_{\mathrm{C},t}=\mathbb{P}_{\rvec{u}_{t}|\rvec{a}_{0}^{t},\rs{\delta}_{0}^{t},\rvec{u}_{0}^{t-1}}\right\}_t.
\end{align} Note that under the dynamics (\ref{eq:ssmodel}), $\rs{x}_{0}^{t}$ is a deterministic function of $\rvec{x}_{0}$, $\rvec{a}_{0}^{t-1}$, $\rvec{u}_{0}^{t-1}$, and $\rvec{w}_{0}^{t-1}$. We enforce conditional independence assumptions in the system model by a factorization of the one-step transition kernels for $\rvec{a}_{t}$, $\rs{\delta}_{t}$, $\rvec{u}_{t}$, and  $\rvec{w}_{t}$. The assumed conditional independence relationships induced between the system variables are illustrated through the factorizations of the transition kernels in (\ref{eq:ditherFactorization}) at the top of the following page, and are  discussed in Fig. \ref{fig:ditharch}. For $\mathcal{A}, \mathcal{D},\mathcal{U},\mathcal{W}$ measurable subsets, for  $t\ge 0$, we assume the transition kernels factorize via (\ref{eq:ditherFactorizationtneq0}). The conditional measure of $(\rvec{a}_0,\rvec{\delta}_0,\rvec{u}_{0},\rvec{w}_{0})$ given $\rvec{x}_{0}$ is given in (\ref{eq:ditherFactorizationteq0}).
\begin{figure*}[!t]
% ensure that we have normalsize text
\normalsize
% Store the current equation number.
%\setcounter{MYtempeqncnt}{\value{equation}}
% Set the equation number to one less than the one
% desired for the first equation here.
% The value here will have to changed if equations
% are added or removed prior to the place these
% equations are referenced in the main text.
%\setcounter{equation}{5}
\begin{subequations}\label{eq:ditherFactorization} 
\begin{multline}
    \mathbb{P}[(\rvec{a}_{t+1}\in \mathcal{A})\cap(\rs{\delta}_{t+1}\in\mathcal{D})\cap(\rvec{u}_{t+1}\in\mathcal{U})\cap(\rvec{w}_{t+1}\in\mathcal{W})|\rvec{a}_{0}^{t},\rs{\delta}_{0}^{t},\rvec{u}_{0}^{t},\rvec{w}_{0}^{t},\rvec{x}_{0}] =\\ \mathbb{P}_{\mathrm{E},t+1}[\rvec{a}_{t+1}\in\mathcal{A}|\rvec{a}_{0}^{t},\rs{\delta}_{0}^{t+1},\rvec{x}_{0}^{t+1}]\mathbb{P}_{\mathrm{C},t+1}[\rvec{u}_{t+1}\in\mathcal{U}|\rvec{a}_{0}^{t+1},\rs{\delta}_{0}^{t+1},\rvec{u}_{0}^{t}]\mathbb{P}[\rs{\delta}_{t+1}\in\mathcal{D}]\mathbb{P}[\rvec{w}_{t+1}\in\mathcal{W}],\text{ }t\ge 0\label{eq:ditherFactorizationtneq0},
\end{multline} 
\begin{align}
\mathbb{P}[(\rvec{a}_{0}\in \mathcal{A})\cap(\rs{\delta}_{0}\in\mathcal{D})\cap(\rvec{u}_{0}\in\mathcal{U})\cap(\rvec{w}_{0}\in\mathcal{W})|\rvec{x}_{0}] = \mathbb{P}[\rs{\delta}_{0}\in\mathcal{D}]\mathbb{P}_{\mathrm{E},0}[\rvec{a}_{0}\in\mathcal{A}|\rvec{x}_{0},\rs{\delta}_{0}]\mathbb{P}_{\mathrm{C},0}[\rvec{u}_{0}\in\mathcal{U}|\rvec{a}_{0},\rs{\delta}_0]\mathbb{P}[\rvec{w}_{0}\in\mathcal{W}]\label{eq:ditherFactorizationteq0}\end{align}\end{subequations}
% Restore the current equation number.
%\setcounter{equation}{\value{MYtempeqncnt}}
% IEEE uses as a separator
\hrulefill
% The spacer can be tweaked to stop underfull vboxes.
\vspace*{4pt}
\end{figure*}

\begin{figure}
	\centering
	\includegraphics[scale = .23]{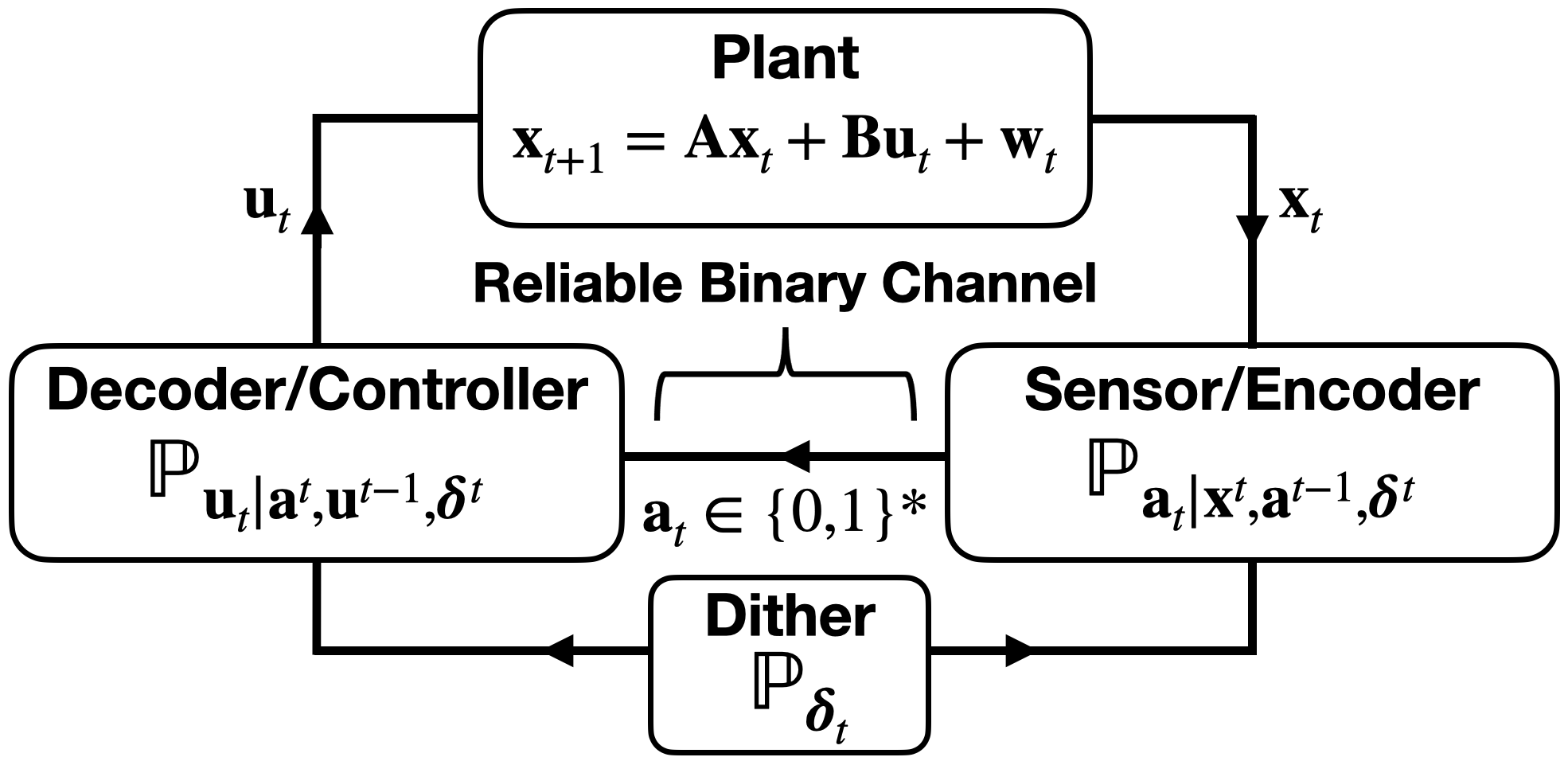}
	\caption{The system model with dithering. The encoder policy allows the codeword $\rs{a}_{t}$ to be generated randomly given ``all the information known to the encoder at time $t$". When $\rs{a}_{t}$ arrives at the decoder, the decoder can randomly generate its control input given $\rs{a}_{t}$ as well as its previous knowledge. Notably, both the encoder and decoder share access to $\rs{\delta}_{t}$, an IID sequence generated ``independently" of all past system variables.  }\label{fig:ditharch}
\end{figure}

The length of the codewords $\{\rs{a}_{t}\}$ quantifies the communication cost. For a codeword $\rvec{a}_{t}\in\{0,1\}^{*}$, denote its length in bits by $\ell(\rvec{a}_{t})$. The problem of interest is to minimize the time average expected bitrate subject to a constraint on control performance, quantified via the standard LQG cost. We will impose prefix constraints on the codewords $\{\rvec{a}_{t}\}$. These constraints will allow the decoder (and possibly other agents sharing the same communication medium) to uniquely identify the end of the transmission from the encoder. Three possible prefix constraints are:
 \begin{pfxc}\label{pfxc:pf1}
    For any realizations ($\rs{a}_{0}^{t-1}={a}_{0}^{t-1},\rs{\delta}_{0}^{t}={\delta}_{0}^{t},\rs{u}_{0}^{t-1}={u}_{0}^{t-1}$), for all distinct $a_1,a_2\in\{0,1\}^*$ with $\mathbb{P}_{\rs{a}_t|\rs{a}_{0}^{t-1},\rs{\delta}_{0}^{t},\rs{u}_{0}^{t-1}}[\rs{a}_t=a_1|\rs{a}_{0}^{t-1}={a}_{0}^{t-1},\rs{\delta}_{0}^{t}={\delta}_{0}^{t},\rs{u}_{0}^{t-1}={u}_{0}^{t-1}]>0$ and $\mathbb{P}_{\rs{a}_t|\rs{a}_{0}^{t-1},\rs{\delta}_{0}^{t},\rs{u}_{0}^{t-1}}[\rs{a}_t=a_2|\rs{a}_{0}^{t-1}={a}_{0}^{t-1},\rs{\delta}_{0}^{t}={\delta}_{0}^{t},\rs{u}_{0}^{t-1}={u}_{0}^{t-1}]>0$, $a_1$ is not a prefix of $a_2$.
\end{pfxc} 
\begin{pfxc}\label{pfxc:pf2}
    For all distinct $a_1,a_2\in\{0,1\}^*$ with $\mathbb{P}_{\rs{a}_t}[\rs{a}_t=a_1]>0$ and $\mathbb{P}_{\rs{a}_t}[\rs{a}_t=a_2]>0$, $a_1$ is not a prefix of $a_2$.
\end{pfxc} 
\begin{pfxc}\label{pfxc:pf3}
    For all $i,j$ and distinct $a_1,a_2\in\{0,1\}^*$ with $\mathbb{P}_{\rs{a}_i}[\rs{a}_i=a_1]>0$ and $\mathbb{P}_{\rs{a}_j}[\rs{a}_j=a_2]>0$, $a_1$ is not a prefix of $a_2$.
\end{pfxc} 

\noindent Prefix Constraints \ref{pfxc:pf1} and \ref{pfxc:pf2} were defined in \cite{ourConverseLetter}. Constraint \ref{pfxc:pf1} is the least strict. It allows any agent with knowledge of the information possessed by the decoder at time $t$ to uniquely identify the end of the encoder's transmission at time $t$. A downside, however, is that this information may be \textit{necessary} to determine the end of the codeword. This complicates the system architecture and may inhibit other agents from recognizing the end of the transmission. Constraint \ref{pfxc:pf2} is notionally stricter; it guarantees that any agent who knows the \textit{codebook} used by the encoder at time $t$ (precisely, the set  $\{b\in\{0,1\}^*\text{ } :\text{ }\mathbb{P}_{\rs{a}_t}[\rs{a}_t=b]>0\}$) can uniquely identify the end of the transmission. Under Constraint \ref{pfxc:pf2}, agents on the same network can identify the end of the transmission \textit{without} knowing $(\rs{a}_{0}^{t-1},\rs{\delta}_{0}^{t},\rs{u}_{0}^{t-1})$. Constraint \ref{pfxc:pf3} is a \textit{time-invariant} version of Constraint \ref{pfxc:pf2}. Constraint \ref{pfxc:pf3}  requires that the prefix condition holds across time, ensuring that any codeword used at time $t$ is not a prefix of any codeword used at time $t+m$ for any $m$. Any user with knowledge of the set $\{b\in\{0,1\}^*\text{ } :\text{ }\exists \text{ }t\text{ s.t. }\mathbb{P}_{\rs{a}_t}[\rs{a}_t=b]>0\}$ can uniquely identify the end of the transmission at any time $t$. Notably, to identify the end of the transmission, a user need not know the codebook used at time $t$, but only the  strings lying in the union of codebooks across time. Note that Constraint \ref{pfxc:pf3} is satisfied if the same prefix-free code is used for all $t$. 

We are interested in the optimization, for codewords conforming to Prefix Constraints \ref{pfxc:pf1}--\ref{pfxc:pf3}:  
\begin{equation}\label{eq:codewordLenghtOptimization}
\begin{aligned}
& \underset{\mathbb{P}_\mathrm{E}, \mathbb{P}_{\mathrm{C}}}{\inf}  \underset{T\rightarrow \infty}{\lim\sup}\text{ }\frac{1}{T+1}\sum\nolimits_{t=0}^{T}\mathbb{E}[\ell(\rvec{a}_{t})]\\  &\text{s.t. }   \underset{T\rightarrow \infty}{\lim\sup}\frac{1}{T+1}\sum\nolimits_{t=0}^{T}\mathbb{E}[\lVert \rvec{x}_{t+1} \rVert_{{Q}}^{2} +\lVert \rvec{u}_{t} \rVert_{\Phi}^{2}] \le \gamma,
\end{aligned} 
\end{equation} where ${Q}\succeq {0}$, $\Phi\succ {0}_{m\times m}$, and $\gamma$ is the maximum tolerable LQG cost. The minimization is over admissible sensor/encoder and decoder/controller policies described by 
(\ref{eq:encdithpol}) and (\ref{eq:contdithpol}). In Section \ref{sec:converse}, we discuss a lower bound on (\ref{eq:codewordLenghtOptimization}) that applies to all encoder and decoder policies conforming to (\ref{eq:ditherFactorization}) and any of the Prefix Constraints \ref{pfxc:pf1}--\ref{pfxc:pf3}. These bounds follow from \cite{ourConverseLetter}. Note that Constraint \ref{pfxc:pf1} was the notion of prefix-free considered in \cite{silvaFirst} and \cite{tanakaISIT}, while the ``prefix-free" version of the approach in \cite{kostinaTradeoff} conforms to Constraint \ref{pfxc:pf2}. To our knowledge, no variable-length compression architecture for LQG control in the prior work is known to both satisfy Constraint \ref{pfxc:pf3} and also achieve a codeword length provably close to any known lower bound on the optimization in (\ref{eq:codewordLenghtOptimization}). 

\section{Rate Distortion Lower Bound}\label{sec:converse}
We summarize the relevant results from \cite{ourConverseLetter} and \cite{SDP_DI} into the following theorem. 
\begin{theorem}\label{thm:converse}
Let the minimum communication cost attained by the optimization in (\ref{eq:codewordLenghtOptimization}) for an LQG cost constraint $\gamma$ be denoted $\mathcal{L}(\gamma)$. Let $S$ be a stabilizing solution to the discrete algebraic Riccati equation (DARE) $A^{\mathrm{T}}SA-S-A^{\mathrm{T}}SB(B^{\mathrm{T}}SB+\Phi)^{-1}B^{\mathrm{T}}SA+Q = 0_{m\times m}$, $K=-(B^{\mathrm{T}}SB+\Phi)^{-1}B^{\mathrm{T}}SA$, and $\Theta = K^{\mathrm{T}}(B^{\mathrm{T}}SB+\Phi)K$. Define the convex log-det  optimization
\begin{align}\label{eq:threestageRDF}
    \mathcal{R}(\gamma) &=& \left\{ \begin{aligned}
& \underset{\substack{P,\Pi, \in\mathbb{R}^{m\times m}\\P,\Pi\succeq 0_{m\times m}} }{\inf} \frac{1}{2}(-\log_{2}{\det{\Pi}}+\log_{2}{\det{W}} )\\ &\text{ }\text{s.t. }  \mathrm{Tr}(\Theta P)+\mathrm{Tr}(WS)\le \gamma\text{,  }\\&\text{ }\text{ }P\preceq APA^\mathrm{T}+W\text{, } \\&\text{ }\text{ }\text{ }\text{ }\begin{bmatrix}P-\Pi & PA^{\mathrm{T}} \\ AP & APA^{\mathrm{T}}+W \end{bmatrix}\succeq 0_{2m\times 2m}.
\end{aligned}\right. 
\end{align}  For a system conforming to Fig. \ref{fig:ditharch}, (\ref{eq:ditherFactorization}), and any of Prefix Constraints \ref{pfxc:pf1}--\ref{pfxc:pf3} we have
  $\mathcal{L}(\gamma)\ge\mathcal{R}(\gamma)$.  
\end{theorem}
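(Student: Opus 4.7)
The plan is to chain together two distinct lower bounds already established in the literature and then observe that the constraints of (\ref{eq:codewordLenghtOptimization}) are a (non-strict) refinement of the feasible set of (\ref{eq:threestageRDF}). First, invoke the directed-information (DI) lower bound on prefix-free expected codeword length under the shared-dither factorization (\ref{eq:ditherFactorization}). The essential step here is that, for codewords satisfying any of Prefix Constraints \ref{pfxc:pf1}--\ref{pfxc:pf3}, Kraft's inequality applies in an appropriate conditional form, and the resulting entropy inequality can be manipulated (as done in \cite{ourConverseLetter}) to yield
\begin{equation*}
\sum\nolimits_{t=0}^{T}\mathbb{E}[\ell(\rvec{a}_{t})] \ge I(\rvec{x}^{T}\to \rvec{a}^{T}\,\|\,\rvec{d}^{T}),
\end{equation*}
so that dividing by $T+1$ and taking the limit bounds $\mathcal{L}(\gamma)$ from below by the Ces\'{a}ro average of the dithered directed information, optimized over admissible policies meeting the LQG constraint.

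Next, I would translate the DI lower bound into the convex log-det program. This is where \cite{SDP_DI} does the heavy lifting: using the LQG separation principle one writes
\begin{equation*}
\mathbb{E}[\lVert \rvec{x}_{t+1} \rVert_{Q}^{2}+\lVert \rvec{u}_{t}\rVert_{R}^{2}] = \mathrm{Tr}(\Theta P_{t}) + \mathrm{Tr}(WS) + (\text{non-negative terms}),
\end{equation*}
where $P_{t}$ is the conditional error covariance at the decoder. The DI cost per stage is then lower bounded in terms of $P_{t}$ and its predicted counterpart $APA^{\mathrm{T}}+W$ via a standard mutual-information/Gaussian inequality, producing the Schur-complement constraint in (\ref{eq:threestageRDF}); the constraint $P \preceq APA^\mathrm{T}+W$ comes from the fact that conditioning reduces covariance across one step of the plant dynamics. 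Time-averaging and passing to a stationary minimizer (justified by convexity of the log-det objective and compactness of the feasible set once the LQG bound pins down $\mathrm{Tr}(\Theta P)$) collapses the per-stage bounds to the single-letter program $\mathcal{R}(\gamma)$.

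The final bookkeeping step is to check that the randomization permitted by (\ref{eq:ditherFactorization}) does not weaken the bound: both \cite{ourConverseLetter} and \cite{SDP_DI} verify that common randomness shared via $\{\rvec{d}_t\}$, which is independent of past system variables by the assumed factorization, cannot lower either the DI cost or the attainable LQG cost below what is encoded in $\mathcal{R}(\gamma)$. Chaining $\mathcal{L}(\gamma)\ge \mathcal{R}_{\mathrm{DI}}(\gamma)\ge \mathcal{R}(\gamma)$ then yields the theorem.

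I expect the main obstacle to be not any single inequality but the careful alignment of the three prefix constraints with the DI lower bound: Constraint \ref{pfxc:pf1} is the most permissive, so establishing the bound there automatically covers \ref{pfxc:pf2} and \ref{pfxc:pf3}, but it requires the conditional form of Kraft's inequality with conditioning on the information available to the decoder at time $t$, namely $(\rvec{a}_{0}^{t-1},\rvec{d}_{0}^{t},\rvec{u}_{0}^{t-1})$. Verifying that this conditional Kraft inequality combines cleanly with the dithered factorization (\ref{eq:ditherFactorization}) to produce the DI term $I(\rvec{x}^{T}\to \rvec{a}^{T}\,\|\,\rvec{d}^{T})$, rather than some larger information quantity that would not match the achievability side, is the technical step where one must simply cite the detailed derivation in \cite{ourConverseLetter}.
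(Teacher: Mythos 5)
Your proposal is essentially the paper's own proof, which is a single sentence citing \cite{ourConverseLetter} and \cite{SDP_DI} and observing the nesting of the prefix constraints; you correctly identify the two ingredients (the conditional-Kraft/DI lower bound and the log-det reformulation) and the constraint-ordering argument. The only minor stylistic difference is that you establish the bound under the most permissive Constraint~\ref{pfxc:pf1} and let monotonicity of the infimum over shrinking feasible sets cover Constraints~\ref{pfxc:pf2} and~\ref{pfxc:pf3}, whereas the paper notes that \cite{ourConverseLetter} already handles Constraints~\ref{pfxc:pf1} and~\ref{pfxc:pf2} and invokes the ordering only to deduce Constraint~\ref{pfxc:pf3}; both reductions are sound and equivalent in substance.
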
 The proof of Theorem \ref{thm:converse} is immediate from \cite{ourConverseLetter} given that Constraint \ref{pfxc:pf3} is more stringent than Constraint \ref{pfxc:pf2}. The interpretation of the optimization in (\ref{eq:threestageRDF}) is aided by the three-stage test channel illustrated in Fig. \ref{fig:threeStageTI}. 
\begin{figure}
	\centering
	\includegraphics[scale = .21]{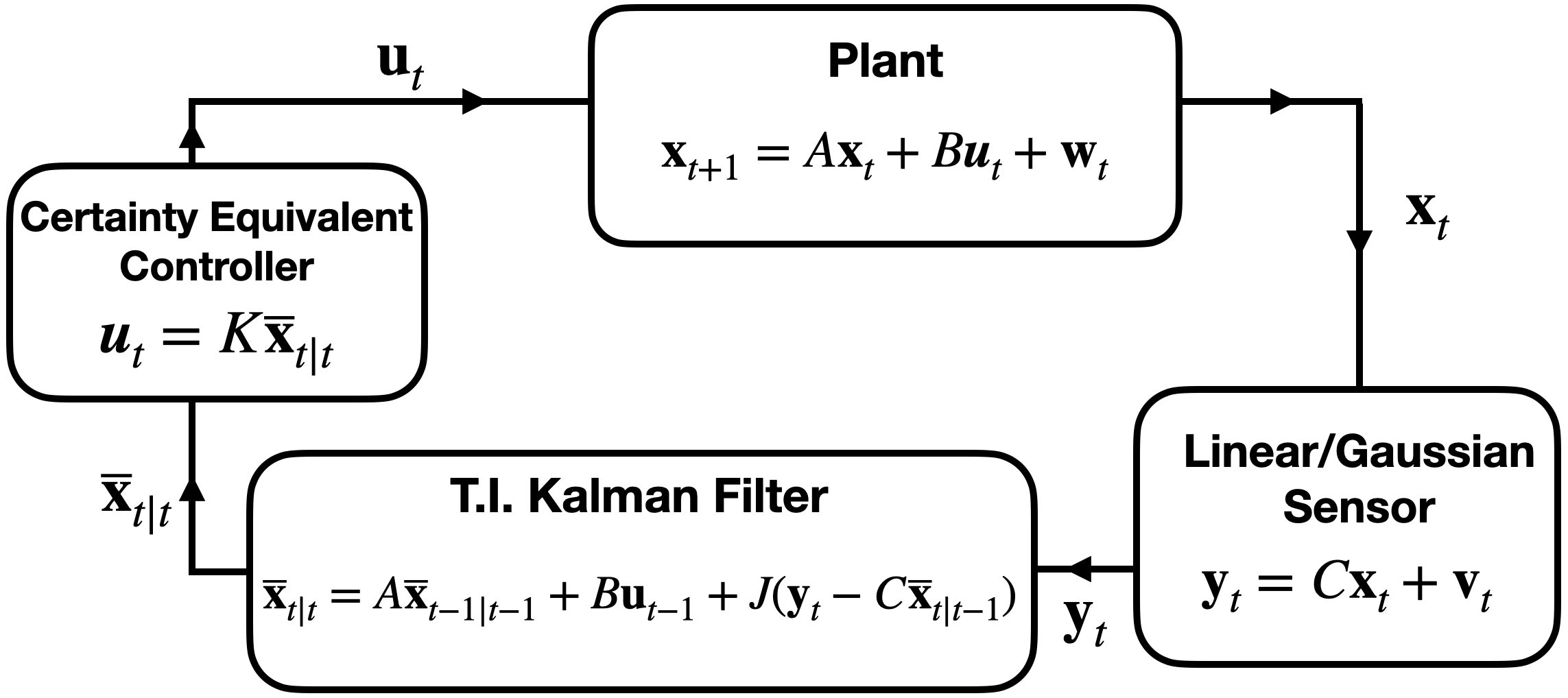}
    
	\caption{  The time-invariant three-stage test channel does not conform to the system model in Fig. \ref{fig:ditharch}, but will be used to analyze the approaches we propose.  }\label{fig:threeStageTI}
\end{figure} The test channel consists of an ``encoder" that conveys a linear/Gaussian plant measurement to a ``decoder"/controller. The decoder has a time-invariant KF to track the state, followed by a standard certainty equivalent controller. Denote the minimizing $P$ from (\ref{eq:threestageRDF}) by $\hat{P}$. Let $C\in \mathbb{R}^{m\times m}$ and $V\in \mathbb{R}^{m\times m}$, $V\succ 0_{m\times m}$ be any such matrices that satisfy
\begin{align}\label{eq:cdef}
    \hat{P}^{-1}-(A\hat{P}A^{\mathrm{T}}+W)^{-1}-C^{\mathrm{T}}V^{-1}C=0_{m\times m}.
\end{align}
The decoder receives the measurement $\rs{y}_{t} = C\rs{x}_{t}+\rs{v}_{t}$ where $\rs{v}_{t}\sim\mathcal{N}(0_{m},V)$ \text{ IID } and $\rs{v}_{t}\indep \rs{x}_{0}^{t}$. Let $\hat{P}_{+}= A\hat{P}A^{\mathrm{T}}+W$ and let $J= \hat{P}_{+}{C}^{\mathrm{T}}({C}\hat{P}_{+}{C}^{\mathrm{T}}+{V})^{-1}.$ Denote the filter's sequence of prior and posterior state estimates as $\{\rs{\overline{x}}_{t|t-1}\}$ and $\{\rs{\overline{x}}_{t|t}\}$. Let $\rs{\overline{x}}_{0|-1} = 0$. The filtering recursion is $\rs{\overline{x}}_{t|t} = \rs{\overline{x}}_{t|t-1}+J(\rs{y}_{t}-C\rs{\overline{x}}_{t|t-1})$ and $\rs{\overline{x}}_{t|t-1} = {A}\rs{\overline{x}}_{t-1|t-1}+{B}\rs{u}_{t-1}$. Define the prior and posterior error processes and their respective covariances via  $\rs{e}_{t} = \rs{x}_{t} - \rs{\overline{x}}_{t|t-1}$, $\overline{P}_{t|t-1} = \mathbb{E}[\rs{e}_{t}\rs{e}_{t}^{\mathrm{T}}]$ and $\rs{e}_{t|t} = \rs{x}_{t} - \rs{\overline{x}}_{t|t}$, $\overline{P}_{t|t} = \mathbb{E}[\rs{e}_{t|t}\rs{e}_{t|t}^{\mathrm{T}}]$. Note that for all $t\ge 0$, $\mathbb{E}[\rs{e}_{t}] = 0$ and  $\mathbb{E}[\rs{e}_{t|t}] = 0$. When $W\succ 0$, a discrete Lyaponov equation can be used to establish that for any $C$ satisfying (\ref{eq:cdef}), $({A},{C})$ is detectable; see \cite[below (25)]{tanakaSDRGM} for a similar argument. Since $W\succ 0_{m\times m}$, $(A,W^{\frac{1}{2}})$ is stabilizable. 
This implies that $ \lim_{t\rightarrow\infty} \overline{P}_{t|t-1} = \hat{P}_{+}$ and $\lim_{t\rightarrow\infty} \overline{P}_{t|t} = \hat{P}$ \cite{RDE_convergence}. Recall $K=-(B^{\mathrm{T}}SB+R)^{-1}B^{\mathrm{T}}SA$. The control input at time $t$ given by $\rs{u}_{t} = K\rs{\hat{x}}_{t|t}$.  It can be shown (see \cite{SDP_DI}) that, in the architecture of Fig. \ref{fig:threeStageTI} the control cost satisfies
\begin{IEEEeqnarray}{rCl}
    \underset{T\rightarrow \infty}{\lim}\frac{\sum\nolimits_{t=0}^{T}\mathbb{E}[\lVert \rvec{x}_{t+1} \rVert_{{Q}}^{2} +\lVert \rvec{u}_{t} \rVert_{{R}}^{2}]}{T+1} &=& \mathrm{Tr}(SW)+\mathrm{Tr}(\hat{P}\Theta)\label{eq:cntrcostxpres}\nonumber\\&\le& \gamma\label{eq:ref_threestageRDF}, 
\end{IEEEeqnarray} where (\ref{eq:ref_threestageRDF}) follows as $\hat{P}$ is a feasible solution of (\ref{eq:threestageRDF}). The minimum of (\ref{eq:threestageRDF}) is given by (see \cite{SDP_DI})
\begin{IEEEeqnarray}{rCl}\label{eq:commcostxpres}
   \mathcal{R}(\gamma) &=&\frac{1}{2}\log_{2}{\frac{\det{\hat{P}_{+}}}{\det{\hat{P}}}}. %(1/2)\log_2(\det{\hat{P}}_{+})-(1/2)\log_2(\det{\hat{P}}) %
\end{IEEEeqnarray}
We reiterate that (\ref{eq:commcostxpres}) lower bounds the communication cost attainable in the (original) architecture in Fig. \ref{fig:ditharch}. We will use (\ref{eq:cntrcostxpres}), (\ref{eq:commcostxpres}), and the test channel in the following section on achievability.

\section{Upper Bounds (Achievability)}\label{sec:main}
In this section, we present theoretical results demonstrating that, assuming access to a uniform dither signal in the architecture of Fig. \ref{fig:ditharch}, uniform (dithered) quantization coupled with time-invariant prefix-free source coding strategies can be used to achieve nearly optimal communication bitrate (with respect to the DI lower bound in (\ref{eq:threestageRDF})). We propose one approach where quantizations are encoded conditioned on the realization of the dither, but without any other time adaptation. This approach conforms to Prefix Constraint \ref{pfxc:pf1}, and is shown to achieve the same communication cost as the architecture in \cite{tanakaISIT}. We then propose an approach where quantization is performed with dither, but the encoding of the discrete quantizations into codewords is done without regard to the dither realization. This leads to a completely time-invariant approach; the same prefix-free codec is used to encode the quantizations at all time. This latter approach conforms to Prefix Constraint \ref{pfxc:pf3}, and achieves a bitrate at most one-bit-per-plant-dimension worse that the time-varying approach in \cite{tanakaISIT}. 

Fig. \ref{fig:overviewAchiev} illustrates an overview of the framework we will use to demonstrate achievability in this section. The approach conforms to the architecture in Fig. \ref{fig:ditharch} with the dither signal chosen as an IID seqeuence of element-wise mutually independent uniform random vectors. At a high level, at every time $t$,  encoder produces a particular linear measurement of the plant, which it then quantizes into a discrete random variable (a quantization), $\rvec{q}_{t}$, using an \textit{elementwise uniform quantizer with subtractive dither}. Each element of the dither sequence $\rvec{\delta}_{t}$ has IID elements with $[\rvec{\delta}_{t}]_{i}$ uniform on $[\frac{-\Delta}{2},\frac{\Delta}{2}]$. The encoder then encodes $\rvec{q}_{t}$ into a codeword, $\rvec{a}_{t}$, using a lossless \textit{Shannon-Fano-Elias (SFE) prefix-free code}. The decoder recovers $\rvec{q}_{t}$ exactly, and then designs the control input $\rvec{u}_{t}$ using $\rvec{q}_{t}$, $\rvec{\delta}_{t}$, and a previous KF estimate. In the next subsection, we describe SFE codes and dithered uniform quantization in the detail necessary to proceed with our analysis. 
\begin{figure}[h]
	\centering
	\includegraphics[scale = .225]{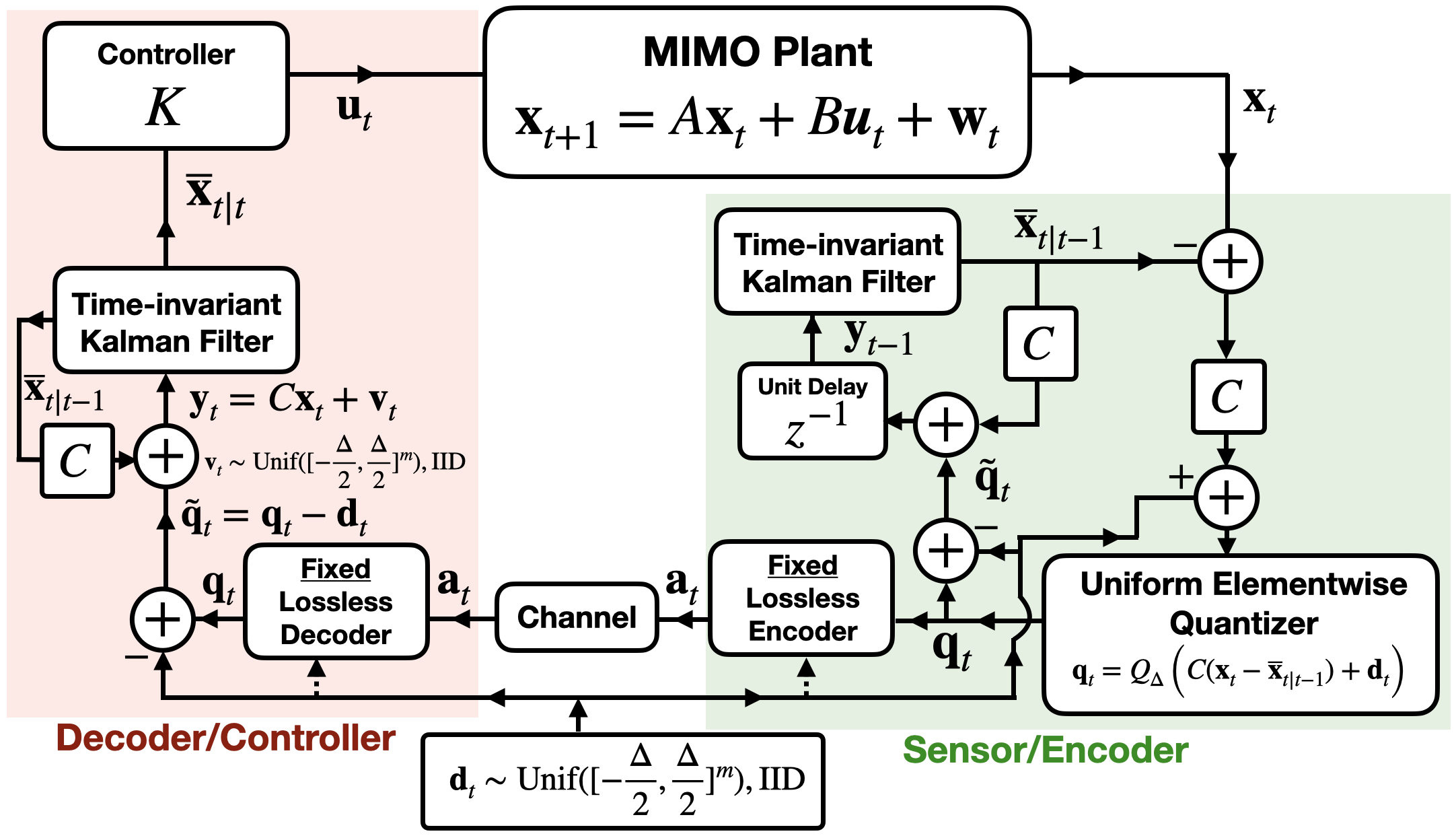}

	\caption{The achievability architecture. The dither sequence $\{\rs{\delta}_{t}\}$ are shared random vectors that are IID uniform on $[-\Delta/2,\Delta/2]^m$. The dither realization may be used by the entropy codec, but need not be (we consider both cases). }\label{fig:overviewAchiev}
\end{figure}
\subsection{Key ingredients}\label{ssec:allthekeys}
\subsubsection{Shannon-Fano-Elias codes \cite{elemIT}}\label{ssec:codec_key}
In this section we briefly outline the SFE approach to prefix-free source coding. We will pursue a general treatment, but will specialize the results to the quantization architecture in Fig. \ref{fig:overviewAchiev}. 

Let $\rs{q}$ denote a discrete random variable with (countable) range $\mathbb{A}$. Without loss of generality, it can be assumed that $\mathbb{A}=\mathbb{N}$ (if the alphabet is countably infinite) or $\mathbb{A}=\{0,1,\dots,r\}$, and that $\mathbb{P}_{\rs{q}}[q] > 0$ $\forall$ $q\in\mathbb{A}$. Let $\rvec{\delta}$ be a random variable on support $\mathbb{X}$ assumed to be known to both the encoder and decoder. Consider the problem of encoding $\rs{q}$ into a prefix-free codeword, such that it can be recovered at a decoder. In this scenario, we view $\rvec{q}$ as a quantization, and $\rvec{\delta}$ as shared randomness, akin to the dither sequence. Let $\overline{F}_{\rs{q}|\rs{\delta}}(q|\delta) =  \mathbb{P}_{\rs{q}}[\rs{q}<q|\delta]+\mathbb{P}_{\rs{q}|\rs{\delta}}[q|\delta]/2$. Define what we will refer to as the ``unsorted, conditional" encoding $\mathrm{C}^{\mathrm{U}}_{\rs{q}|\rs{\delta}}: \mathbb{A}\otimes \mathbb{X}\rightarrow \{0,1\}^{*}$ as 
\begin{multline}\label{eq:unsortedRecipeSI}
    \mathrm{C}_{\rs{q}|\rs{\delta}}^{\mathrm{U}}(q|\delta) = \left( \text{the binary expansion of } \overline{F}_{\rs{q}|\rs{\delta}}(q|\delta)\right.\\\left.\text{truncated to }\lceil-\log_{2}(\mathbb{P}_{\rs{q}|\rs{\delta}}[q|\delta])\rceil+1\text{ bits. }\right)
\end{multline} It can be shown that for any realization $\rvec{\delta}=\delta$ and ${q}_{1},{q}_{2}\in\mathbb{A}$ with $\mathbb{P}_{\rvec{q}|\rvec{\delta}}[q_{1}|\delta],\mathbb{P}_{\rvec{q}|\rvec{\delta}}[q_{2}|\delta]>0$ (e.g. any two quantizations $q_{1}$ and $q_{2}$ with nonzero probability of occurring given $\rvec{\delta}=\delta$), $\mathrm{C}_{\rs{q}|\rs{\delta}}^{\mathrm{U}}(q_1|\delta)$ is not a prefix of $\mathrm{C}_{\rs{q}|\rs{\delta}}^{\mathrm{U}}(q_2|\delta)$ and vice versa \cite[Chapter 5.9]{elemIT}. This property mirrors Prefix Constraint \ref{pfxc:pf1}, e.g. codewords are prefix-free \textit{given} the knowledge shared by the encoder and decoder. If $\mathrm{C}^{\mathrm{U}}_{\rs{q}|\rs{\delta}}$ is used to encode $\rvec{q}$ (given the realization of $\rvec{\delta}$), then the codeword length satisfies
\begin{align}\label{eq:unsortedsilength}
H(\rs{q}|\rs{\delta})    \le \mathbb{E}_{\rs{q},\rs{\delta}}[\mathrm{C}^{\mathrm{U}}_{\rs{q}|\rs{\delta}}(\rs{q}|\rs{\delta})] \le H(\rs{q}|\rs{\delta})+2.
\end{align}
We now state a construction that achieves a stronger prefix constraint. Define 
$\overline{F}_{\rs{q}}(q) =  \mathbb{P}_{\rs{q}}[\rs{q}<q]+\mathbb{P}_{\rs{q}}[q]/2$, and define the ``unsorted, unconditional" encoding function $\mathrm{C}^{\mathrm{U}}_{\rs{q}}: \mathbb{A}\rightarrow \{0,1\}^{*}$ as 
\begin{multline}\label{eq:unsorted}
    \mathrm{C}_{\rs{q}}^{\mathrm{U}}(q) = \left(\text{the binary expansion of } \overline{F}_{\rs{q}}(q)\text{ truncated}\right.\\\left.\text{ to }\lceil-\log_{2}(\mathbb{P}_{\rs{q}}[q])\rceil+1\text{ bits. }\right)
\end{multline} If can be shown that for any distinct $q_{1},q_{2}\in\mathbb{A}$ with $\mathbb{P}_{\rvec{q}}[q_{1}],\mathbb{P}_{\rvec{q}}[q_{2}]>0$, $\mathrm{C}_{\rs{q}}^{\mathrm{U}}(q_{1})$ is not a prefix of $\mathrm{C}_{\rs{q}}^{\mathrm{U}}(q_{2})$ and vice-versa. The encoding $\mathrm{C}_{\rs{q}}^{\mathrm{U}}$ satisfies a prefix-property like that in Constraint \ref{pfxc:pf2}; namely the codewords are ``prefix-free" irrespective of the realization of $\rvec{\delta}$ \cite[Chapter 5.9]{elemIT}. This encoding scheme achieves a codeword length of $H(\rs{q})    \le \mathbb{E}_{\rs{q},\rs{\delta}}[\mathrm{C}^{\mathrm{U}}_{\rs{q}}(\rs{q})] \le H(\rs{q})+2$. It turns out that the upper bound on codeword length can be reduced is the encoder prepossesses $\rvec{q}$ to produce a random variable that is ``sorted" in order of decreasing probability mass. Assuming without loss of generality that $\mathbb{A}=\mathbb{N}$, let $s\colon\mathbb{A}\rightarrow\mathbb{A}$ be a bijection that re-indexes the support of $\rvec{q}$ such that  $\mathbb{P}_{\rs{q}}[s(0)]\ge \mathbb{P}_{\rs{q}}[s(1)]\ge \mathbb{P}_{\rs{q}}[s(2)]\ldots$. Such a bijection $s$ always exists, however it may be extremely difficult and/or computationally unreasonable to find \cite{verduVariableLength}. Let $\rvec{\overline{q}}= s(\rvec{q})$ and the function $F_{\rs{\overline{q}}}\colon\mathbb{A}\rightarrow [0,1)$ by ${F}_{\rs{\overline{q}}}(q) = \mathbb{P}_{\rs{\overline{q}}}[\rs{\overline{q}}<q]= \mathbb{P}_{\rs{q}}[\rs{\overline{q}}<q]$. Define the ``sorted, unconditional" SFE code $\mathrm{C}^{\mathrm{S}}:\mathbb{A}\rightarrow\{0,1\}^*$ by 
\begin{multline}\label{eq:sortedRecipe}
    \mathrm{C}^{\mathrm{S}}_{\rs{q}}(q) = \left(\text{the binary expansion of }  {F}_{\rs{\overline{q}}}(s(q))\text{ truncated}\right.\\\left.\text{ to }\lceil-\log_{2}(\mathbb{P}_{\rs{\overline{q}}}[s(q)])\rceil\text{ bits. }\right)
\end{multline} It can be shown that for distinct $q_{1}, q_{2}\in\mathbb{A}$, we have that  $a_{1}=\mathrm{C}_{\mathrm{q}}^{\mathrm{S}}(q_1)$ is not a prefix  $a_{2}=\mathrm{C}_{\mathrm{q}}^{\mathrm{S}}(q_2)$ and vice versa (cf. \cite[Problem 5.28]{elemIT}). We have that $H(\rvec{q})=H(\rvec{\overline{q}})$ and $ H(\rs{q})\le\mathbb{E}_{\rs{q}}[  \mathrm{C}^{\mathrm{S}}_{\rs{q}}(\rs{q})] \le H(\rs{q})+1$. We could also define a ``conditional sorted" codec $\mathrm{C}^{\mathrm{S}}_{\rs{q}|\rs{\delta}}$ which would allow the upper bound in (\ref{eq:unsortedsilength}) to be reduced by one bit. In general however, this would require the sorting function to depend on the realization of $\rvec{\delta}$. 
\subsubsection{Uniform quantizers with subtractive dither}\label{ssec:dithquant_key}
In this section, we introduce some key properties pertaining to element-wise uniform quantization with subtractive dither. These results are not new; many are generalizations of results from \cite{ditherQuant} described in detail in \cite{tanakaISIT}. Let $(\Delta\mathbb{Z})^{m}$ denote the set of $m$-tuples of integer multiples of $\Delta$, e.g., $r\in(\Delta\mathbb{Z})^{m}$ if, for some $r_0,\dots,r_{m-1}\in\mathbb{Z}$, $r = (r_0\Delta,r_1\Delta,\dots,r_{m-1}\Delta)$. Define an element-wise uniform quantizer with stepsize $\Delta>0$ as $Q_{\Delta}\colon\mathbb{R}^{m}\rightarrow(\Delta\mathbb{Z})^{m}$ via
\begin{align}\label{eq:quantizerDef}
    [Q_{\Delta}({x})]_{i} =  k\Delta, \text{ if }[x]_{i}\in[k\Delta-\Delta/2,k\Delta+\Delta/2),
\end{align} where $x\in \mathbb{R}^{m}$, $i\in\{0,\dots,m-1\}$. Let $\rs{z}$ be a random variable with range in $\mathbb{R}^{m}$. Let $\boldsymbol{\delta}=[\boldsymbol{\delta}_{0},\dots,\boldsymbol{\delta}_{m-1}]^{\mathrm{T}}$ be independent of $\rs{z}$ and such that the $[\boldsymbol{\delta}]_{i}$ are IID uniform on $[\frac{-{\Delta}}{2},\frac{\Delta}{2}]$. When $\rs{z}$ is quantized with an element-wise uniform quantizer with subtractive dither, the quantization is the random variable with range $(\Delta\mathbb{Z})^{m}$ defined by
\begin{align}
\rs{q} = Q_{\Delta}(\rs{z}+\boldsymbol{\delta}),
\end{align} the reconstruction is defined as $\rs{\tilde{q}} = \rs{q}-\boldsymbol{\delta}$, and the reconstruction error as $\rs{v}=\rs{\tilde{q}}-\rs{z}$. The following proposition summarizes some well-known, useful properties of dithered elementwise uniform quantizers. We use these properties to analyze the compression architecture of Fig. \ref{fig:overviewAchiev}.  %s exhibited by the reconstruction error in dithered element-wise uniform quantizers. The lemma originally appeared in  \cite[Lemma 1]{tanakaISIT} and is proven in \cite{tanakaISIT_proofs}.
\begin{proposition}\label{prop:edqa}
Let $\rs{z}$, $\boldsymbol{\delta}$, $\rs{q}$,  $\rs{\tilde{q}}$, and $\rs{v}$ be as defined above.  Assume that $\mathbb{E}[\rs{z}]< \infty$, and that $\mathbb{E}[\rs{z}\rs{z}^{\mathrm{T}}]=Z$ where $Z \prec \infty$ . We have the following. 
\begin{enumerate}[label=(\roman*)]
    \item The $i^{\mathrm{th}}$ element of the reconstruction error $[\rs{v}]_{i}$ is uniformly distributed on the interval $[\frac{-{\Delta}}{2},\frac{\Delta}{2}]$. The $m$ elements of $\rs{v}$ are mutually independent, and $\rs{v}$ is independent of $\rs{z}$. \label{lemmclaim:work}
    \item We have $H(\rs{q})-H(\rs{q}|\rs{\delta})\le m$.\label{lemmclaim:dumpditherpenalty} 
    \item Let $\rvec{n}$ be a random vector whose elements are IID uniform random variables on $[-\frac{\Delta}{2},\frac{\Delta}{2}]$, and let $\rvec{n}\indep \mathbf{z}$. Let $N \in\mathbb{R}^{m\times m}$ be diagonal with $[N]_{i,i} = \Delta^2/12$. We have:
    \begin{IEEEeqnarray}{rCl}
        H(\rs{q}|\boldsymbol{\delta}) &=&  h(\rvec{z}+\rvec{n})-h(\rvec{n}) \label{eq:sublemmacapacity}\\ &=&h(\rvec{z}+\rvec{n}) +\frac{1}{2}\log_{2}\left(\frac{(\frac{2\pi e}{12})^{m}}{\det\left(2\pi e N\right)}\right)\label{eq:sublemmakl},
    \end{IEEEeqnarray} which implies that
    \begin{multline}
         H(\rs{q}|\boldsymbol{\delta}) \le\\ \frac{1}{2}\log_{2}\left(\frac{\det\left(Z+N\right)}{\det\left( N\right)}\right)+\frac{m}{2}\log_{2}\left(\frac{2\pi e}{12}\right).\label{eq:sublemmagauss}
    \end{multline}\label{lemmclaim:spaceFillingFixed}
\end{enumerate}
\end{proposition}
\begin{proof}
Claim \ref{lemmclaim:work} is a classic result. See \cite[Thm. 4.1.1]{zamir_book} for a general proof or \cite[Lemma 1]{tanakaISIT} for one specialized to this case. 
To see  \ref{lemmclaim:dumpditherpenalty}, note that $H(\rvec{q})-H(\rvec{q}|\rvec{\delta}) = I(\rvec{q};\rvec{\delta})$. Note also both $I((\rs{q},\rs{z});\rs{\delta}) =  I(\rs{q};\rs{\delta})+I(\rs{z};\rs{\delta}|\rs{q})$ and also $I((\rs{q},\rs{z});\rs{\delta}) = I(\rs{z};\rs{\delta})+I(\rs{q};\rs{\delta}|\rs{z})$, and thus as $\rvec{z}\indep\rvec{\delta}$, $I(\rvec{q};\rvec{\delta})\le I(\rs{q};\rs{\delta}|\rs{z})$. It is immediate that $I(\rs{q};\rs{\delta}|\rs{z}) = H(\rs{q}|\rs{z})$. Consider the scalar case $(m=1)$ and recognize that given $\mathbf{z}=z$, $\mathbf{q}$ can be determined to be in either the quantization ``bin" that contains $z$, or in one particular adjacent bin. Thus, for $m=1$, $H(\rs{q}|\rs{z})\le 1$. For a general $m$, the result follows as  $H(\rs{q}|\rs{z})\le \sum_{i=0}^{m-1}H([\rs{q}]_{i}|[\rs{z}]_{i})$. 

Equation (\ref{eq:sublemmacapacity}) in  \ref{lemmclaim:spaceFillingFixed} is well-established \cite[Theorem 5.2.1]{zamir_book}\cite[Lemma 1 (b)]{tanakaISIT}, (\ref{eq:sublemmakl}) follows from expanding $h(\rvec{n})$, and (\ref{eq:sublemmagauss}) follows as $\cov(\rvec{z}+\rvec{n}) = Z+N$ and Gaussian distributions have the maximum differential entropy among all distributions with the same covariance matrix. 
\end{proof}
We use Prop. \ref{prop:edqa} \ref{lemmclaim:spaceFillingFixed} and  \ref{lemmclaim:dumpditherpenalty} to develop  bounds on codeword length in the closed loop system of Fig. \ref{fig:overviewAchiev}. Prop., \ref{prop:edqa}\ref{lemmclaim:work} is likewise used to analyze the control performance. 
\subsection{Time-invariant near-achievability of the lower bound: Overview}\label{ssec:tiachiev}
In this section, we describe the internal variables in the closed loop system in Fig. \ref{fig:overviewAchiev}. Our description is sequential but necessarily recursive. Initially, we will abstract lossless source coding from the system; namely we will assume that at each time $t$ the encoder in Fig. \ref{fig:overviewAchiev} produces a discrete quantization, $\mathbf{q}_{t}$ which is conveyed exactly to the decoder. This leads naturally to an analysis of the system's incurred control cost. We then propose two strategies to losslessly encode the quantizations into prefix-free codewords $\{\mathbf{a}_{t}\}$ in a time-invariant manner. Finally, we state our main result, namely that these strategies can attain communication costs that nearly achieve the lower bound in Section \ref{sec:converse}.

Consider the system in Fig. \ref{fig:overviewAchiev}, and define $C$ and $V$ to be chosen optimally via the rate-distortion formulation in (\ref{eq:threestageRDF}). Since $C$ and $V$ are defined with respect to the minimizers of (\ref{eq:threestageRDF}) via (\ref{eq:cdef}), we can take $V = vI_{m}$ for some $v>0$ without loss of generality (defining $C$ so that $\hat{P}^{-1}-(A\hat{P}A^{\mathrm{T}}+W)^{-1}=C^{\mathrm{T}}V^{-1}C$, where $\hat{P}$ minimizes (\ref{eq:threestageRDF})). The encoder in Fig. \ref{fig:overviewAchiev} includes a elementwise uniform quantizer with sensitivity $\Delta$. The encoder and decoder share access to a common dither sequence of uniform random vectors, denoted $\{\rvec{\delta}_{t}\}$. The components of each $\rvec{\delta}_{t}$ vector are IID uniformly distributed on $[-\frac{\Delta}{2},\frac{\Delta}{2}]$ and the sequence $\{\rvec{\delta}_{t}\}$ is both IID over time and conforms to the conditional independence relationships implied by (\ref{eq:ditherFactorization}). With foresight, let the quantizer sensitivity and dither support be $\Delta = \sqrt{12v}$. 

In Fig \ref{fig:overviewAchiev}, both the encoder and the decoder operate identical time-invariant KFs. We denote the a priori and a posteriori estimates computed by these filters as $\rs{\overline{x}}_{t|t-1}$ and $\rs{\overline{x}}_{t|t}$, the corresponding estimator errors as $\rs{e}_{t}=\rs{x}_{t}-\rs{\overline{x}}_{t|t-1}$ and $\rs{e}_{t|t}=\rs{x}_{t}-\rs{\overline{x}}_{t|t}$, and the error covariance matrices $\overline{P}_{t|t-1}= \cov(\rs{e}_{t})$ and $\overline{P}_{t|t}=\cov(\rs{e}_{t|t})$. The initial a priori estimate is $\rs{\overline{x}}_{0|-1}=0$. The general intuition behind the architecture in Fig. \ref{fig:overviewAchiev} is that the state vector $\rvec{x}_{t}$, the estimates $\rvec{\overline{x}}_{t|t-1}$ and $\rvec{\overline{x}}_{t|t}$, and the control input $\rvec{u}_{t}$ are equivalent to those in the three-stage separation architecture of Fig. \ref{fig:threeStageTI} up to second order. We demonstrate this presently. 

Assume that at time $t$, the encoder and decoder's time-invariant KFs have identical a priori estimates  $\rs{\overline{x}}_{t|t-1}$ (this holds by design at $t=0$). We describe the system in Fig. \ref{fig:overviewAchiev} beginning from the encoder's input (the upper right of the figure) in a step-by-step fashion. 
\begin{enumerate}
    \item The encoder forms the linear measurement of the plant state, $C\rvec{x}_{t}$, and the associated Kalman innovation $C\rvec{e}_{t}$. 
    \item Assume that the dither sequence satisfies $\rvec{\delta}_{t}\indep (\rvec{a}^{t-1},\rvec{\delta}^{t-1},\rvec{u}^{t-1}, \rvec{w}^{t-1}, \rvec{e}^{t}, \rvec{x}^{t} )$. This is consistent with the assumptions in (\ref{eq:ditherFactorization}). The encoder then produces a \textit{dithered quantization} of the innovation, computing $\rvec{q}_{t} = Q_{\Delta}(C\rvec{e}_{t}+\rvec{\delta}_{t})$. 
    \item The encoder encodes $\rvec{q}_{t}$ (a discrete random variable) into the codeword $\rvec{a}_{t}\in\{0,1\}^{*}$ using a lossless source code. The codeword $\rvec{a}_{t}$ is conveyed to the decoder. As the coding is lossless, the decoder recovers $\rvec{q}_{t}$ exactly. 
    \item The decoder uses the recovered quantization and the common dither to compute the reconstruction $\rs{\tilde{q}}_{t} = \rs{q}_{t}-\rs{\delta}_{t}$. Let $\mathbf{v}_{t}=\rs{\tilde{q}}_{t}-C\rvec{e}_{t}$. From Proposition \ref{prop:edqa}, we have that $\rvec{v}_{t}$ is a vector with IID elements uniformly distributed on $[-\frac{\Delta}{2},\frac{\Delta}{2}]$ and that $\rvec{v}_{t}\indep \rvec{e}_{t}$. By assumption, the decoder-side KF's a priori estimate is also $\rs{\overline{x}}_{t|t-1}$. The decoder uses this to compute the centered measurement $    \rs{y}_{t} = \rs{\tilde{q}}_{t} + C\rs{\overline{x}}_{t|t-1}$, equivalently, 
\begin{IEEEeqnarray}{rCl}
    \rs{y}_{t} &=&  C\rs{x}_{t}+\rs{v}_{t}\label{eq:effectivelinearmodel}.
\end{IEEEeqnarray} Via (\ref{eq:effectivelinearmodel}), $\rvec{y}_{t}$ is a linear measurement of the plant state with additive uniform noise. Given that $\rvec{v}_{t}$ is a deterministic function of $\rvec{\delta}_{t}$ and $\rvec{e}_{t}$, we have $\rvec{\delta}_{t}\indep (\rvec{e}_{t},\rvec{x}_{t})$, and that $\rvec{v}_{t}\indep \rvec{e}_{t}$, it can be verified that $\rvec{v}_{t}\indep \rvec{x}_{t}$. The effective measurement matrix is $C$, and we have $\mathbb{E}[\mathbf{v}_{t}]=0_{m}$ and  $\cov(\rvec{v}_{t})=\frac{\Delta^2}{12}I=V$. 
\item Note that since the encoder has access to $\rvec{q}_{t}$, $\rvec{\delta}_{t}$, and $\rvec{\overline{x}}_{t|t-1}$, it can also compute the centered measurement $\rs{y}_{t}$. Both the decoder \textit{and} the encoder update their time-invariant KF estimate using $\rvec{y}_{t}$. Letting $J = \hat{P}_{+}{C}^{\mathrm{T}}({C}\hat{P}_{+}{C}^{\mathrm{T}}+{V})^{-1}$ as in Section \ref{sec:converse}, the encoder and decoder compute $\rs{\overline{x}}_{t|t} = \rs{\overline{x}}_{t|t-1}+J(\rs{y}_{t}-C\rs{\overline{x}}_{t|t-1})$.
\item  Let $K =-(B^{\mathrm{T}}SB+\Phi)^{-1}B^{\mathrm{T}}SA$ as in Sec. \ref{sec:converse}. 
The decoder forms the certainty-equivalent control input via $\rvec{u}_{t} = K\rvec{\overline{x}}_{t|t}$,
 which can also be computed at the encoder. The decoder feeds the control input into the plant, and both the encoder and decoder KFs compute prediction updates via $\rvec{\overline{x}}_{t+1|t} = A\rs{\overline{x}}_{t|t}+B\rvec{u}_{t}$.
Under this feedback arrangement, one can demonstrate that the sequence of reconstruction errors $\{\rs{v}_{t}\}$ are IID and that $\rvec{v}_{t}\indep \rvec{x}^{t}$ for all $t$. 
\end{enumerate}
Since the $\{\rvec{v}_{t}\}$ is a temporally white sequence with covariance $V$, and since  $\rvec{v}_{t}\indep \rvec{x}^{t}$, the linear measurement model in (\ref{eq:effectivelinearmodel}) is, to second order, identical to the one in the optimal three-stage test channel discussed in Section \ref{sec:converse}. The principal distinction is that in Fig. \ref{fig:overviewAchiev}, the measurement noise is uniform, rather than Gaussian (cf. (\ref{eq:effectivelinearmodel})). As the measurement models are the same to second order, the sequences of KF error covariance matrices, $\{\overline{P}_{t|t-1}\}$ and  $\{\overline{P}_{t|t}\}$ will satisfy the same recursions as the time-invariant KF in Section \ref{sec:converse}'s three-stage test channel. Thus, we have for $\hat{P}$ the minimizing $P$ from (\ref{eq:threestageRDF}) and $\hat{P}_{+}=A\hat{P}A^{\mathrm{T}}+W$, (cf. the discussion before (\ref{eq:cntrcostxpres})) $\lim_{t\rightarrow\infty} \overline{P}_{t|t-1} = \hat{P}_{+}\text{ and }\lim_{t\rightarrow\infty} \overline{P}_{t|t} = \hat{P}$. This leads to the following, via the equality preceding (\ref{eq:cntrcostxpres}).
\begin{proposition}\label{prop:controlcostandasymptote}
Consider the system of Fig. \ref{fig:overviewAchiev} as described above. So long as $\rvec{q}_{t}$ is recovered by the decoder at every $t$, the system attains $
        \underset{T\rightarrow \infty}{\lim}\frac{1}{T+1}\sum\nolimits_{t=0}^{T}\mathbb{E}[\lVert \rvec{x}_{t+1} \rVert_{{Q}}^{2} +\lVert \rvec{u}_{t} \rVert_{{R}}^{2}] \le \gamma.$
%\begin{enumerate}[label=(\roman*)]
 %   \item So long as $\rvec{q}_{t}$ is recovered by the decoder at every $t$, the system attains a control performance that satisfies $
  %      \underset{T\rightarrow \infty}{\lim}\frac{1}{T+1}\sum\nolimits_{t=0}^{T}\mathbb{E}[\lVert \rvec{x}_{t+1} \rVert_{{Q}}^{2} +\lVert \rvec{u}_{t} \rVert_{{R}}^{2}] \le \gamma.$
%\label{lemmclaim:controlsat}
 %   \item $\underset{t\rightarrow \infty}{\lim\sup}\text{ } H(\rvec{q}_{t}|\rvec{\delta}_{t}) \le \mathcal{R}+\frac{m}{2}\log_{2}\left(\frac{2\pi e}{12}\right)$.\label{lemmclaim:asymptCE}
%\end{enumerate}
\end{proposition}
Regardless as to which lossless encoding scheme is used to encode the $\rvec{q}_{t}$ into the codewords $\rvec{a}_{t}$, Prop. \ref{prop:controlcostandasymptote} guarantees that the system in Fig. \ref{fig:overviewAchiev} achieves the desired constraint on LQG cost. 

In much of the prior work (cf. e.g. \cite{tanakaISIT}, \cite{kostinaTradeoff}), it was proposed to encode quantizations  $\{\rvec{q}_{t}\}$ using time-varying codebooks that were optimized, at every time $t$, to either the conditional PMF of $\rvec{q}_{t}$ given the dither realization $\rvec{\delta}_{t}$ or the unconditional PMF, i.e. producing codewords $\rvec{a}_{t}$ via e.g. $\rvec{a}_{t} = C^{\mathrm{U}}_{\rvec{q}_{t}|\rvec{\delta}_{t}}(\rvec{q}_{t}|\rvec{\delta}_{t})$ or $\rvec{a}_{t} = C^{\mathrm{S}}_{\rvec{q}_{t}}(\rvec{q}_{t})$. Time-asymptotic bounds on either ${\lim\sup}_{t\rightarrow\infty}\text{ } H(\rvec{q}_{t}|\rvec{\delta}_{t})$ or  ${\lim\sup}_{t\rightarrow\infty}\text{ } H(\rvec{q}_{t})$ were generally derived, and a \Cesaro mean argument then used to upper bound the time-average expected codeword length. As the $\{\rvec{q}_{i}\}$ are not identically distributed, these approaches are time-varying in that the mapping from quantizations $\rvec{q}_{t}$ (in the unconditioned case) or from quantizations and dither realizations $\rvec{\delta}_{t}$ (in the conditioned case) must generally vary at every $t$. Such ``perfect" adaptivity require great deal of computational overhead, and preclude arguments that suggest that the same bound on communication cost can be achieved with online, adaptive lossless coding schemes would seek to ``learn" the PMF of $\rvec{q}_{t}$ over time. This motivates an investigation of \textit{time-invariant} coding schemes. 

In this work, we propose to encode the $\{\rvec{q}_{t}\}$ in a ``time-invariant" manner. 
In one approach, we encode $\rvec{q}_{t}$ conditionally with an SFE code designed for a \textit{fixed} conditional distribution $\mathbb{P}_{\rvec{q}|\rvec{\delta}}$. In this case, the codewords $\rvec{a}_{t}$ are computed via $\rvec{a}_{t}=C^{\mathrm{U}}_{\rvec{q}|\rvec{\delta}}(\rvec{q}_{t}|\rvec{\delta})$. This approach will still satisfy Prefix Constraint \ref{pfxc:pf1}. While this approach is time-invariant in the sense that if $(\rvec{q}_{t},\rvec{\delta}_{t})=(x,y)$ and also $(\rvec{q}_{t+1},\rvec{\delta}_{t+1})=(x,y)$, then $\rvec{a}_{t}=\rvec{a}_{t+1}$,  using a ``conditional" codebook essentially requires that a different prefix-free codec (of the form (\ref{eq:unsorted})  or (\ref{eq:sortedRecipe})) be constructed for every potential realization of one of the $\rvec{\delta}_{t}$s (i.e. the conditional encoding uses the realization of the dither to select which codebook to use). For that reason, we also consider using a fixed time-invariant codebook of the form (\ref{eq:unsorted})  or (\ref{eq:sortedRecipe}) at all $t$. In other words, we ``unconditionally" encode $\{\rvec{q}_{t}\}$ with a fixed code of the form (\ref{eq:sortedRecipe}) designed using some fixed PMF $\mathbb{P}_{\rvec{q}}$, i.e. we assume that the codewords $\rvec{a}_{t}$ are given by  $\rvec{a}_{t} = C^{\mathrm{S}}_{\rvec{q}}(\rvec{q}_{t})$. Since a fixed prefix code is used at all $t$, the system will conform to Prefix Constraint \ref{pfxc:pf3}, which is the strongest, time-invariant constraint.

While generally speaking, the use of a fixed codebook would result in an increased codeword length, our main result is that for an unconditional (resp. conditional) codebook designed for a \textit{particular} fixed PMF (resp. conditional PMF) $\mathbb{P}_{\rvec{q}}$ (resp. $\mathbb{P}_{\rvec{q}|\rvec{\delta}})$, there is not an appreciable increase in communication cost. In particular, $\{\rvec{q}_{t},\rvec{\delta}_{t}\}$ is a Markov chain. We prove our main result by demonstrating that this chain has a limiting distribution, and that, in fact, encoding the $\rvec{q}_{t}$ with a lossless code adapted to the limiting PMF of $\rvec{q}_{t}$ (resp. conditional limiting PMF of $\rvec{q}_{t}$ given $\delta{q}_{t}$)  attains a communication cost close to the lower bound $\mathcal{R}(\gamma)$. The analysis also provides new ``almost sure" bounds on the time-average codeword length (as opposed to expected length). This result is summarized in the following, and is proven in  Section \ref{ssec:tiachievpf}. 
\begin{theorem}\label{thm:tiachiev}
\begin{enumerate}[label=(\roman*)]
    \item There exists a conditional PMF $\mathbb{P}_{\rs{q}|\rvec{\delta}}:(\Delta\mathbb{Z})^m\times [-\Delta/2,\Delta/2]^m \rightarrow [0,1]$ such that if $\mathrm{C}^{\mathrm{U}}_{\rs{q}|\rvec{\delta}}$ is as defined in (\ref{eq:unsortedRecipeSI}) with respect to $\mathbb{P}_{\rs{q}|\rvec{\delta}}$, and if the source codec in Fig. \ref{fig:overviewAchiev} encodes the quantization $\rs{q}_{t}$ with $\mathrm{C}^{\mathrm{U}}_{\rs{q}|\rvec{\delta}}$ given the dither $\rs{\delta}_{t}$ at every $t$ (i.e., $\rs{a}_{t} = \mathrm{C}^{\mathrm{U}}_{\rs{q}|\rvec{\delta}}(\rs{q}_{t}|\rs{\delta}_{t})$ for all $t$), the $\{\mathbf{a}_{t}\}$ satisfy Prefix Constraint \ref{pfxc:pf1} and their lengths will almost surely satisfy
    \begin{align}\label{eq:lastthmsi}
        \lim_{T\rightarrow\infty}\frac{1}{T+1}\sum_{i=0}^{T}\ell(\rs{a}_{t}) \le \mathcal{R}(\gamma)+\frac{m}{2}\log_{2}\left(\frac{2\pi e}{12}\right)+2, 
    \end{align} and furthermore
    \begin{multline}\label{eq:lastthmsi_kl}
        \lim_{T\rightarrow\infty}\frac{1}{T+1}\sum_{i=0}^{T}\mathbb{E}[\ell(\rs{a}_{t})] \le\\ \mathcal{R}(\gamma)+\frac{m}{2}\log_{2}\left(\frac{2\pi e}{12}\right)+2. 
    \end{multline}\label{lemmclaim:conditional}
    \item With $\mathbb{P}_{\rs{q}|\rvec{\delta}}$ defined as in \ref{lemmclaim:conditional}, define $\mathbb{P}_{\rs{q}}(q) = \frac{1}{\Delta^m}\int_{s\in[-\Delta/2,\Delta/2]^m}\mathbb{P}_{\rs{q}|\rvec{\delta}}(q|s)ds$.  Let $\mathrm{C}^{\mathrm{S}}_{\rs{q}}$ be the ``sorted" SFE code for $\rvec{q}_{\infty}$ as defined in (\ref{eq:sortedRecipe}) with respect to $\mathbb{P}_{\rs{q}}$. If the system in Fig. \ref{fig:overviewAchiev} uses $\mathrm{C}^{\mathrm{S}}_{\rs{q}}$ to encode the quantization $\rs{q}_{t}$ at every $t$  (i.e., $\rs{a}_{t} = \mathrm{C}^{\mathrm{S}}_{\rs{q}}(\rs{q}_{t})$ for all $t$), then the codewords $\{\rvec{a}_{t}\}$ will satisfy Prefix Constraint \ref{pfxc:pf3}, their lengths will almost surely satisfy
    \begin{multline}\label{eq:lastthm_nosi_ins}
        \lim_{T\rightarrow\infty}\frac{1}{T+1}\sum_{i=0}^{T}\ell(\rs{a}_{t}) \le\\ \mathcal{R}(\gamma)+m\left(1+\frac{1}{2}\log_{2}\left(\frac{2\pi e}{12}\right)\right)+1,
    \end{multline} and the time-average of expected codeword lengths satisfies 
       \begin{multline}\label{eq:lastthm_nosi_avg}
        \lim_{T\rightarrow\infty}\frac{1}{T+1}\sum_{i=0}^{T}\mathbb{E}[\ell(\rs{a}_{t}) ]\le\\ \mathcal{R}(\gamma)+m\left(1+\frac{1}{2}\log_{2}\left(\frac{2\pi e}{12}\right)\right)+1. 
    \end{multline} \label{lemmclaim:unconditional}
    \item Regardless of which lossless codec is used in Fig. \ref{fig:overviewAchiev}, in addition to the bound in Prop. \ref{prop:controlcostandasymptote}, the control cost almost surely satisfies $\underset{T\rightarrow\infty}{\lim\sup}\text{ }\frac{1}{T+1}\sum\nolimits_{t=0}^{T}\lVert \rvec{x}_{t+1} \rVert_{{Q}}^{2} +\lVert \rvec{u}_{t} \rVert_{{R}}^{2} < \gamma$.\label{lemmclaim:cci}
\end{enumerate}
\end{theorem}  In Theorem \ref{thm:tiachiev}, one can view $\mathbb{P}_{\rvec{q},\rvec{\delta}}$ as the limiting distribution of the Markov chain for $\{\rvec{q}_{t},\rvec{\delta}_{t}\}$. Theorem \ref{thm:tiachiev} provides two approaches to losslessly encode the quantizations $\rvec{q}_{t}$ that are notionally time-invariant. The approach in Theorem \ref{thm:tiachiev}\ref{lemmclaim:conditional} proposes to encode and decoder $\rvec{q}_{t}$ conditioned on the realization of the dither $\rvec{\delta}_{t}$, which is known at the decoder. In this approach, the prefix-free codebook used at each $t$ will generally change, however in contrast to the work in \cite{tanakaISIT}, the codec need not be adapted in both time \textit{and} with the dither realization. On the other hand, the approach in Theorem \ref{thm:tiachiev}\ref{lemmclaim:unconditional} is truly time-invariant. At every time $t$, $\rvec{q}_{t}$ is encoded with a fixed codebook, adapted to the limiting distribution of the $\{\rvec{q}_{t}\}$. This permits us to claim that this approach satisfies the ``time-invariant" Prefix Constraint \ref{pfxc:pf3}. Notably, Theorem \ref{thm:tiachiev} additionally provides an ``almost sure" bound on the realization of the time-average codeword length. In addition to bounds on the ``time average of expectations" communication cost defined in (\ref{eq:commcostxpres}), the bounds in (\ref{eq:lastthmsi}) and (\ref{eq:lastthm_nosi_ins}) imply that under the proposed encodings, the realizations of the long-term time average codeword lengths will almost surely satisfy the same upper bounds. The result for control performance in Theorem \ref{thm:tiachiev}\ref{lemmclaim:cci} is analogous. 
\subsection{Proof of Theorem \ref{thm:tiachiev}}\label{ssec:tiachievpf}
In this subsection, we establish a proof of Theorem \ref{thm:tiachiev}. We establish that the Markov chain $\{\rvec{q}_{t},\rvec{\delta}_{t}\}$ converges to some $(\rvec{q},\rvec{\delta})$. In particular, we demonstrate convergence is such that the time-average expected communication cost does not increase. These results follow from a long-term analysis of the stochastic process $\{\rs{e}_{t},\rs{\delta}_{t}\}$. Our analysis relies on well-established results from ergodic theory from \cite{ito_invariant} and \cite{mcmcReview}. 

Some properties of $\{\rs{e}_{t},\rs{\delta}_{t}\}$ will be especially useful. Let $L=AJ$ and $R = (A-LC)$. Recall that by definition  $\rs{v}_{t}= \rvec{\tilde{q}}_{t}-C\rvec{e}_{t}= Q_{\Delta}(C\rvec{e}_{t}+\rvec{\delta}_{t})-\rvec{\delta}_{t}-C\rvec{e}_{t}$. Define the function
$M\colon (x,y)\in\mathbb{D}^{m}\rightarrow\mathbb{R}^{m}$ via
\begin{align}\label{eq:MfuncDef}
    M(x,y) = Rx-L(Q_{\Delta}(Cx+y)-y-Cx).  
\end{align} 
Via (\ref{eq:ssmodel}) and the KF equations, it can be seen that $\{\rs{e}_{t}\}$ obeys the recursion
\begin{align}\label{eq:algebraicMassage}
    \rs{e}_{t} =  M(\rs{e}_{t-1},\rvec{\delta}_{t-1})+\rs{w}_{t-1},
\end{align} equivalently $\rs{e}_{t} = R\rs{e}_{t-1} -L\rs{v}_{t-1}+\rs{w}_{t-1}$. Since $\rs{\overline{x}}_{0|-1}=0_{m}$, $e_{0}\sim\mathcal{N}(0,X_{0})$, and as $({A},{W}^{\frac{1}{2}})$ is stabilizable and $({C},{A})$ is detectable, 
$R$ is stable with eigenvalues strictly inside the complex unit circle, i.e. $\meig(R)<1$ \cite{kailathBk}\cite{RDE_convergence}. Since $\rvec{w}_{t}\indep (\rvec{e}^{t},\rvec{\delta}^{t})$ and $\rvec{\delta}_{t+1}\indep (\rvec{e}^{t+1},\rvec{w}^{t})$, via (\ref{eq:algebraicMassage}), $\{\rvec{e}_{t},\rvec{\delta}_{t}\}$ is a time-homogeneous first order Markov chain on the state space $\mathbb{D}^m=\mathbb{R}^m\otimes[-\Delta/2,\Delta/2]^m$. The transition probabilities of the chain are described via a well-defined conditional PDF. Define the ``Gaussian PDF" function  $N(\dvec{r};\dvec{\mu},\Psi):\mathbb{R}^{m}\times \mathbb{R}^{m}\times  \mathbb{S}_{+}^{m}\rightarrow \mathbb{R}_{+}$ via 
 $N(\dvec{r};\dvec{\mu},\Psi)=\frac{1}{\sqrt{(2\pi)^{m}\det{\Psi}}}e^{-\frac{1}{2}(\dvec{r}-\dvec{\mu})^{\tp}\Psi^{-1}(\dvec{r}-\dvec{\mu})}$. To simplify notation, let $f_{t+1|t}=f_{\rs{e}_{t+1},\rs{\delta}_{t+1}|\rs{e}_{t},\rs{\delta}_{t}}$.  Via (\ref{eq:algebraicMassage}), the transition PDF 
$f_{t+1|t}\colon\mathbb{D}^{m}\times\mathbb{D}^{m}\rightarrow\mathbb{R}_{+}$ is
\begin{multline}\label{eq:transitionpdf}
  f_{t+1|t}({e}_{t+1},{\delta}_{t+1}|{e}_{t},{\delta}_{t}) =\\  \frac{1_{{\delta}_{t+1}\in [-\frac{\Delta}{2},\frac{\Delta}{2}]^m}}{\Delta^m}N(e_{t+1};M(e_{t},\delta_{t}),W),
\end{multline} where the indicator function in (\ref{eq:transitionpdf}) is ``always on" if $({e}_{t+1},{\delta}_{t+1})\in \mathbb{D}^m$, and is only included to emphasize that the support of each of the $\rvec{\delta}_{t}$ is the $m-$dimensional hypercube $[-\frac{\Delta}{2},\frac{\Delta}{2}]^m$. The transition PDF defines a well-defined regular conditional probability: for $\mathcal{K}\in\mathbb{B}(\mathbb{D}^{m})$, we have  $    \mathbb{P}_{\rs{e}_{t+1},\rs{\delta}_{t+1}|\rs{e}_{t},\rs{\delta}_{t}}[(\rs{e}_{t+1},\rs{\delta}_{t+1})\in\mathcal{K}|\rs{e}_{t},\rs{\delta}_{t}] \overset{\mathrm{a.s.}}{=}  \iint_{\mathbb{D}^m}1_{(x,y)\in\mathcal{K}} f_{t+1|t}(x,y|\rs{e}_{t},\rs{\delta}_{t})dxdy$. The Markov chain $\{\rvec{e}_{t},\rvec{\delta}_{t}\}$ has some useful properties that will be used to construct the encoding PMFs $\mathbb{P}_{\rvec{q}|\rvec{\delta}}$ and $\mathbb{P}_{\rvec{q}}$. Namely, the chain converges to an \textit{invariant measure} and has an ergodic property. These results are summarized in the following technical lemmas, proven in Appendix A. The proof of the first result uses the theory of weakly transient sets, namely \cite[Thm. 5]{ito_invariant}, to establish the existence of a potential limiting distribution. 
\begin{lemma}\label{lemm:invarexist}
The Markov chain on $\mathbb{D}^{m}$ defined by (\ref{eq:transitionpdf}) admits an invariant PDF; i.e., there exists a function $g_{\mathrm{inv}}\colon\mathbb{D}^{m}\rightarrow\mathbb{R}_+$ such that
\begin{multline}\label{eq:invarpdfdef}
    g_{\mathrm{inv}}(e_+,{\delta}_+) =\\ \iint_{(e,d)\in\mathbb{D}^{m}}f_{t+1|t}({e}_{+},{\delta}_{+}|{e},\delta)g_{\mathrm{inv}}(e,\delta)ded\delta
\end{multline} and $g_{\mathrm{inv}}(e,d)>0$ for all $(e,d)\in\mathbb{D}^{m}$. In other words, the Markov chain admits an invariant probability measure  $\mathbb{P}_{\mathrm{inv}}:\mathbb{B}(\mathbb{D}^{m})\rightarrow [0,1]$ defined by $\mathbb{P}_{\mathrm{inv}}[\mathcal{K}] = \iint_{(e,\delta)\in\mathcal{K}}g_{\mathrm{inv}}(e,\delta)ded\delta$
that is equivalent to the Lebesgue measure on $\mathbb{D}^{m}$ (i.e., $\mathbb{P}_{\mathrm{inv}}$ has a strictly positive PDF).
\end{lemma}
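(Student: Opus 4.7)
\medskip

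\noindent\textbf{Proof plan for Lemma \ref{lemm:invarexist}.} The plan is to derive (\ref{eq:invarpdfdef}) in three stages: (i) a Foster--Lyapunov drift estimate on the recursion (\ref{eq:algebraicMassage}) that yields tightness of the marginal laws $\{\mathbb{P}_{\rs{e}_t,\rs{d}_t}\}$; (ii) a Krylov--Bogolyubov/Ito-type extraction of an invariant probability measure $\pi$ from the Ces\`aro averages of these marginals; and (iii) promotion of $\pi$ to the strictly positive density $g_\infty$ via the strict positivity of the transition kernel (\ref{eq:transitionpdf}).

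For stage (i), I take the Lyapunov function $V(e,d)=e^2$. Since the quantization error obeys $|\rs{v}_t|\le \Delta/2$ almost surely and $\rs{w}_t\sim\mathcal{N}(0,W)$ is independent of $(\rs{e}_t,\rs{d}_t,\rs{v}_t)$, the bound $|M(e,d)|\le |R||e|+|L|\Delta/2$ combined with Young's inequality yields, for any $\epsilon>0$,
\[
\mathbb{E}\!\left[V(\rs{e}_{t+1},\rs{d}_{t+1})\mid\rs{e}_t,\rs{d}_t\right] \le (1+\epsilon)R^{2}\,V(\rs{e}_t,\rs{d}_t)+\beta_{\epsilon},
\]
with $\beta_{\epsilon}$ finite. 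Because $|R|<1$, I pick $\epsilon>0$ so that $\alpha:=(1+\epsilon)R^{2}<1$; iterating gives $\sup_t\mathbb{E}[\rs{e}_t^{2}]<\infty$. Together with the fact that the dither already lives in the compact interval $[-\Delta/2,\Delta/2]$, Markov's inequality then makes $\{\mathbb{P}_{\rs{e}_t,\rs{d}_t}\}_t$ tight on $\mathbb{S}$.

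For stage (ii), tightness transfers to the Ces\`aro averages $\pi_n:=\frac{1}{n}\sum_{k=0}^{n-1}\mathbb{P}_{\rs{e}_k,\rs{d}_k}$, and Prokhorov furnishes a weak subsequential limit $\pi$. The Gaussian factor $\phi_W$ in (\ref{eq:transitionpdf}) smooths away the piecewise-constant discontinuities of $M(\cdot,\cdot)$ in (\ref{eq:MfuncDef}), so the one-step operator sends $C_b(\mathbb{S})$ into itself (Feller property). The standard Krylov--Bogolyubov argument — which underlies Ito's theorem \cite{ito_invariant} in this setting — then forces $\pi$ to be invariant. For stage (iii), absolute continuity of the transition kernel with respect to $\lambda$ and Fubini give $\pi(\mathcal{B})=\iint_{\mathcal{B}}g_\infty(e_{+},d_{+})\,de_{+}\,dd_{+}$ with
\[
g_\infty(e_{+},d_{+}):=\iint_{(e,d)\in\mathbb{S}} f_{t+1|t}(e_{+},d_{+}\,|\,e,d)\,\pi(de\,dd),
\]
which is exactly (\ref{eq:invarpdfdef}). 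Strict positivity of (\ref{eq:transitionpdf}) at every pair in $\mathbb{S}\times\mathbb{S}$ (the Gaussian factor is positive on $\mathbb{R}$ and the dither factor is $1/\Delta$ uniformly on $[-\Delta/2,\Delta/2]$) then forces $g_\infty>0$ on all of $\mathbb{S}$, and mutual absolute continuity of $\pi$ and $\lambda$ is immediate.

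The main obstacle I anticipate is the invariance extraction in stage (ii): $\mathbb{S}$ is non-compact in its first coordinate, so Krylov--Bogolyubov cannot be quoted as a black box — tightness must be established separately (this is precisely the role of the drift in stage (i)), and Feller continuity must be verified despite the quantizer discontinuities in $M$ (this is where the Gaussian smoothing by $\phi_W$ does the work). Once these two technical points are handled — and both are built into the structure of the model — the existence of an invariant PDF and its positivity follow cleanly from the explicit form of (\ref{eq:transitionpdf}).
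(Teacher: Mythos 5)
Your stage (ii) has a genuine gap: the chain is \emph{not} Feller, and the Gaussian factor does not repair this. Writing out the one-step operator, $P\phi(e,d)=\iint_{\mathbb{S}}\phi(e',d')\,\frac{1}{\Delta}\phi_W\bigl(e'-M(e,d)\bigr)\,de'\,dd'=\Psi\bigl(M(e,d)\bigr)$, where $\Psi$ is the smooth map $\Psi(m)=\frac{1}{\Delta}\iint\phi(e',d')\phi_W(e'-m)\,de'\,dd'$. The Gaussian smooths in the \emph{output} variable, which is why $\Psi$ is smooth, but the dependence on the \emph{input} $(e,d)$ passes entirely through $M$, and $M$ in (\ref{eq:MfuncDef}) jumps by $L\Delta$ wherever $Ce+d$ crosses a quantizer threshold. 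Hence $P\phi=\Psi\circ M$ is discontinuous for generic $\phi\in C_b(\mathbb{S})$, the operator does not map $C_b$ into $C_b$, and iterating $P$ does not help (the discontinuity in $(e_0,d_0)$ persists through $P^n\phi=\Phi_n\circ M$ for smooth $\Phi_n$). So Krylov--Bogolyubov cannot be invoked as stated; the weak subsequential limit $\pi$ of the Ces\`aro averages is not automatically invariant without the Feller property, and that is precisely where your argument breaks.

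The paper sidesteps this obstruction entirely by appealing to Ito's existence criterion (Theorem~\ref{thm:existanceCrit}), which trades continuity for quantitative \emph{lower bounds} on the transition kernel: a strictly positive invariant density exists iff no set of positive Lebesgue measure is weakly transient. The paper then establishes, via the explicit Gaussian form of $f_{\rs{e}_n|\rs{e}_0,\rs{d}_0,\rs{v}_1^{n-1}}$ in Lemma~\ref{lemm:normalPDF} and the uniform (in $n$) bounds $W\le\sigma_n^2\le W/(1-R^2)$ and $|\mu_n|\le\gamma(e_0,d_0)$, that the $n$-step transition probability into any rectangle of positive measure is bounded below by a positive constant independent of $n$, so no such set can be weakly transient. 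No continuity of the transition operator is ever needed. Your drift estimate in stage (i) is correct and is indeed implicit in the paper's argument (the geometric decay driven by $|R|<1$ is what makes $\sigma_n^2$ and $\mu_n$ uniformly bounded), and your stage (iii) argument for strict positivity given an invariant probability measure is also sound. If you want to keep a Krylov--Bogolyubov--style route, you would need to replace the Feller claim with something that actually holds here---for instance, argue at the level of \emph{densities}: for $t\ge 1$, the density of $\rs{e}_t$ is a convolution of a probability measure with $\phi_W$, which together with your second-moment bound gives equicontinuity and uniform integrability of the Ces\`aro densities, an $L^1$-subsequential limit, and invariance by dominated convergence. But that is a substantively different repair from what you wrote.
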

For intuition, note that if the initial conditions of a Markov chain are drawn from the invariant measure (e.g., $(\rs{e}_{0},\rs{\delta}_{0})\sim\mathbb{P}_{\mathrm{inv}}$) then for $i\ge1$ we will have $(\rs{e}_{i},\rs{\delta}_{i})\sim\mathbb{P}_{\mathrm{inv}}$. The next lemma states that if the initial conditions $(\rs{e}_{0},\rs{\delta}_{0})$ are continuous random variables, the $\mathbb{P}_{\rs{e}_{i},\rs{\delta}_{i}}$  converge to $\mathbb{P}_{\mathrm{inv}}$ and that an ergodic property holds. The analysis follows from \cite[Thm. 4]{mcmcReview}.
\begin{lemma}\label{lemm:ergo}
For $\lambda$ almost every initial condition, the $n$-step transition probabilities of the Markov chain defined by (\ref{eq:transitionpdf}) converge in total variation to the invariant measure, i.e., for $\lambda$ almost every $(e_{0},{\delta}_{0})$, $\lim_{t\rightarrow\infty} \sup_{\mathcal{K}\in\mathbb{B}(\mathbb{D}^{m})} |\mathbb{P}_{\mathrm{inv}}[ \mathcal{K}]-\mathbb{P}_{\rs{e}_{t},\rs{\delta}_{t}|\rs{e}_{0},\rs{\delta}_{0}}[\rvec{e}_{t},\rvec{\delta}_{t}\in\mathcal{K}|\rs{e}_{0}=e_{0},\rs{\delta}_{0}={\delta}_{0}]| = 0$. Furthermore, if $(\rs{e}_{0},\rs{\delta}_0)$ are continuous random variables then for any function $\theta\colon\mathbb{D}^{m}\rightarrow\mathbb{R}$ with $    \iint_{\mathbb{D}^{m}}|\theta(e,\delta)|g_{\mathrm{inv}}(e,\delta)ded\delta < \infty$, a ``law of large numbers" holds for $\{\rs{e}_{i},\rs{\delta}_{i}\}$ in the sense that $\lim_{T\rightarrow\infty}\frac{1}{T+1}\sum_{i=0}^{T}\theta(\rs{e}_{i},\rs{\delta}_{i}) \overset{\mathrm{a.s.}}{=} \mathbb{E}_{(\rs{e},\rs{\delta})\sim g_{\mathrm{inv}}}[\theta(\rs{e},\rs{\delta})]$.
\end{lemma}  Let $(\rvec{e},\rvec{\delta})\sim \mathbb{P}_{\mathrm{inv}}$, e.g. let $\mathbb{P}_{\rvec{e},\rvec{\delta}}[\rvec{e},\rvec{\delta}\in\mathcal{K}]=\mathbb{P}_{\mathrm{inf}}[\mathcal{K}]$ so that $(\rvec{e},\rvec{\delta})$ have the joint PDF $f_{\rvec{e},\rvec{\delta}}=g_{\mathrm{inv}}$. Since $(\rs{e}_{0},\rs{\delta}_{0})$ are continuous random variables on $\mathbb{D}^{m}$, an immediate consequence of Lemma \ref{lemm:ergo}'s convergence in total variation is that the sequence of $(\rs{e}_{t},\rs{\delta}_{t})$ converge in distribution to $(\rs{e},\rs{\delta})$. We now combine Lemmas \ref{lemm:invarexist} and \ref{lemm:ergo} to prove some useful facts about $\mathbb{P}_{\mathrm{inv}}$. 
\begin{corollary}\label{corr:invarprops}
Let $(\rs{e},\rs{\delta})\sim \mathbb{P}_{\mathrm{inv}}$. The marginal PDF of $\rs{e}$ is $f_{\rs{e}}(e) = \int_{[-\Delta/2,\Delta/2]^m}g_{\mathrm{inv}}(e,\delta)d\delta$. We have that $\rs{e}\indep \rs{\delta}$ and that $\rs{\delta}$ is a random vector whose elements are IID with $[\rvec{\delta}]_{i} \sim \text{Uniform}[-\Delta/2,\Delta/2]$. This implies that that the invariant PDF, $g_{\mathrm{inv}}$, factorizes via $g_{\mathrm{inv}}(e,d) = \frac{f_{\rvec{e}}(e)}{\Delta^{m}}$ for $(e,d)\in\mathbb{D}^{m}$. Furthermore, we have $\mathbb{E}[\rs{e}] = 0$ and $\mathbb{E}[\rvec{e}\rvec{e}^{\mathrm{T}}] = \hat{P}_{+}$.
\end{corollary}
\begin{proof}
If $\mathcal{A}$ is an open interval in $\mathbb{R}^m$ and $\mathcal{D}$ an open interval in $[-\Delta/2,\Delta/2]^{m}$ then $\mathcal{A}\times \mathcal{D} \in \mathbb{B}(\mathbb{D}^{m})$.
Using the definition of the invariant PDF (\ref{eq:invarpdfdef}) and the formula for $f_{t|t-1}$ from (\ref{eq:transitionpdf}), it can be shown that if $\mathcal{K}=\mathcal{A}\times\mathcal{D}$ then, $ \mathbb{P}_{\rs{e},\rs{\delta}}[\mathcal{K}] = \int_{\mathcal{A}}f_{\rvec{e}}(e)de\frac{\lambda(\mathcal{D})}{\Delta^{m}}$. By Dynkin's $\pi-\lambda$ theorem, this proves that $\rs{e}\indep \rs{\delta}$ (see e.g., \cite[Prop. 2.13]{zit}).  

Define $\rs{v}=\left(Q_{\Delta}(C\rs{e}+\rs{\delta})-(C\rs{e}+\rs{\delta})\right)$. By definition, $M(\rs{e},\rs{\delta})=R\rs{e}-L\rs{v}$. By the result just established, $\rs{\delta}\indep\rs{e}$ and the $[\rvec{e}]_{i}$ are IID uniformly distributed on $[-\Delta/2,\Delta/2]$. Thus, we can apply the properties of dithered quantizers from Prop. \ref{prop:edqa}.  Namely, by Prop. \ref{prop:edqa}\ref{lemmclaim:work} we have $\rs{v}\indep\rs{e}$ and that the components $[\rs{v}]_{i}$ are IID uniform random variables on $[-\Delta/2,\Delta/2]$.  It can be shown that 
\begin{IEEEeqnarray}{rCl}
    \mathbb{E}[\rvec{e}\rvec{e}^{\mathrm{T}}]&=& W+R\mathbb{E}[\rvec{e}\rvec{e}^{\tp}]R^{\mathrm{T}} +LVL^{\mathrm{T}}\label{eq:apedql}.
    \end{IEEEeqnarray}
The equality (\ref{eq:apedql}) follows from (\ref{eq:firstvarproof})-(\ref{eq:apedql2}) shown at the top of the following page.
\begin{figure*}[!t]
% ensure that we have normalsize text
\normalsize
% Store the current equation number.
%\setcounter{MYtempeqncnt}{\value{equation}}
% Set the equation number to one less than the one
% desired for the first equation here.
% The value here will have to changed if equations
% are added or removed prior to the place these
% equations are referenced in the main text.
%\setcounter{equation}{5}
\begin{IEEEeqnarray}{rCl}
    \mathbb{E}[\rvec{e}\rvec{e}^{\mathrm{T}}] &=& \iint_{(e,d)\in\mathbb{D}^{m}}{e}{e}^{\mathrm{T}}g_{\mathrm{inv}}(e,\delta)ded\delta\label{eq:firstvarproof}
    \\ &=& \iint_{(e,d)\in\mathbb{D}^{m}}{e}{e}^{\mathrm{T}}\iint_{(s,t)\in\mathbb{D}^{m}}f_{t+1|t}({e},{\delta}|s,t)g_{\mathrm{inv}}(s,t)dsdtded\delta\label{eq:thedefinitionofinvar} \\
    &=& \iint_{(s,t)\in\mathbb{D}^{m}}\left(\iint_{(e,d)\in\mathbb{D}^{m}}{e}{e}^{\mathrm{T}}f_{t+1|t}({e},{\delta}|s,t)ded\delta \right)g_{\mathrm{inv}}(s,t)dsdt \label{eq:momentlemmatonelli}\\
    &=&\iint_{(s,t)\in\mathbb{D}^{m}} \left(W+M(s,t)M(s,t)^{\mathrm{T}}\right)g_{\mathrm{inv}}(s,t) dsdt\label{eq:normality}\\
    &=& W + \mathbb{E}_{(\rs{e},\rs{\delta})\sim \mathbb{P}_{\mathrm{inv}} }[(R\rvec{e}-L\rvec{v})(R\rvec{e}-L\rvec{v})^{\mathrm{T}}]\label{eq:vdef2}\\
    &=&W+R\mathbb{E}[\rvec{e}\rvec{e}^{\tp}]R^{\mathrm{T}} +LVL^{\mathrm{T}},\label{eq:apedql2}
\end{IEEEeqnarray}
% Restore the current equation number.
%\setcounter{equation}{\value{MYtempeqncnt}}
% IEEE uses as a separator
\hrulefill
% The spacer can be tweaked to stop underfull vboxes.
\vspace*{4pt}
\end{figure*}
 In particular, (\ref{eq:thedefinitionofinvar}) follows from the definition of the invariant PDF, (\ref{eq:momentlemmatonelli}) follows from the Fubini/Tonelli Theorem, (\ref{eq:normality}) follows from (\ref{eq:transitionpdf})  (i.e., since given $(\rs{e}_{t-1},\rs{\delta}_{t-1})$, $\rs{e}_{t}$ is normal with mean $M(\rs{e}_{t-1},\rs{\delta}_{t-1})$ and variance $W$), (\ref{eq:vdef2}) follows from (\ref{eq:MfuncDef}) and the definition of $\rs{v}$ above, and finally (\ref{eq:apedql2}) (equivalent to (\ref{eq:apedql})) follows from the aforementioned properties of $\rs{v}$ and the definition  $V=\frac{\Delta^2}{12}I_{m\times m}$. We recognize that the identity (\ref{eq:apedql}) is a Lyaponov equation in $\mathbb{E}[\rvec{e}\rvec{e}^{\mathrm{T}}]$. This equation has a unique PSD solution \cite[Prob. 4.9]{dullRobust}. It turns out that this unique solution to (\ref{eq:apedql}) is $\mathbb{E}[\rvec{e}\rvec{e}^{\mathrm{T}}]=\hat{P}$. To see this, note that by definition $\hat{P}_{+}$ satisfies the DARE
\begin{multline}\label{eq:darePrio}
      \hat{P}_{+}=\\ {A}\left( \hat{P}_{+}-\hat{P}_{+}{C}^{\mathrm{T}}({C}\hat{P}_{+}{C}^{{T}}+{V})^{-1}{C}\hat{P}_{+} \right){A}^{\mathrm{T}}+W.
\end{multline} Substituting the explicit formulas $R=A-LC$, $L=A\hat{P}_{+}{C}^{\mathrm{T}}({C}\hat{P}_{+}{C}^{\mathrm{T}}+{V})^{-1}$ and setting $\mathbb{E}[\rvec{e}\rvec{e}^{\mathrm{T}}] = \hat{P}_{+}$ in the right-hand side of  (\ref{eq:apedql})
exactly recovers the right-hand side of (\ref{eq:darePrio}). This proves the result. Since $\rs{e}\in\mathcal{L}^{2}$, we have $\rs{e}\in\mathcal{L}^{1}$. Given this, reductions analogous to (\ref{eq:firstvarproof}) through (\ref{eq:apedql}) demonstrate that $\mathbb{E}[\rs{e}] = R\mathbb{E}[\rs{e}]$. Since $\meig(R)<1$, it must be that $\mathbb{E}[\rs{e}] = 0_{m}$. 
\end{proof}
An immediate consequence of Lemma \ref{lemm:ergo} and the corollary is the ``almost sure" guarantee on the realization of the time-average control cost in Theorem \ref{thm:tiachiev}\ref{lemmclaim:cci}. By the lemma and corollary, we have that $\underset{T\rightarrow\infty}{\lim}\frac{1}{T+1}\sum\nolimits_{t=0}^{T}\lVert \rvec{x}_{t+1} \rVert_{{Q}}^{2} +\lVert \rvec{u}_{t} \rVert_{{R}}^{2} \overset{\mathrm{a.s.}}{=}  \mathrm{Tr}(\Theta \hat{P})+\mathrm{Tr}(WS)$. Since $\mathrm{Tr}(\Theta \hat{P})+\mathrm{Tr}(WS)<\gamma$, this proves Theorem \ref{thm:tiachiev}\ref{lemmclaim:cci}. 
With $(\rvec{e},\rvec{\delta})\sim\mathbb{P}_{\mathrm{inv}}$, let $\rvec{q}=Q_{\Delta}(C\rvec{e}+\rvec{\delta})$. The random variable $\rvec{q}$ is describes the quantizer output when its inputs are drawn from the invariant, limiting distribution. It can likewise be shown that the  $(\rvec{q}_{t},\rvec{\delta}_{t})$ converge in total variation to  $(\rvec{q}_{t},\rvec{\delta}_{t})$. Our general strategy is to design prefix-free codes for encoding the $\rvec{q}_{t}$ using the limiting conditional and unconditional PMFs $\mathbb{P}_{\rvec{q}|\rvec{\delta}}$ and $\mathbb{P}_{\rvec{q}}$. Both of these are well-defined;  namely for $r \in R^{m}$, let $\mathcal{B}_{\Delta}(r) = \{x\in\mathbb{R}^{m}:\lVert x-r\rVert_{\infty}\le \frac{\Delta}{2}\}$ denote a hypercube centered at $r$. For $z\in (\Delta\mathbb{Z})^{m}$, we have 
$\mathbb{P}_{\rvec{q}|\rvec{\delta}}[z|\rvec{\delta}=\delta] = \mathbb{P}_{\rvec{e}}[C\rvec{e}\in \mathcal{B}_{\Delta}(z-\delta)]$. Likewise, again for $z\in (\Delta\mathbb{Z})^{m}$, $\mathbb{P}_{\rvec{q}}[z] = \frac{1}{\Delta^m}\int_{\delta\in [-\frac{\Delta}{2},\frac{\Delta}{2}]^{m}}\mathbb{P}_{\rvec{e}}[C\rvec{e}\in \mathcal{B}_{\Delta}(z-\delta)] d\delta$. 

Assume first that the ``unconditional", ``sorted" encoding adapted to $\rvec{q}$ is used, i.e. at every $t$, $\rvec{a}_{t} = \mathrm{C}^{\mathrm{S}}_{\rvec{q}}(\rvec{q}_{t})$. By the definition of  $\mathrm{C}^{\mathrm{S}}_{\rvec{q}}(\rvec{q}_{t})$, the codeword length satisfies  $\ell(\rvec{a}_{t}) \le -\log_{2}\left(\mathbb{P}_{\rvec{q}}(\rvec{q}_{t})\right)+1$ .The ``law of large numbers" afforded by  Lemma \ref{lemm:ergo} gives  
\begin{IEEEeqnarray}{rCl}
    \underset{T\rightarrow \infty}{\lim\sup} \frac{1}{T+1}\sum_{i=0}^{T}\ell(\rs{a}_{t}) &\le&      \underset{T\rightarrow \infty}{\lim} \frac{\sum_{i=0}^{T}-\log_{2}\left(\mathbb{P}_{\rvec{q}}(\rvec{q}_{t})\right)}{T+1}+1\nonumber \\ &\overset{\mathrm{a.s.}}{\le}& H(\rvec{q})+1,\label{eq:applyentropydef}
\end{IEEEeqnarray} where (\ref{eq:applyentropydef}) follows since  $\mathbb{E}_{(\rvec{e},\rvec{\delta})\sim\mathbb{P}_{\mathrm{inv}}}[\mathbb{P}_{\rvec{q}}(\rvec{q})]=H(\rvec{q})$. At every $t$, the \textit{expected} codeword length satisfies $\mathbb{E}[\ell(\rvec{a}_{t})] \le \mathbb{E}_{\rvec{q}_{t}}\left[-\log_{2}\left(\mathbb{P}_{\rvec{q}_{t}}(\rvec{q}_{t})\right)+\log_{2}\left(\frac{\mathbb{P}_{\rvec{q}_{t}}(\rvec{q}_{t})}{\mathbb{P}_{\rvec{q}}(\rvec{q}_{t})} \right)+1\right]$; equivalently, we have 
\begin{IEEEeqnarray}{rCl}
    H(\rvec{q}_{t})+D_{\mathrm{KL}}(\rvec{q}_{t}||\rvec{q}) &\le& \mathbb{E}[\ell(\rvec{a}_{t})] \\&\le& H(\rvec{q}_{t})+D_{\mathrm{KL}}(\rvec{q}_{t}||\rvec{q})+1\label{eq:uncondwrongcode}.
\end{IEEEeqnarray} We will use these observations directly to establish Theorem \ref{thm:tiachiev}\ref{lemmclaim:unconditional}; namely we will use (\ref{eq:applyentropydef}) together with a bound on $H(\rvec{q})$ to establish (\ref{eq:lastthm_nosi_ins}). Likewise, to establish (\ref{eq:lastthm_nosi_avg}), we will bound ${\lim\sup}_{t\rightarrow\infty} H(\rvec{q}_{t})$ and prove that ${\lim\sup}_{t\rightarrow\infty}D_{\mathrm{KL}}(\rvec{q}_t||\rvec{q}) = 0$. Taking the \Cesaro mean then completes the argument. The analyses used to establish \ref{thm:tiachiev}\ref{lemmclaim:conditional} is completely analogous. If at every $t$, the system encodes $\rvec{q}_{t}$ given the realization of $\rvec{\delta}_{t}$ using an unsorted encoding adapted to $\mathbb{P}_{\rvec{q}|\rvec{\delta}}$, e.g. assume $\rvec{a}_{t}=\mathrm{C}^{\mathrm{U}}_{\rvec{q}|\rvec{\delta}}(\rvec{q}_{t}|\rvec{\delta}_{t})$. By the definition of $\mathrm{C}^{\mathrm{U}}_{\rvec{q}|\rvec{\delta}}(\rvec{q}_{t}|\rvec{\delta}_{t})$, the upper bound in (\ref{eq:applyentropydef}) is replaced by $\underset{T\rightarrow \infty}{\lim\sup} \frac{1}{T+1}\sum_{i=0}^{T}\ell(\rs{a}_{t}) \le H(\rvec{q}|\rvec{\delta})+2 $, and the bound in (\ref{eq:uncondwrongcode}) is replaced with \begin{multline}\label{eq:condwrongcode}
    H(\rvec{q}_{t}|\rvec{\delta}_{t})+D_{\mathrm{KL}}(\rvec{q}_{t}||\rvec{q}|\rvec{\delta}_{t}) \le\\ \mathbb{E}[\ell(\rvec{a}_{t})] \le H(\rvec{q}_{t}|\rvec{\delta}_{t})+D_{\mathrm{KL}}(\rvec{q}_{t}||\rvec{q}|\rvec{\delta}_{t})+2
\end{multline} where the conditional KL divergence is $D_{\mathrm{KL}}(\rvec{q}_{t}||\rvec{q}|\rvec{\delta}_{t}) = \mathbb{E}_{\rvec{q}_{t},\rvec{\delta}_{t}}[\log_{2}\left(\frac{\mathbb{P}_{\rvec{q}_{t}|\rvec{\delta}_{t}}(\rvec{q}_{t}|\rvec{\delta}_{t})}{\mathbb{P}_{\rvec{q}|\rvec{\delta}}(\rvec{q}_{t}|\rvec{\delta}_{t})}\right)]$. We bound $H(\rvec{q}|\rvec{\delta})$ to establish (\ref{eq:lastthmsi}), and we both bound ${\lim\sup}_{t\rightarrow\infty} H(\rvec{q}_{t}|\rvec{\delta}_{t})$ and prove that\\ ${\lim\sup}_{t\rightarrow\infty}D_{\mathrm{KL}}(\rvec{q}_t||\rvec{q}|\rvec{\delta}_{t}) = 0$ to establish (\ref{eq:lastthmsi_kl}). 
\begin{lemma}\label{lemm:entropyasymptote}
We have 
\begin{IEEEeqnarray}{rCl}
H(\rvec{q}|\rvec{\delta}) &\le& \mathcal{R}(\gamma)+\frac{m}{2}\log_2\left(\frac{2\pi e}{12}\right),\\
    H(\rvec{q}) &\le& \mathcal{R}(\gamma)+m+\frac{m}{2}\log_2\left(\frac{2\pi e}{12}\right),\\
\lim\sup_{t\rightarrow\infty} H(\rvec{q}_{t}|\rvec{\delta}_{t}) &\le& \mathcal{R}(\gamma)+\frac{m}{2}\log_2\left(\frac{2\pi e}{12}\right)\text{, and}\\ \lim\sup_{t\rightarrow\infty} H(\rvec{q}_{t}) &\le& \mathcal{R}(\gamma)+m+\frac{m}{2}\log_2\left(\frac{2\pi e}{12}\right).\end{IEEEeqnarray}
\end{lemma}
\begin{proof}
We first analyze $H(\rvec{q}|\rvec{\delta})$. Since by definition $\rvec{q}=Q_{\Delta}(C\rvec{e}+\rvec{\delta})$ and by Corollary \ref{corr:invarprops} $\rvec{e}\indep\rvec{\delta}$, we can apply Proposition \ref{prop:edqa}\ref{lemmclaim:spaceFillingFixed}. Setting $\rvec{z}=C\rvec{e}$ in the statement of Prop. \ref{prop:edqa}, noting that $\mathbb{E}[C\rvec{e}\rvec{e}^{\mathrm{T}}C^{\mathrm{T}}] = C\hat{P}C^{\mathrm{T}}$ by Corollary \ref{corr:invarprops}, and recalling that by definition $V=\frac{\Delta^2}{12}I_{m\times m}$, we have 
$ H(\rvec{q}|\rvec{\delta}) \le \frac{1}{2}\log_{2}\left(\det\left(C\hat{P}C^{\mathrm{T}}+V\right)\right)-\frac{1}{2} \log_2(\det(V))+\frac{m}{2}\log_{2}\left(\frac{2\pi e}{12}\right)$. Since by definition $\hat{P}^{-1} = \hat{P}_{+}^{-1} + C^{\mathrm{T}}V^{-1}C$ (see (\ref{eq:cdef})),  the matrix determinant lemma gives $ \det(C\hat{P}_{+}C^{\mathrm{T}}+V)=\det(\hat{P}_{+})\det(V) \det(\hat{P}^{-1})$. Since $\mathcal{R}(\gamma)=\log_{2}(\det(\hat{P}_{+}))+\log_{2}(\det(\hat{P}^{-1}))$ via (\ref{eq:commcostxpres}), we have $H(\rvec{q}|\rvec{\delta}) \le \mathcal{R}(\gamma) + \frac{m}{2}\log_{2}\left(\frac{2\pi e}{12}\right)$. By Prop. \ref{prop:edqa}\ref{lemmclaim:dumpditherpenalty},  $H(\rvec{q}) \le m + H(\rvec{q}|\rvec{\delta})$. The derivation of the bounds on $\lim\sup_{t\rightarrow\infty }H(\rvec{q}_{t}|\rvec{\delta}_{t})$ and $\lim\sup_{t\rightarrow\infty }H(\rvec{q}_{t})$ is completely analogous. Using Prop. \ref{prop:edqa}\ref{lemmclaim:spaceFillingFixed}, we have $H(\rvec{q}_{t}|\rvec{\delta}_{t}) \le  \log_{2}\left(\det\left(C\mathbb{E}\left[\rvec{e}_{t}\rvec{e}_{t}^{\mathrm{T}}\right]C^{\mathrm{T}}+V\right)\right)-\frac{1}{2} \log_2(\det(V))+\frac{m}{2}\log_{2}\left(\frac{2\pi e}{12}\right)$. Taking the limit of both sides, and recalling  that $\lim_{t\rightarrow \infty}\mathbb{E}\left[\rvec{e}_{t}\rvec{e}_{t}^{\mathrm{T}}\right] = \hat{P}$ gives the bound on $\lim\sup_{t\rightarrow\infty}H(\rvec{q}_{t}|\rvec{\delta}_{t})$. As  $H(\rvec{q}_{t})-H(\rvec{q}_{t}|\rvec{\delta}_{t}) \le m$, the bound on $\lim\sup_{t\rightarrow\infty}H(\rvec{q}_{t})$ follows.
\end{proof} From the preceding discussion (cf. (\ref{eq:applyentropydef})), Lemma \ref{lemm:entropyasymptote} proves the bounds on the realizations of time average codeword length in Theorem \ref{thm:tiachiev}'s (\ref{eq:lastthmsi}) and (\ref{eq:lastthm_nosi_ins}). To use a \Cesaro argument to establish  (\ref{eq:lastthmsi_kl}) and (\ref{eq:lastthm_nosi_avg}), we must demonstrate that the KL divergences $D_{\mathrm{KL}}(\left.\rvec{q}_{t}||\rvec{q}\right|\rvec{\delta}_{t}),D_{\mathrm{KL}}(\rvec{q}_{t}||\rvec{q})$ tend to $0$ as $t\rightarrow \infty$. This is the subject of the following lemma.  

%By the definition, we have $ D_{\mathrm{KL}}(\left.\rvec{q}_{t}||\rvec{q}\right.|\rvec{\delta}_{t}) =  \mathbb{E}_{\rvec{e}_{t},\rvec{\delta}_{t}} \left[ \log_2 \left(\dfrac{\mathbb{P}_{\rvec{q}_{t}|\rvec{\delta_{t}}}\left(\left.
%Q_{\Delta}(C\rvec{e_{t}}+\rvec{\delta_{t}})\right|\rvec{\delta}_{t}\right)}{\mathbb{P}_{\rvec{q}|\rvec{\delta}} \left(\left. 
%Q_{\Delta}(C\rvec{e_{t}}+\rvec{\delta_{t}})\right|\rvec{\delta}_{t}\right)}\right)\right]$. Recall that $\rvec{e}_{t}\indep \rvec{\delta}_{t}$. 
\begin{lemma}
We have $\lim_{t\rightarrow\infty} D_{\mathrm{KL}}(\left.\rs{q}_{t}||\rs{q}\right|\rvec{\delta}_{t}) =0$ and $\lim_{t\rightarrow\infty} D_{\mathrm{KL}}(\rs{q}_{t}||\rs{q}) =0$. 
\end{lemma}
\begin{proof}
 It can be shown via Jensen's inequality that if $\rs{a},\rs{b}$ are random variables that are absolutely continuous with respect to Lebesgue measure such that $\rs{a}$ is absolutely continuous with respect to $\rs{b}$, then $D_{\mathrm{KL}}(Q_{\Delta}(\rs{a})||Q_{\Delta}(\rs{b}))\le D_{\mathrm{KL}}(\rs{a}||\rs{b})$. Thus, 
  we have $D_{\mathrm{KL}}(\rs{q}_{t}||\rs{q}) \le D_{\mathrm{KL}}(C\rvec{e}_{t}+\rvec{\delta}_{t}||C\rvec{e}+\rvec{\delta})$. Since $\rs{\delta}_{t}$ and $\rs{\delta}$ are identically distributed, $\rs{e}_{t}\indep\rs{\delta}_{t}$, and $\rs{e}\indep\rs{\delta}$,  the data processing inequality (DPI) for KL divergences (cf. \cite[Theorem 2.15]{polyanbk}) gives $  D_{\mathrm{KL}}(C\rs{e}_{t}+\rs{\delta}_{t}||C\rs{e}+\rs{\delta})\le D_{\mathrm{KL}}(\rs{e}_{t}||\rs{e})$.
The proof that $D_{\mathrm{KL}}(\left.\rs{q}_{t}||\rs{q}\right|\rvec{\delta}_{t})\le D_{\mathrm{KL}}(\rs{e}_{t}||\rs{e})$ is analogous. To begin, recognize that for each $\delta \in [-\frac{\Delta}{2},\frac{\Delta}{2}]^n$,  $D_{\mathrm{KL}}(\left.\rs{q}_{t}||\rs{q}\right|\rvec{\delta}_{t}=\delta)\le D_{\mathrm{KL}}(\left.C\rvec{e}_{t}+\delta|| C\rvec{e}+\delta \right|\rvec{\delta}_{t}=\delta)$ where both $\rvec{e}_{t}\indep \rvec{\delta}_{t}$ and $\rvec{e}\indep \rvec{\delta}_{t}$. Applying the DPI for every realization $\delta$ and using the fact that, by independence, $\mathbb{P}_{\rvec{e}_{t}|\rvec{\delta}_{t}} = \mathbb{P}_{\rvec{e}_{t}}$ and likewise  $\mathbb{P}_{\rvec{e}|\rvec{\delta}} = \mathbb{P}_{\rvec{e}}$ completes the argument. Thus, we can prove the lemma by demonstrating that $\lim_{t\rightarrow\infty} D_{\mathrm{KL}}(\rs{e}_{t}||\rs{e})=0$.

Let $\{\rs{\nu}_{t}\}$ denote an IID sequence of random variables uniformly distributed on $[-\Delta/2,\Delta/2]^m$, let $\{\rs{\omega}_{t}\}$ be IID with $\rs{\omega}_{t}\sim\mathcal{N}(0_{m},W)$, and let $\rs{\lambda}\sim\mathcal{N}(0_{m},X_{0})$. Assume $\{\rs{\omega}_{t}\}$, $\{\rs{\nu}_{t}\}$, and $\rs{\lambda}$ are mutually independent. Let ``$\overset{D}{=}$" denote ``equality in distribution", e.g., we write $\rs{a}\overset{D}{=}\rs{b}$ to imply $\rs{a}$ and $\rs{b}$ are identically distributed. From (\ref{eq:algebraicMassage}), we have $\rs{e}_{t} =R\rs{e}_{t-1} -L\rs{v}_{t-1}+\rs{w}_{t-1}$. Via Prop. \ref{prop:edqa}\ref{lemmclaim:work} and the factorization of system variables in (\ref{eq:ditherFactorization}), it can be verified that $\rs{w}_{t}\indep \rvec{e}^{t}, \rvec{v}^t,\rvec{w}^{t-1}$ and $\rs{v}_{t}\indep \rvec{e}^{t}, \rvec{v}^{t-1},\rvec{w}^{t}$.
Thus, by this recursive definition of $\{\rvec{e}_{t}\}$, $ \rs{e}_{t} \overset{D}{=} R^{t}\rs{\lambda}+\sum_{i=0}^{t-1}R^{i}(\rs{\omega}_{i}-L\rs{\nu}_{i}). $
 Likewise, by definition of $\rs{e}$, we have that both $\rs{e} \overset{D}{=} \lim_{t\rightarrow\infty}R^{t}\rs{\lambda}+\sum_{i=0}^{t-1}R^{i}(\rs{\omega}_{i}-L\rs{\nu}_{i})$ and $\rs{e} \overset{D}{=}\lim_{t\rightarrow\infty}\sum_{i=0}^{t-1}R^{i}(\rs{\omega}_{i}-L\rs{\nu}_{i})$, which follows since Lemma \ref{lemm:ergo}'s convergence in total variation implies weak convergence. 
Define the random variables $\rs{g}_{\le t} =  \sum_{i=0}^{t-1}R^{i}\rs{\omega}_{i}$,\\ $\rs{u}_{\le t} =  -\sum_{i=0}^{t-1}R^{i}L\rs{\nu}_{i}$, and $\rs{s}_{>t} =  \lim_{r\rightarrow \infty}\sum_{i=t}^{r}R^{i}(\rs{\omega}_{i}-L\rs{\nu}_{i})$   the limit is well defined by Kolmogorov's two-series theorem. By definition, $\rs{e}_{t} \overset{D}{=} R^{t}\rs{\lambda}+ \rs{g}_{\le t} +\rs{u}_{\le t}$ and $\rs{e} \overset{D}{=}  \rs{g}_{\le t}+ \rs{u}_{\le t} + \rs{s}_{> t}$.
Note that $\rs{g}_{\le t}\sim \mathcal{N}(0_{m},\sum_{i=0}^{t-1}R^{i}W(R^{i})^{\mathrm{T}})$.
We have
\begin{IEEEeqnarray}{rCl}
    D_{\mathrm{KL}}(\rs{e}_{t}|| \rs{e} ) &=&   D_{\mathrm{KL}}(R^{t}\rs{\lambda}+\rs{g}_{\le t} +\rs{u}_{\le t}||\rs{g}_{\le t} +\rs{u}_{\le t}+\rs{s}_{> t} )\nonumber \\ &\le& D_{\mathrm{KL}}(R^{t}\rs{\lambda}+\rs{g}_{\le t}||\rs{g}_{\le t} +\rs{s}_{> t} )\label{eq:usedidiv} \\ &\le& D_{\mathrm{KL}}(\left.R^{t}\rs{\lambda}+\rs{g}_{\le t} || \rs{g}_{\le t}+\rs{s}_{> t} \right| \rs{s}_{> t} ),\label{eq:condincdiv}  
\end{IEEEeqnarray} where (\ref{eq:usedidiv}) follows from the data processing inequality for KL divergence and (\ref{eq:condincdiv}) follows since conditioning increases KL divergence (see \cite[Theorem 2.14 (e)]{polyanbk}).

Given $\rs{s}_{> t}=s$,  (\ref{eq:condincdiv}) simplifies to a KL divergence between two $m-$dimensional multivariate Gaussians. Let $\Psi_t =\sum_{i=0}^{t-1}R^{i}W(R^{i})^{\mathrm{T}}$ and $\overline{\Psi}_{t}=\Psi_{t}+R^{t}X_{0}(R^{\tp})^{t}$. Since $\rs{\lambda}\indep\rs{g}_{\le t}$ by construction, $R^{t}\rs{\lambda}+\rs{g}_{\le t}\sim \mathcal{N}(0_{m},\overline{\Psi}_{t})$. Also by construction $(\rs{g}_{\le t}, \rs{\lambda}) \indep\rs{s}_{> t}$. Thus, we have 
\begin{multline}
D_{\mathrm{KL}}(\left.R^{t}\rs{\lambda}+\rs{g}_{\le t} || \rs{g}_{\le t}+\rs{s}_{> t} \right| \rs{s}_{> t} = s ) =\\ D_{\mathrm{KL}}(\mathcal{N}(0_{m},\overline{\Psi}_{t}) || \mathcal{N}(s,\Psi_{t}) ),\end{multline} and \begin{multline}
 D_{\mathrm{KL}}(\mathcal{N}(0_{m},\overline{\Psi}_{t}) || \mathcal{N}(s,\Psi_{t}) )=\\ \frac{1}{2}\log_{e}(\frac{\det\Psi_{t}}{\det\overline{\Psi}_{t}})+\frac{\text{Tr}(\Psi_{t}^{-1}\overline{\Psi}_{t})}{2}+\frac{s^{\mathrm{T}}\Psi_{t}^{-1}s}{2}-\frac{m}{2},\label{eq:sayinnats}
\end{multline} where the divergence in (\ref{eq:sayinnats}) is in nats. Let $d_{t}= \log_{e}(\frac{\det\Psi_{t}}{\det\overline{\Psi}_{t}})+ \text{Tr}(\Psi_{t}^{-1}\overline{\Psi}_{t})-m$.
Taking the expectation over realizations $s$, we have 
\begin{multline}
    D_{\mathrm{KL}}(\left.R^{t}\rs{\lambda}+\rs{g}_{\le t} || \rs{g}_{\le t}+\rs{s}_{> t} \right| \rs{s}_{> t} ) =\\ \frac{1}{2}(d_{t}+\frac{\mathbb{E}_{\rs{s}_{>t}}[\rs{s}_{>t}^{\mathrm{T}}\Psi_{t}^{-1}\rs{s}_{>t}]}{2}).\label{eq:expectedkl}
\end{multline} It is immediate that $\rs{s}_{>t}\in\mathcal{L}^2$, so (\ref{eq:expectedkl}) is always finite. We analyze each of the terms in (\ref{eq:expectedkl}) in turn. Since $\overline{\Psi}_{t} \succeq \Psi_{t} \succeq W \succ 0_{m\times m }$, we have that $\det\Psi_{t},\det\overline{\Psi}_{t}>0$. Since $R$ is globally asymptotically stable (with $\meig(R)<1$), we have well defined, equal limits $\lim_{t\rightarrow\infty} \Psi_{t} = \Psi_{\infty}$ and $\lim_{t\rightarrow\infty} \overline{\Psi}_{t} = \Psi_{\infty}$ (see Proposition A.4 in Appendix A). Thus, $\lim_{t\rightarrow\infty }\log_{e}(\frac{\det\Psi_{t}}{\det\overline{\Psi}_{t}}) = 0$ and $\lim_{t\rightarrow\infty }\text{Tr}(\Psi_{t}^{-1}\overline{\Psi}_{t}) = m$,  implying $\lim_{t\rightarrow\infty }d_{t}=0$.

We now establish that $\lim_{t\rightarrow\infty}\mathbb{E}_{\rs{s}_{>t}}[\rs{s}_{>t}^{\mathrm{T}}\Psi_{t}^{-1}\rs{s}_{>t}] = 0$. Let $\rvec{p}_{t:r} = \sum_{i=t}^{r}R^{i}(\rvec{\omega}_{i}-L\rvec{\nu}_{i})$. For any $t$, by definition $\lim_{r\rightarrow\infty }\rvec{p}_{t:r}\rvec{p}_{t:r}^{\mathrm{T}} = \rvec{s}_{>t}\rvec{s}_{>t}^{\mathrm{T}}$, where we again note that the limit is well defined by Kolmogorov's two-series theorem. Then, we then have for any $t$ 
\begin{IEEEeqnarray}{rCl}
    \mathbb{E}[\rs{s}_{>t}^{\mathrm{T}}\Psi_{t}^{-1}\rs{s}_{>t}] &=&  \mathbb{E}\left[\text{Tr}\left(\Psi_{t}^{-1}\lim_{r\rightarrow\infty }\rvec{p}_{t:r}\rvec{p}_{t:r}^{\mathrm{T}}\right)\right] \\ &\le& \text{Tr}\left(\Psi_{t}^{-1} \underset{r\rightarrow \infty}{\lim\inf}\text{ }\mathbb{E}[\rvec{p}_{t:r}\rvec{p}_{t:r}^{\mathrm{T}}] \right)\label{eq:usefatou},
\end{IEEEeqnarray} where (\ref{eq:usefatou}) follows from Fatou's lemma and the linearity of the trace/expectation. Let $\Gamma =\lim_{j\rightarrow\infty}\sum_{i=0}^{j}R^{i}\left(W+LVL^{\mathrm{T}}\right)(R^{\tp})^{i}$,
where the limit is well defined since $R$ has $\meig(R)<1$. It is easy to see directly that $\underset{r\rightarrow \infty}{\lim}\text{ }\mathbb{E}[\rvec{p}_{t:r}\rvec{p}_{t:r}^{\mathrm{T}}] = R^{t}\Gamma (R^{\tp})^{t}$. Consequently, from (\ref{eq:usefatou}), we have
\begin{align}\label{eq:takethelimitofme}
    \mathbb{E}[\rs{s}_{>t}^{\mathrm{T}}\Psi_{t}^{-1}\rs{s}_{>t}] \le \text{Tr}\left(\Psi_{t}^{-1} R^{t}\Gamma (R^{\tp})^{t} \right). 
\end{align} It is immediate that $\Psi^{-1}_{t} \preceq W^{-1}$. Since $\meig(R)<1$ taking the limit of both sides of (\ref{eq:takethelimitofme}) as $t\rightarrow \infty$ gives $\lim_{t\rightarrow\infty} \mathbb{E}[\rs{s}_{>t}^{\mathrm{T}}\Psi_{t}^{-1}\rs{s}_{>t}]=0$. Since $\lim_{t\rightarrow \infty}d_{t}=0$, taking the limit of both sides of (\ref{eq:expectedkl}) as $t\rightarrow \infty$ gives that $\lim_{t\rightarrow \infty}D_{\mathrm{KL}}(\left.R^{t}\rs{\lambda}+\rs{g}_{\le t} || \rs{g}_{\le t}+\rs{s}_{> t} \right| \rs{s}_{> t} ) =0$. Since  $D(\rvec{e}_{t}||\rvec{e})\le D_{\mathrm{KL}}(\left.R^{t}\rs{\lambda}+\rs{g}_{\le t} || \rs{g}_{\le t}+\rs{s}_{> t} \right| \rs{s}_{> t})$ this proves the lemma.
\end{proof} Since $D_{\mathrm{KL}}(\rs{q}_{t} || \rs{q} )$ and $D_{\mathrm{KL}}(\left.\rs{q}_{t} || \rs{q} \right|\rvec{\delta}_{t})$ both tend to to $0$ as $t\rightarrow \infty$, taking the time averages of (\ref{eq:uncondwrongcode}) and (\ref{eq:condwrongcode}) and applying \Cesaro means gives (\ref{eq:lastthm_nosi_avg}) and (\ref{eq:lastthmsi_kl}) respectively. 
\section{Conclusion}\label{sec:conclcusion}
In this work we demonstrated that dithered quantization can enable a time-invariant encoding architecture to achieve near minimum bitrate prefix-free feedback in LQG control systems. There are several interesting opportunities for future work. An extension of our time-invariant achievability argument to nonsingular codes is essentially immediate. In both the conditional and unconditional ``time-invariant" approaches presented in this work, the difference between the upper and lower bounds on time average bitrate is linear in plant dimension (e.g. for the fully time-invariant scheme of Theorem \ref{thm:tiachiev}\ref{lemmclaim:unconditional}, the upper bound in (\ref{eq:lastthm_nosi_avg}) is about $1+1.26m$ bits above the lower bound $\mathcal{R}(\gamma)$). In the time-varying (but dither free) scheme  in \cite{kostinaTradeoff}, the gap between upper and lower bounds is $\mathcal{O}(\log(m))$). This follows from \cite{kostinaTradeoff}'s use of more sophisticated lattice quantizers \cite{kostinaTradeoff}. We believe that using (dithered) lattice quantizers in place of uniform quantizers in the present setup could reduce the scaling of our upper bounds with plant dimension.  Another opportunity is to explore the ergodic properties of the quantizer output in the achievability approach proposed in \cite{kostinaTradeoff}; this could lead to a dither-free time-invariant achievability result. 

Another opportunity is to expand this work to a more general class of MIMO plants. An extension to partially observed plants (where the encoder has access only to a noisy measurement of the plant) requires a modified converse (lower-bound) analysis. An reasonable staring point for this line of research is the rate distortion formulation in \cite[Section VII]{SDP_DI}. It is notable that in several areas, our proofs rely on the fact the process noise covariance is full rank (e.g. $W\succ 0_{m\times m}$); in particular this assumption is used liberally in establishing Lemmas \ref{lemm:invarexist} and \ref{lemm:ergo}. A starting point for relaxing this assumption is the rate-distortion formulation of \cite[Thm. 1]{SDP_DI}, which could be used to design an optimal test channel akin to that of Section \ref{sec:converse}. It would also be useful to formulate a non-time-asymptotic analysis of the convergence of communication and control costs in our proposed approach.

Finally, it would also be interesting to examine adaptive zero-delay source coding codecs in our present context; it seems likely that the properties of the invariant measure established in Section \ref{ssec:tiachievpf} may be useful in analyzing the asymptotic redundancy of such approaches.

%Appendix one text goes here.

% you can choose not to have a title for an appendix
% if you want by leaving the argument blank
%\section{}
\clearpage

% use section* for acknowledgement

% Can use something like this to put references on a page
% by themselves when using endfloat and the captionsoff option.
\ifCLASSOPTIONcaptionsoff
  \newpage
\fi

\bibliographystyle{IEEEtran}
\bibliography{IEEEabrv,references}

\begin{IEEEbiography}[{\includegraphics[width=1in,height=1.25in,clip,keepaspectratio]{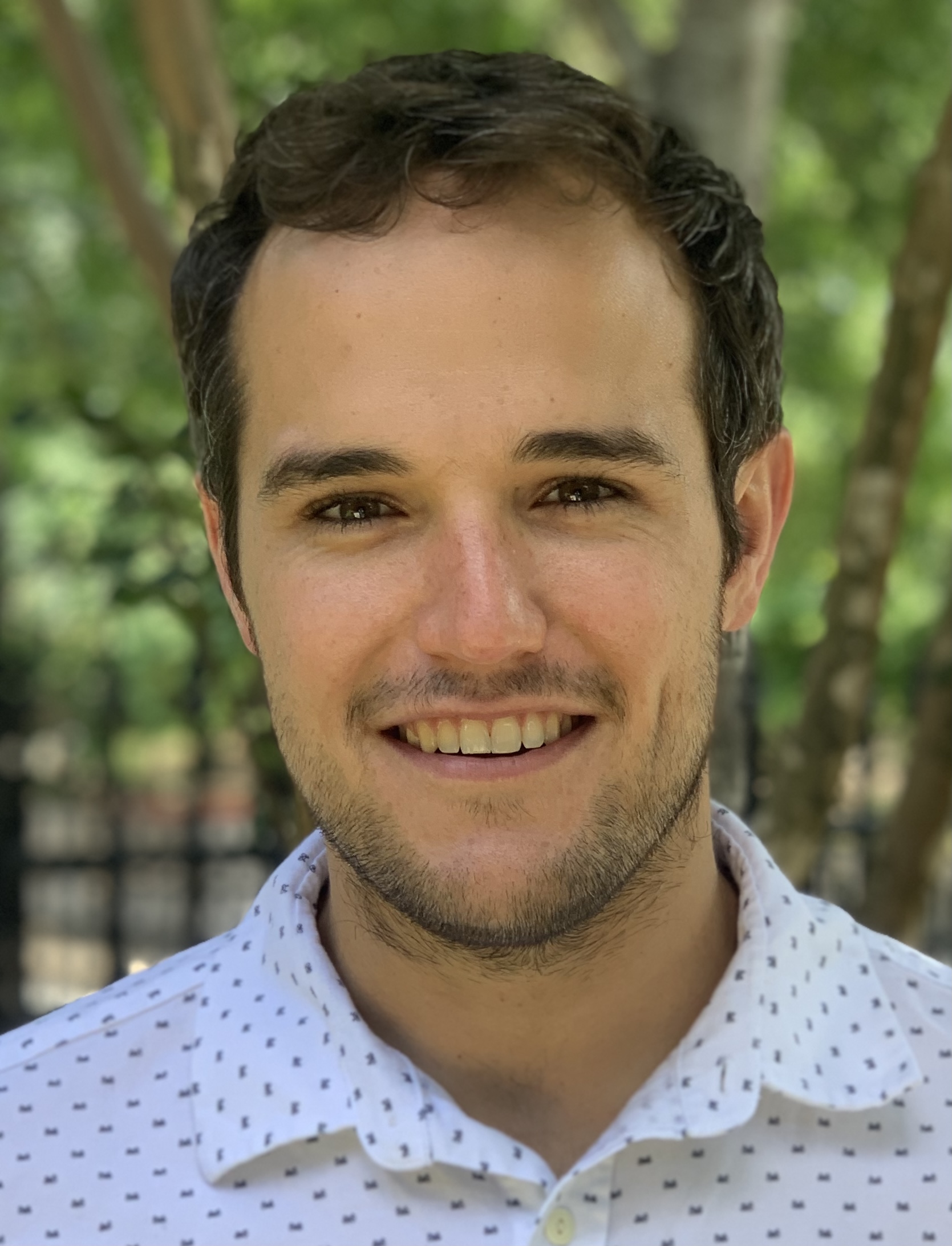}}]%
{Travis C. Cuvelier} (S'13)  received the B.S. and M.Eng. degrees in Electrical and Computer Engineering from Cornell University, Ithaca, NY, in 2015 and 2016. Since 2016, he has been pursuing a Ph.D. in the  Department of Electrical and Computer Engineering at the University of Texas at Austin. He previously held internships at LGS Innovations and The MITRE Corporation. At UT, he is affiliated with the Wireless Networking and Communications Group, the Oden Institute for Computational Engineering and Sciences, and the Applied Research Laboratories. His research interests include broad areas of signal processing and information theory with applications to network control systems and wireless communications. 
\end{IEEEbiography}

\begin{IEEEbiography}[{\includegraphics[width=1in,height=1.25in,clip,keepaspectratio]{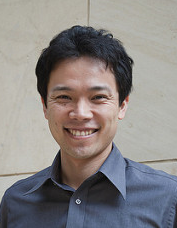}}]%
 {Takashi Tanaka} (SM'22) received the B.S. degree in aerospace engineering from the University of Tokyo, Tokyo, Japan, in 2006, and the M.S. and Ph.D. degrees in aerospace engineering(automatic control) from the University of Illinois at Urbana–Champaign, Champaign, IL, USA, in 2009 and 2012, respectively.
From 2012 to 2015, he was a Postdoctoral Associate with the Laboratory for Information and Decision Systems, Massachusetts Institute of Technology, Cambridge, MA, USA. From 2015 to 2017, he was a Postdoctoral Researcher with the KTH Royal Institute of Technology, Stockholm, Sweden. Since 2017, he has been an Assistant Professor with the Department of Aerospace Engineering and Engineering Mechanics, University of Texas at Austin, Austin, TX, USA.
\end{IEEEbiography}

\begin{IEEEbiography}[{\includegraphics[width=1in,height=1.25in,clip,keepaspectratio]{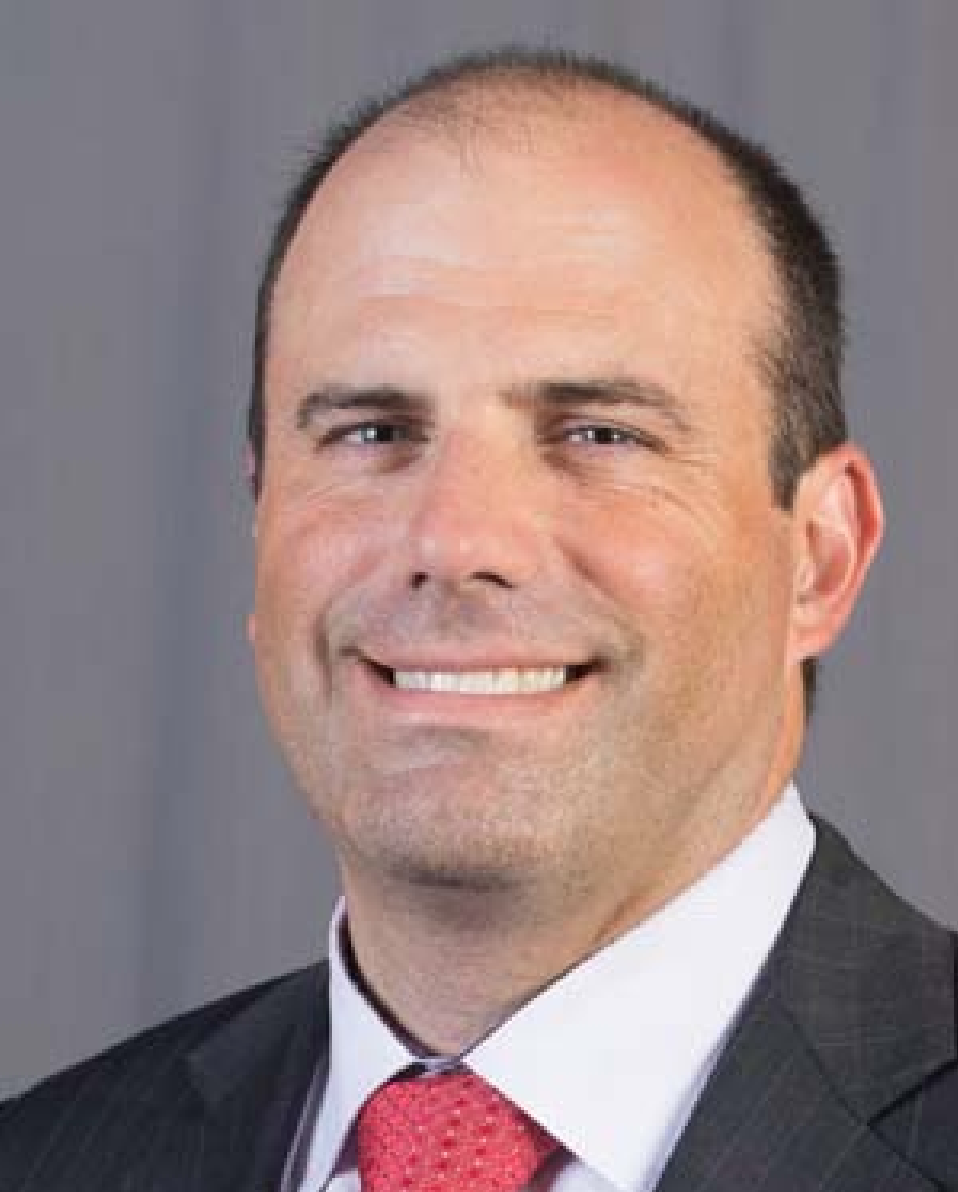}}]%
{Robert W. Heath Jr.}(F'11) received the B.S. and M.S. degrees from the University of Virginia, Charlottesville, VA, in 1996 and 1997, respectively, and the Ph.D. degree from Stanford University, Stanford, CA, in 2002, all in electrical engineering. From 1998 to 2001, he was a Senior Member of the Technical Staff then a Senior Consultant with Iospan Wireless, Inc., San Jose, CA, where he worked on the design and implementation of the physical and link layers of the first commercial MIMO-OFDM communication system. From 2002-2020, he was with The University of Texas at Austin, most recently as Cockrell Family Regents Chair in Engineering and Director of UT SAVES. He is presently a Distinguished Professor with North Carolina State University. He is also President and CEO of MIMO Wireless Inc. He authored Introduction to Wireless Digital Communication (Prentice Hall, 2017) and Digital Wireless Communication: Physical Layer Exploration Lab Using the NI USRP (National Technology and Science Press, 2012), and coauthored Millimeter Wave Wireless Communications (Prentice Hall, 2014) and Foundations of MIMO Communication (Cambridge University Press, 2018). He is currently Editor-in-Chief of IEEE Signal Processing Magazine and is a member-at-large of the IEEE Communications Society Board of Governors.
Dr. Heath has been a coauthor of a number award winning conference and journal papers including recently the 2016 IEEE Communications Society Fred W. Ellersick Prize, the 2016 IEEE Communications and Information Theory Societies Joint Paper Award, the 2017 Marconi Prize Paper Award, and the 2019 IEEE Communications Society Stephen O. Rice Prize. He was the recipient of the 2017 EURASIP Technical Achievement award and the 2019 IEEE Kiyo Tomiyasu Award. He was a Distinguished Lecturer and member of the Board of Governors in the IEEE Signal Processing Society. In 2017, he was selected as a Fellow of the National Academy of Inventors. He is also a licensed Amateur Radio Operator, a Private Pilot, a registered Professional Engineer in Texas.
\end{IEEEbiography}
\clearpage
\appendices

\section{Proofs of Technical Lemmas}\label{apx:ergopfs}
To prove Lemmas \ref{lemm:invarexist} and \ref{lemm:ergo}, it is useful to denote the $n-$step transition PDFs \begin{align}
     f_{t+n|t}(e,\delta|e_{p},\delta_{p})=f_{\rvec{e}_{t+n},\rvec{\delta}_{t+n}|\rvec{e}_{t},\rvec{\delta}_{t}}(e,\delta|e_{p},\delta_{p}).
\end{align} Applying the standard Chapman-Kolmogorov equations to (\ref{eq:transitionpdf}), it can be seen that the $n-$step transition PDFs satisfy, for $(e,\delta),(e_{p},\delta_{p}) \in \mathbb{D}^{m}\times\mathbb{D}^{m}$, 
\begin{IEEEeqnarray}{rCl}
    f_{t+n|t}(e,\delta|e_{p},\delta_{p}) &=& f_{\rvec{e}_{t+n}|\rvec{e}_{t},\rvec{\delta}_{t}}(e|e_{p},\delta_{p})f_{\rvec{\delta}_{t+n}}(\delta)\\ &=& f_{\rvec{e}_{t+n}|\rvec{e}_{t},\rvec{\delta}_{t}}(e|e_{p},\delta_{p})\frac{1_{\delta \in [-\frac{\Delta}{2},\frac{\Delta}{2}]^m}}{\Delta^m}.\label{eq:nStepTransition}
\end{IEEEeqnarray} %
%XXXXX where do you put me? XXXX We have that  $\rs{v}_{t}\indep (\rs{e}_{0}^{t},\rs{v}_{0}^{t-1},\rs{w}_{0}^{t-1})$. To verify this, note that as $\rs{v}_{t}$ is a measurable function of $(\rs{e}_{t},\rs{\delta}_{t})$, $\rs{v}_{t}$ is independent of $(\rs{e}_{0}^{t-1},\rs{v}_{0}^{t-1},\rs{w}_{0}^{t-1})$ given $(\rs{e}_{t},\rs{\delta}_{t})$, that by definition $\rs{\delta}_{t}\indep(\rs{e}_{0}^{t},\rs{v}_{0}^{t-1},\rs{w}_{0}^{t-1})$, and that by Prop. \ref{prop:edqa} $\rs{e}_{t}\indep \rs{v}_{t}$. 
\subsection{Proof of Lemma \ref{lemm:invarexist}}
We prove the existence of the invariant PDF using results from \cite{ito_invariant}. Formally speaking, we use the results of \cite{ito_invariant} to verify that Markov chain described by (\ref{eq:transitionpdf})
has an invariant measure that is equivalent to the Lebesgue measure (i.e., it has a PDF that is strictly positive).  Generally speaking, when restating definitions and theorems from \cite{ito_invariant}, we will not do so in full generality but rather adapt them to the present setting.  We begin with a definition. 
\begin{definition}[{\cite[Definition 5]{ito_invariant}}]\label{def:weakTrans}
A set $F\in\mathbb{B}(\mathbb{D}^m)$ is called \textit{weakly transient} with respect to the Markov kernel $\mathbb{ P}_{\rvec{e}_{t+1},\rvec{\delta}_{t+1}|\rvec{e}_{t},\rvec{\delta}_{t}}$ if there exists a sequence of positive integers $n_{1}<n_{2}<\dots$ such that 
\begin{align}
    \sum_{i=1}^{\infty}\mathbb{P}_{\rs{e}_{n_{i}},\rs{\delta}_{n_i}|\rs{e}_{0},\rs{\delta}_{0}}[F|\rs{e}_{0}={e}_{0},\rs{\delta}_{0}={\delta}_{0}]<\infty
\end{align} holds for $\lambda$ almost-every $({e}_{0},{\delta}_{0})$. 
\end{definition}  A key result from  \cite{ito_invariant} is the following.
\begin{theorem}[{\cite[Theorem 5]{ito_invariant}}]\label{thm:existanceCrit}
There exists an invariant PDF $g_{\mathrm{inv}}$ satisfying (\ref{eq:invarpdfdef}) and\\ $g_{\mathrm{inv}}(a,b)>0$ for all $(a,b)\in\mathbb{D}^{m}$ if and only if every weakly transient set $F$ has $\lambda(F) = 0$.  
\end{theorem}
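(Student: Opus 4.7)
The plan is to invoke Theorem~5 of \cite{ito_invariant} essentially verbatim: the statement of Theorem~\ref{thm:existanceCrit} is a direct quotation of that result, so a single citation closes the argument. For the reader's intuition, I outline the two directions of Ito's proof; neither is needed for the downstream material in this appendix, which uses Theorem~\ref{thm:existanceCrit} as a black box to verify the existence of $g_\infty$ via Lemma~\ref{lemm:invarexist}.

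For the forward direction (existence of a strictly positive $g_\infty$ implies every weakly transient set has $\lambda$-measure zero), I would argue by contradiction. Suppose $g_\infty>0$ on $\mathbb{S}$, so that the associated invariant probability measure $\mu$ is equivalent to $\lambda$, and suppose for contradiction that $F$ is weakly transient with $\lambda(F)>0$, so that $\mu(F)>0$. Invariance of $\mu$ gives that the $n_i$-step transition probability $P_{n_i}(x,F)$ integrates to $\mu(F)$ under $\mu$ for each $i$ in the weak-transience subsequence $\{n_i\}$, so summing over $i$ and applying Tonelli produces $\int \sum_i P_{n_i}(x,F)\,d\mu(x) = \infty$. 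On the other hand, weak transience combined with $\mu\sim\lambda$ forces $\sum_i P_{n_i}(x,F)<\infty$ for $\mu$-almost every $x$. Ito upgrades this qualitative a.e.\ finiteness to an outright contradiction via a Poincar\'{e}-recurrence argument on the $\mu$-stationary subsampled chain $(X_{n_i})_i$.

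For the reverse direction the plan is a constructive approximation. Starting from any probability measure $\nu$ equivalent to $\lambda$ -- for the chain in this paper a Gaussian on $\mathbb{R}$ tensored with uniform on $[-\Delta/2,\Delta/2]$ is natural -- I would form the \Cesaro averages $\bar{\nu}_N := \frac{1}{N}\sum_{n=0}^{N-1}\nu P^n$ of its pushforwards by the Markov kernel $P$ defined in (\ref{eq:transitionkernel}). The hypothesis that no Lebesgue-positive set is weakly transient is used to prevent the mass of $\{\bar{\nu}_N\}$ from escaping to infinity. A weak-compactness extraction then delivers an invariant sub-probability measure, and the same non-escape property upgrades it to a bona fide invariant probability measure $\mu_\infty$. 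The Gaussian smoothing built into the transition density (\ref{eq:transitionpdf}), whose $e$-marginal has full support on $\mathbb{R}$, then forces $\mu_\infty$ to be absolutely continuous with respect to $\lambda$ with an everywhere strictly positive density, which is the required $g_\infty$.

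The hard step is the non-escape-of-mass argument in the reverse direction: deducing from the absence of Lebesgue-positive weakly transient sets that mass cannot leak to infinity under \Cesaro averaging. This is the principal technical content of Ito's theorem and is handled in \cite{ito_invariant} by a Hopf-style dissipative/conservative decomposition of the state space coupled with a careful classification of transient sets; I would not reproduce those technicalities here but instead rely on the cited result.
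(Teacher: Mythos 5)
Your proposal is correct and matches the paper's approach: Theorem~\ref{thm:existanceCrit} is presented in the paper purely as a restatement of \cite[Theorem 5]{ito_invariant}, with no proof given beyond the citation, exactly as you propose. The two-paragraph sketch of Ito's argument you supply is reasonable background intuition but, as you yourself note, is not required and is not present in the paper.
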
 We prove that the invariant PDF exists by demonstrating that under the Markov model (\ref{eq:transitionpdf}), any weakly transient set must have Lebesgue measure $0$. Recall from the discussion in Section \ref{ssec:dithquant_key} that the reconstruction at time $t$ is given by
\begin{align}
    \rs{\tilde{q}}_t = {Q}_{\Delta}(C\rs{e}_{t}+\rs{\delta}_{t})-\rs{\delta}_{t}
\end{align} and the reconstruction error is then $\rs{v}_{t}=\rs{\tilde{q}}_t-C\rs{e}_{t}$. Recall $R= (A-LC)$, and that $\meig(R)<1$. The first lemma derives the functional form of a particular conditional PDF that will arise in future calculations.  
\begin{lemma}\label{lemm:normalPDF}
Let 
\begin{align}\label{eq:mundef}
     \mu_n(e_0,{\delta}_0,v_1^{n-1})= R^{n-1}M(e_0,{\delta}_0)-\sum\limits_{i=0}^{n-2}R^iLv_{n-1-i}
\end{align} and 
\begin{align}\label{eq:sigmandef}
    \Sigma_{n} = \sum\limits_{i=0}^{n-1}R^iW(R^{\tp})^{i}.
\end{align}
For all $n\ge 1$ we have
\begin{multline}\label{eq:normalStepFormula}
    f_{\rvec{e}_{n}|\rvec{e}_{0},\rvec{\delta}_{0},\rvec{v}_1^{n-1}}({e}_{n}|{e}_{0},{\delta}_{0},{v}_1^{n-1}) =\\ N(e_{n};  \mu_n(e_0,{\delta}_0,v_1^{n-1}),\Sigma_{n})
\end{multline} where $M$ was defined in (\ref{eq:MfuncDef}), and by convention $\rs{v}_{1}^{z}=\emptyset$ if $z \le 0$, $\sigma_1^2 = W$, and  $\mu_1= M(e_0,{\delta}_0)$. 
\end{lemma}
\begin{proof}
The proof follows from induction on $n$. The base case for $n=1$ is readily established from (\ref{eq:transitionpdf}) after marginalizing over $\rs{\delta}_{1}$. Assume the formula (\ref{eq:normalStepFormula}) holds for $n=k-1$. We demonstrate that it must hold for $n=k$. We have via Bayes' Theorem that
\begin{multline}\label{eq:bayesthm}
    f_{\rs{e}_{k}|\rs{e}_0,\rs{\delta}_0,\rs{v}_{1}^{k-1}}({e}_{k}|{e}_0,{\delta}_0,{v}_{1}^{k-1}) =\\ \frac{f_{\rs{e}_{k},\rs{v}_{k-1}|\rs{e}_0,\rs{\delta}_0,\rs{v}_{1}^{k-2}}({e}_{k},v_{k-1}|{e}_0,{\delta}_0,{v}_{1}^{k-2})}{f_{\rs{v}_{k-1}|\rs{e}_0,\rs{\delta}_0,\rs{v}_{1}^{k-2}}(v_{k-1}|{e}_0,{\delta}_0,{v}_{1}^{k-2})}
\end{multline}
Since  $\rs{v}_{k-1}$ is a measurable function of $\rvec{e}_{k-1}$ and $\rvec{\delta}_{k-1}$ and $\rvec{\delta}_{k-1}\indep (\rvec{e}_{k-1}, \rs{v}_{1}^{k-2}, \rs{e}_{0}, \rs{\delta}_0 )$, it can be seen that $\rs{v}_{k-1}\indep(\rs{v}_{1}^{k-2}, \rs{e}_{0}, \rs{\delta}_0 )$ \textit{given} $\rs{e}_{k-1}$. By the properties of dithered quantizers in Prop. \ref{prop:edqa}\ref{lemmclaim:work},  we have that $\rs{v}_{k-1}$ is (pairwise) independent of $\rs{e}_{k-1}$. Together, these imply that $\rvec{v}_{k-1}\indep (\rvec{e}_{k-1},\rvec{e}_{0},\rvec{\delta}_{0},\rvec{v}^{k-2}_{1})$. Thus,  suppressing the implicit dependence on realizations, we can derive
\begin{multline}
        f_{\rs{e}_{k},\rs{v}_{k-1}|\rs{e}_0,\rs{\delta}_0,\rs{v}_{1}^{k-2}}=\\ f_{\rs{v}_{k-1}}\int_{\mathbb{R}^m}f_{\rs{e}_{k}|\rs{e}_{k-1},\rs{e}_0,\rs{\delta}_0,\rs{v}_{1}^{k-1}}f_{\rs{e}_{k-1}|\rs{e}_{0},\rs{\delta}_0,\rs{v}_{1}^{k-2}}de_{k-1}\label{eq:secondSubErgo}, 
        \end{multline}which is proven in (\ref{eq:secondSubErgop1})-(\ref{eq:secondSubErgop2}) shown at the top of the subsequent page.
\begin{figure*}[!t]
% ensure that we have normalsize text
\normalsize
% Store the current equation number.
%\setcounter{MYtempeqncnt}{\value{equation}}
% Set the equation number to one less than the one
% desired for the first equation here.
% The value here will have to changed if equations
% are added or removed prior to the place these
% equations are referenced in the main text.
%\setcounter{equation}{5}
\begin{IEEEeqnarray}{rCl}
        f_{\rs{e}_{k},\rs{v}_{k-1}|\rs{e}_0,\rs{\delta}_0,\rs{v}_{1}^{k-2}} &=& \int_{\mathbb{R}^m}f_{\rs{e}_{k}|\rs{e}_{k-1},\rs{e}_0,\rs{\delta}_0,\rs{v}_{1}^{k-1}}f_{\rs{v}_{k-1}|\rs{e}_{k-1},\rs{e}_0,\rs{\delta}_0,\rs{v}_{1}^{k-2}}f_{\rs{e}_{k-1}|\rs{e}_{0},\rs{\delta}_0,\rs{v}_{1}^{k-2}}de_{k-1}\label{eq:secondSubErgop1}\\&=& f_{\rs{v}_{k-1}}\int_{\mathbb{R}^m}f_{\rs{e}_{k}|\rs{e}_{k-1},\rs{e}_0,\rs{\delta}_0,\rs{v}_{1}^{k-1}}f_{\rs{e}_{k-1}|\rs{e}_{0},\rs{\delta}_0,\rs{v}_{1}^{k-2}}de_{k-1}\label{eq:secondSubErgop2}
\end{IEEEeqnarray} 
% Restore the current equation number.
%\setcounter{equation}{\value{MYtempeqncnt}}
% IEEE uses as a separator
\hrulefill
% The spacer can be tweaked to stop underfull vboxes.
\vspace*{4pt}
\end{figure*}
 Thus, substituting (\ref{eq:secondSubErgo}) into (\ref{eq:bayesthm}) and using the fact that  $\rvec{v}_{k-1}\indep (\rvec{e}_{0},\rvec{\delta}_{0},\rvec{v}^{k-2}_{1})$ we can write
\begin{multline}\label{eq:gconv}
        f_{\rs{e}_{k}|\rs{e}_0,\rs{\delta}_0,\rs{v}_{1}^{k-1}} =\\\int_{\mathbb{R}^{m}}f_{\rs{e}_{k}|\rs{e}_{k-1},\rs{e}_0,\rs{\delta}_0,\rs{v}_{1}^{k-1}}f_{\rs{e}_{k-1}|\rs{e}_{0},\rs{\delta}_0,\rs{v}_{1}^{k-2}}de_{k-1}.
\end{multline} From the recursion relationship (\ref{eq:algebraicMassage}) and  that $\rvec{w}_{k-1}\indep $ $(\rvec{e}_{k-1},\rvec{e}_{0},\rvec{\delta}_{0},\rvec{v}_{1}^{k-1})$, we have that 
\begin{multline}
    f_{\rs{e}_{k}|\rs{e}_{k-1},\rs{e}_0,\rs{\delta}_0,\rs{v}_{1}^{k-1}}({e}_{k}|{e}_{k-1},{e}_0,{\delta}_0,{v}_{1}^{k-1}) =\\   f_{\rs{e}_{k}|\rs{e}_{k-1},\rs{v}_{k-1}}({e}_{k}|{e}_{k-1},{v}_{k-1}) 
\end{multline} and thus, 
\begin{multline}
   f_{\rs{e}_{k}|\rs{e}_{k-1},\rs{e}_0,\rs{\delta}_0,\rs{v}_{1}^{k-1}}({e}_{k}|{e}_{k-1},{e}_0,{\delta}_0,{v}_{1}^{k-1}) =\\ N(e_{k};Re_{k-1}-Lv_{k-1},W).
\end{multline}Then, by the inductive assumption we have
\begin{multline}
f_{\rs{e}_{k-1}|\rs{e}_{0},\rs{\delta}_0,\rs{v}_{1}^{k-2}}({e}_{k-1}|{e}_{0},{\delta}_0,{v}_{1}^{k-2}) =\\ N(e_{k-1};\mu_{k-1}(e_0,{\delta}_0,v_1^{k-2}),\Sigma_{k-1}).
\end{multline} The integration in (\ref{eq:gconv}) is essentially a convolution of two  Gaussian PDFs, namely $ f_{\rs{e}_{k}|\rs{e}_0,\rs{\delta}_0,\rs{v}_{1}^{k-1}} = \int_{\mathbb{R}^{m}}N(e_{k};Re_{k-1}-Lv_{k-1},W)N(e_{k-1};\mu_{k-1}(e_0,{\delta}_0,v_1^{k-2}),\Sigma_{k-1})de_{k-1}.$ Computing this convolution gives
\begin{multline}
     f_{\rs{e}_{k}|\rs{e}_0,\rs{\delta}_0,\rs{v}_{1}^{k-1}} =\\ N(e_{k};R\mu_{k-1}(e_0,{\delta}_0,v_1^{k-2})-Lv_{k-1},R\Sigma_{k-1}R^{\tp}+W).\nonumber
\end{multline}
Substituting the assumed formulas (\ref{eq:normalStepFormula}) for $\Sigma_{k-1}$ and  $\mu_{k-1}(e_0,{\delta}_0,v_1^{k-2})$ into
\begin{align}
    \Sigma_k = R\Sigma_{k-1}R^{\tp}+W
\end{align} and
\begin{align}
\mu_k(e_0,{\delta}_0,v_1^{k-1}) = R\mu_{k-1}(e_0,{\delta}_0,v_1^{k-2})-Lv_{k-1}
\end{align}
 exactly recovers the formula (\ref{eq:normalStepFormula}) predicts for $n=k$. 
\end{proof} Before continuing, we will state and prove a lemma that describes some properties of the sequence of covariance matrices $\{\Sigma_{n}\}$ and the sequence of functions $\mu_{n}(e_{0},\delta_{0},v_{1}^{n-1}):\mathbb{D}^{m}\times ([-\frac{\Delta}{2},\frac{\Delta}{2}]^{m})^{n-1}\rightarrow \mathbb{R}^{m}$ described in Lemma \ref{lemm:normalPDF}. First, we recall a classic result from System Theory. Recall that for a matrix $X\in\mathbb{R}^{m\times m}$, we defined $\lVert X \rVert_{2}$ as the maximum singular value of $X$ and $\meig(X)$ as $X$'s spectral radius (the largest of the absolute values of $X$'s eigenvalues). 
\begin{proposition}[Gelfand's Theorem (cf. e.g. \cite{dullRobust}) and a Corollary]\label{prop:gelfand}
Gelfand's theorem states that $X\in \mathbb{R}^{m\times m}$, then 
\begin{align}\label{eq:gelfandguarantee}
    \lim_{n\rightarrow\infty }\left(\lVert X^n \rVert_{2} \right)^{\frac{1}{n}} = \meig(X). 
\end{align} If $\meig(X)<1$, then $\tau = (\meig(X)+1)/2$ has $\tau<1$. An immediate corollary of (\ref{eq:gelfandguarantee}) is that there exists $i\in\mathbb{N}_{+}$ such that for all $j\ge i$, $\lVert X^{j} \rVert_{2}\le \tau^j$. 
\end{proposition} In other words, Prop. \ref{prop:gelfand} guarantees that if $\meig(X)<1$, $\lim_{i\rightarrow\infty}X^{j}=0_{m\times m}$ ``geometrically fast". The next lemma concerns the sequence $\{\Sigma_{n}\}$.
\begin{lemma}\label{lemm:sigprops}
Let $\{\Sigma_{n}\}$ be the sequence of matrices in (\ref{eq:sigmandef}). For all $n$, we have $\Sigma_{n}\succeq W \succ 0_{m\times m}$. Furthermore, there exists a constant $c$ such that $\lVert\Sigma_{n}\rVert_{2}\le c$. 
\end{lemma}
\begin{proof}
It is immediate from (\ref{eq:sigmandef}) that $\Sigma_{n}\succeq W$. Note also that $\Sigma_{n}\succeq \Sigma_{n-1}$, so $\lVert \Sigma_{n}\rVert_{2}\ge  \lVert\Sigma_{n-1}\rVert_{2}$. Recall that $\meig(R)<1$, and let $\tau = (\meig(R)+1)/2$. By Prop. \ref{prop:gelfand}, there exists $j$ such that if $i\ge j$, $\lVert R^{i}\rVert_{2} \le \tau^{i}$. 
\begin{IEEEeqnarray}{rCl}
        \lVert \Sigma_{n}\rVert_{2} &=& \lVert \sum\limits_{i=0}^{n-1}R^iW(R^{\tp})^{i}\rVert_{2}\label{eq:thedefsign}\\ &\le& \lim_{n\rightarrow\infty } \sum\limits_{i=0}^{n-1}\lVert R^iW(R^{\tp})^{i}\rVert_{2}\label{eq:themonotonicityofsign}\\ &\le& \lim_{n\rightarrow\infty }\sum\limits_{i=0}^{n-1}\lVert R^i\rVert_{2}^2 \lVert W\rVert_{2}\label{eq:thesubmultiplicitivyofmn}\\ &\le& \sum\limits_{i=0}^{j-1}\lVert R^i\rVert_{2}^2 \lVert W\rVert_{2}+\lim_{n\rightarrow\infty }\sum\limits_{i=j}^{n}\tau^{2i} \lVert W\rVert_{2}\label{eq:weusegelfand}\\ &\le&  \sum\limits_{i=0}^{j-1}\lVert R^i\rVert_{2}^2 \lVert W\rVert_{2}+\lVert W\rVert_{2}\frac{1}{1-\tau^2}\label{eq:geometrictau},
\end{IEEEeqnarray} where (\ref{eq:thedefsign}) is the definition  (\ref{eq:sigmandef}), (\ref{eq:themonotonicityofsign}) follows from the triangle inequality and monotonicity, (\ref{eq:thesubmultiplicitivyofmn}) is from the fact that the matrix norm $\lVert\circ \rVert_{2}$ is submultiplicative, (\ref{eq:weusegelfand}) applies the corollary in Prop. \ref{prop:gelfand}, and finally (\ref{eq:geometrictau}) is the geometric series formula (note $\tau<1$).  Making the choice $c =\sum\limits_{i=0}^{j-1}\lVert R^i\rVert_{2}^2 \lVert W\rVert_{2}+\lVert W\rVert_{2}\frac{1}{1-\tau^2}$ proves the result. 
\end{proof}
The next lemma concerns the sequence of functions $\mu_{n}$ in (\ref{eq:mundef}). Namely, it proves that the range of the functions lies in compact set that does not depend on $n$ or the realizations $v_{1}^{n-1}$. 
\begin{lemma}\label{lemm:muprops}
There exists constants $\alpha$ and $\beta$ such that for any $n$ and choice of $v_{1}^{n-1}\in ([-\frac{\Delta}{2},\frac{\Delta}{2}]^m)^{n-1}$ we have $\lVert \mu_n(e_0,{\delta}_0,v_1^{n-1}) \rVert_{2}\le \alpha \lVert M(e_{0},\delta_{0}) \rVert_{2} +\beta$.
\end{lemma}
\begin{proof}
The proof is analogous to Lemma \ref{lemm:sigprops}. Let $\alpha =  \max_{n\in\mathbb{N}_{+}} \lVert R^{n-1}\rVert_{2}$. Since $\meig(R)<1$, we have $\alpha<\infty$ by the corollary in Prop. \ref{prop:gelfand}. Let $\tau =(\meig(R)+1)/2$ and let $j$ be as in the statement of Prop. \ref{prop:gelfand}. The proof follows from the inequalities (\ref{eq:trisub})-(\ref{eq:usecorr2}), illustrated at the top of the following page.
\begin{figure*}[!t]
% ensure that we have normalsize text
\normalsize
% Store the current equation number.
%\setcounter{MYtempeqncnt}{\value{equation}}
% Set the equation number to one less than the one
% desired for the first equation here.
% The value here will have to changed if equations
% are added or removed prior to the place these
% equations are referenced in the main text.
%\setcounter{equation}{5}
\begin{IEEEeqnarray}{rCl}
    \lVert\mu_n(e_0,{\delta}_0,v_1^{n-1}) \rVert_{2} &\le& \lVert R^{n-1}\rVert_{2}  \lVert M(e_{0},\delta_{0})\rVert_{2}+\sum_{i=0}^{n-1}\lVert R^i\rVert_{2}\lVert L \rVert_{2}\lVert v_{n-1-i}\rVert_{2}\label{eq:trisub}\\ &\le& \alpha \lVert M(e_{0},\delta_{0})\rVert_{2}+\sum_{i=0}^{n-1}\lVert R^i\rVert_{2}\lVert L \rVert_{2}\frac{\sqrt{m}}{2}\Delta\label{eq:extremal},\\&\le& \alpha \lVert M(e_{0},\delta_{0})\rVert_{2}+\lim_{n\rightarrow\infty}\sum_{i=0}^{n}\lVert R^i\rVert_{2}\lVert L \rVert_{2}\frac{\sqrt{m}}{2}\Delta \\&\le& \alpha \lVert M(e_{0},\delta_{0})\rVert_{2}+\sum_{i=0}^{j-1}\lVert R^i\rVert_{2}\lVert L \rVert_{2}\frac{\sqrt{m}}{2}\Delta + \lim_{n\rightarrow\infty }\sum_{i=j}^{n}\tau^i\lVert L \rVert_{2}\frac{\sqrt{m}}{2}\Delta\label{eq:usecorr1}\\&\le& \alpha \lVert M(e_{0},\delta_{0})\rVert_{2}+\sum_{i=0}^{j-1}\lVert R^i\rVert_{2}\lVert L \rVert_{2}\frac{\sqrt{m}}{2}\Delta + \frac{1}{1-\tau}\lVert L \rVert_{2}\frac{\sqrt{m}}{2}\Delta\label{eq:usecorr2}
\end{IEEEeqnarray}
% Restore the current equation number.
%\setcounter{equation}{\value{MYtempeqncnt}}
% IEEE uses as a separator
\hrulefill
% The spacer can be tweaked to stop underfull vboxes.
\vspace*{4pt}
\end{figure*}
For all $n$, (\ref{eq:trisub}) follows immediately from the triangle inequality and submultiplicativity. Then, (\ref{eq:extremal}) follows by the definition of $\alpha$ and the fact that since the $v_{i}\in[-\frac{\Delta}{2},\frac{\Delta}{2}]^m$, they have $\lVert v_{i}\rVert_{2} \le \sqrt{m}\Delta/2$. Finally (\ref{eq:usecorr1}) and (\ref{eq:usecorr2}) are completely analogous to  (\ref{eq:weusegelfand}) and (\ref{eq:geometrictau}). 
\end{proof}
With these in hand, we prove that the Markov process $\{\rs{e}_{t},\rs{\delta}_{t}\}$ satisfies the hypothesis of Theorem \ref{thm:existanceCrit}.
\begin{lemma}\label{lemm:transientmeasure0}
All sets $F\in\mathbb{B}(\mathbb{D}^{m})$ that are weakly transient with respect to the Markov kernel (\ref{eq:transitionpdf}) have  $\lambda(F)=0$. 
\end{lemma}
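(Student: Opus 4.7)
The plan is to prove the contrapositive: if $\lambda(F)>0$, then for any subsequence $n_1<n_2<\dots$ the sum $\sum_i\mathbb{P}_{\rs{e}_{n_i},\rs{d}_{n_i}|\rs{e}_0,\rs{d}_0}[F|e_0,d_0]$ diverges on a set of $(e_0,d_0)$ of positive Lebesgue measure, so $F$ is not weakly transient. The key ingredient is the explicit Gaussian form of the conditional density $f_{\rs{e}_n|\rs{e}_0,\rs{d}_0,\rs{v}_1^{n-1}}$ from Lemma \ref{lemm:normalPDF}, combined with the facts that $|R|<1$ and $|\rs{v}_i|\le \Delta/2$ almost surely.

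First, I would use the independence structure implied by (\ref{eq:transitionpdf}): since $\rs{d}_n$ is IID uniform on $[-\Delta/2,\Delta/2]$ and independent of the rest of the past, the joint conditional density factorizes as $f_{\rs{e}_n,\rs{d}_n|\rs{e}_0,\rs{d}_0}(e,d|e_0,d_0)=\frac{1}{\Delta}f_{\rs{e}_n|\rs{e}_0,\rs{d}_0}(e|e_0,d_0)$ on $\mathbb{S}$. Then, assuming $\lambda(F)>0$, I would pick $M>0$ large enough and a compact $K_1\subset\mathbb{S}$ of positive Lebesgue measure such that $F':=F\cap([-M,M]\times[-\Delta/2,\Delta/2])$ still has $\lambda(F')>0$; this is possible by $\sigma$-additivity.

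The heart of the argument is a uniform (in $n$) lower bound on $f_{\rs{e}_n|\rs{e}_0,\rs{d}_0}(e|e_0,d_0)$ for $(e_0,d_0)\in K_1$ and $e\in[-M,M]$. Writing this density as an expectation over $\rs{v}_1^{n-1}$ of the Gaussian in Lemma \ref{lemm:normalPDF} and using $|R|<1$ gives $\sigma_n^2\in[W,\,W/(1-R^2)]$, while $|\mu_n(e_0,d_0,v_1^{n-1})|\le |R|^{n-1}|M(e_0,d_0)|+|L|\Delta/(2(1-|R|))$ is bounded almost surely by a constant depending only on $K_1$. Consequently the integrand Gaussian admits a uniform lower bound $c=c(K_1,M)>0$ on the relevant range, so $f_{\rs{e}_n|\rs{e}_0,\rs{d}_0}(e|e_0,d_0)\ge c$ for all $n\ge 1$, $(e_0,d_0)\in K_1$, $e\in[-M,M]$. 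Integrating against $\mathbb{1}_{F'}/\Delta$ yields $\mathbb{P}_{\rs{e}_n,\rs{d}_n|\rs{e}_0,\rs{d}_0}[F|e_0,d_0]\ge c\,\lambda(F')/\Delta>0$ uniformly in $n$, which forces divergence of the weak-transience sum on all of $K_1$, contradicting Definition \ref{def:weakTrans}.

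The main obstacle I anticipate is cleanly assembling the uniform lower bound in the last step: one must simultaneously handle the random argument $\mu_n$ inside the expectation (requiring the almost-sure bound $|\rs{v}_i|\le \Delta/2$), the decaying contribution from the initial condition (requiring $|R|<1$ so the transient term stays bounded), and the uniformity over compact $(e_0,d_0)$ through $M(\cdot,\cdot)$ (which, inspecting (\ref{eq:MfuncDef}), is locally bounded because the quantization term contributes at most $\Delta/2$ and the linear part is bounded on $K_1$). Once these three bounds are combined, the Gaussian lower bound and the rest of the argument are routine.
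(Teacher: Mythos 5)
Your proposal is correct and follows essentially the same route as the paper's proof: contrapositive, factorization of the $n$-step kernel into the uniform dither density times $f_{\rs{e}_n|\rs{e}_0,\rs{d}_0}$, ``minimum is less than the average'' to pass inside the expectation over $\rs{v}_1^{n-1}$, the explicit Gaussian from Lemma~\ref{lemm:normalPDF}, uniform-in-$n$ control of $\sigma_n^2\in[W,W/(1-|R|^2)]$ and $|\mu_n|\le|M(e_0,d_0)|+|L|\Delta/(2(1-|R|))$, and divergence of the series. The only cosmetic difference is that you truncate $F$ to $F'=F\cap([-M,M]\times[-\Delta/2,\Delta/2])$ while the paper extracts a closed rectangle $H\subset F$, and you restrict initial conditions to a compact $K_1$ while the paper lets the lower-bound constant depend pointwise on $(e_0,d_0)$; both choices deliver the same conclusion.
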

\begin{proof}
 We proceed via the contrapositive. Namely, we demonstrate that if  $F\in\mathbb{B}(\mathbb{D}^{m})$ has $\lambda(F)>0$, then $F$ is not weakly transient with respect to the Markov kernel (\ref{eq:transitionpdf}). 
Assume that $F\in\mathbb{B}(\mathbb{D}^{m})$ has $\lambda(F)>0$. We will prove that for every such $F$ and initial condition $e_{0},{\delta}_{0}$ there exists $\xi>0$ such that
\begin{align}\label{eq:whatwesetouttoprove}
    \mathbb{P}_{\rs{e}_{n},\rs{\delta}_{n}|\rs{e}_{0},\rs{\delta}_{0}}[F|\rs{e}_{0}={e}_{0},\rs{\delta}_{0}={\delta}_{0}] > \xi\textit{, for all $n\in\mathbb{N}_+$}.
\end{align} This ensures that for every  $e_{0},{\delta}_{0}$ and subsequence $n_{i}\in\mathbb{N}$
\begin{align}
    \sum_{i=1}^{\infty}\mathbb{P}_{\rs{e}_{n_{i}},\rs{\delta}_{n_i}|\rs{e}_{0},\rs{\delta}_{0}}[F|\rs{e}_{0}={e}_{0},\rs{\delta}_{0}={\delta}_{0}]=\infty.
\end{align} 

Since $F\subset\mathbb{D}^{m}$ has positive Lebesgue measure, the regularity of Lebesgue measure (cf. \cite[Thms. 2.14, 2.18]{rudinrc}) implies that $F$ must contain a compact set with strictly positive Lebesgue measure; in other words, there exists a closed, bounded $H\subset F$ such that for some $\kappa > 0$, $\lambda(H) = \kappa$. By countable additivity for all $n\in\mathbb{N}_+$
\begin{multline}\label{eq:countableadditivity}
     \mathbb{P}_{\rs{e}_{n},\rs{\delta}_{n}|\rs{e}_{0},\rs{\delta}_{0}}[(\rs{e}_{n},\rs{\delta}_{n})\in F|\rs{e}_{0}={e}_{0},\rs{\delta}_{0}={\delta}_{0}]\ge\\ \mathbb{P}_{\rs{e}_{n},\rs{\delta}_{n}|\rs{e}_{0},\rs{\delta}_{0}}[(\rs{e}_{n},\rs{\delta}_{n})\in H|\rs{e}_{0}={e}_{0},\rs{\delta}_{0}={\delta}_{0}] .
\end{multline} Consider a fixed $n\in\mathbb{N}_{+}$. It is obvious that 
\begin{multline}\label{eq:rectangleBound}
    \mathbb{P}_{\rs{e}_{n},\rs{\delta}_{n}|\rs{e}_{0},\rs{\delta}_{0}}[(\rs{e}_{n},\rs{\delta}_{n})\in H|\rs{e}_{0}={e}_{0},\rs{\delta}_{0}={\delta}_{0}]\ge\\ \kappa\inf_{(x,y)\in H}f_{n|0}(x,y|\rs{e}_{0}=e_{0},\rs{\delta}_0 = {\delta}_{0}). 
\end{multline} We establish the result of the lemma by finding a lower bound for the infimum on the right-hand side of (\ref{eq:rectangleBound})
that does not depend on $n$. Let $H_{e}=\{x\in\mathbb{R}^{m}: \exists \delta \in [-{\Delta}/2,{\Delta}/2]^{{m}} \text{ with } (x,\delta)\in H \}$ denote the ``$e-$section" of $H$. Note that  $H_{e}$ is a compact subset of $\mathbb{R}^{m}$. Boundedness of $H_{e}$ is inherited from the boundedness of $H$. To see that $H_{e}$ is closed, let $x$ be a limit point of $H_{e}$ and the limit of the sequence $x_{i}\in H_{e}$. Since $x_{i}\in H_{e}$, for each $x_{i}$ there exists a $\delta_{i}$ such that $(x_{i},\delta_{i})\in H$. Since $H$ is compact, a subsequence of $(x_{i},\delta_{i})$, denoted $(x_{n_{i}},\delta_{n_{i}})$, converges in $H$; in other words, there exists some $\overline{\delta}$ such that $\lim_{i}(x_{n_{i}},\delta_{n_{i}}) = (x,\overline{\delta})$ and $(x,\overline{\delta})\in H$. Since $(x,\overline{\delta})\in H$, $x\in H_{e}$. Since $H_{e}$ contains its limit points, it is closed.
From the factorization of the n-step transition PDF (\ref{eq:nStepTransition}) and the fact that $H$ is contained strictly inside $\mathbb{D}^{m}$ we have
\begin{align}\label{eq:secondinchain}
      \inf_{(x,y)\in H}f_{n|0}(x,y|e_{0},{\delta}_{0}) =     \inf_{x\in H_e}\frac{f_{\rs{e}_{n}|\rs{e}_{0},\rs{\delta}_0}(x|e_{0},{\delta}_{0})}{\Delta^m}.
\end{align} 
By definition,
\begin{multline}\label{eq:theplan}
    f_{\rs{e}_{n}|\rs{e}_{0},\rs{\delta}_0}(x|e_{0},{\delta}_{0}) =\\ \mathbb{E}_{\rs{v}_{1}^{n-1}|\rs{e}_{0}=e_0,\rs{\delta}_0={\delta}_0}[f_{\rs{e}_{n}|\rs{v}_{1}^{n-1},\rs{e}_{0},\rs{\delta}_0}(x|\rs{v}_{1}^{n-1},e_{0},{\delta}_{0})].
\end{multline} 
Recall the reconstruction error satisfies $\rs{v}_i\in[-\frac{\Delta}{2},\frac{\Delta}{2}]^m$ for all $i$. Since ``the minimum is less than or equal to the average", we have
\begin{multline}\label{eq:minleav}
    \inf_{x\in H_e}f_{\rs{e}_{n}|\rs{e}_{0},\rs{\delta}_0}(x|e_{0},{\delta}_{0})\ge\\\inf_{\substack{x\in H_e\\ v_{1}^{n-1}\in {([-\frac{\Delta}{2},\frac{\Delta}{2}]^m)}^{(n-1)}}}f_{\rs{e}_{n}|\rs{v}_{1}^{n-1},\rs{e}_{0},\rs{\delta}_0}(x|{v}_{1}^{n-1},e_{0},{\delta}_{0})
\end{multline} In Lemma \ref{lemm:normalPDF}'s (\ref{eq:normalStepFormula}) we demonstrated that for any $n$, realizations of the reconstruction error ${v}_1^{n-1}$, and realizations of the initial conditions $\rs{e}_{0},\rs{\delta}_0$ we have
\begin{multline}\label{eq:unlabeleduptonow}
    f_{\rs{e}_{n}|\rs{v}_{1}^{n-1},\rs{e}_{0},\rs{\delta}_0}(x|{v}_{1}^{n-1},e_{0},{\delta}_{0}) =\\ N(x;\mu_{n}(e_{0},\delta_{0},v_{1}^{n-1}),\Sigma_{n} ).
\end{multline} Note that for $\Sigma \succeq W \succ 0$ and $\mu,x\in\mathbb{R}^{m}$, the function $N(x;\mu,\Sigma)$ is strictly positive and continuous in $(x,\mu,\Sigma)$. Let $c$ be as in the statement of Lemma \ref{lemm:sigprops} and let $\alpha$ and $\beta$ be as in the statement of Lemma \ref{lemm:muprops}. 
Define the set $\mathcal{M}(e_{0},\delta_{0},m,L,R,\Delta)\subset \mathbb{R}^{m}\times \mathbb{S}^{m\times m}_{+} $ via $\mathcal{M}(e_{0},\delta_{0},m,L,R,\Delta)=\{\mu\in\mathbb{R}^m,\Sigma\in\mathbb{S}^{m\times m}_{+}: \Sigma\succeq W, \lVert\Sigma\rVert_{2}\le c,  \lVert\mu\rVert_{2}\le \alpha \lVert M(e_{0},\delta_{0})\rVert_{2}+\beta \}$. This set is compact (closed and bounded). For any $n$ and $v_{1}^{n-1}\in {([-\frac{\Delta}{2},\frac{\Delta}{2}]^m)}^{n-1}$, Lemmas \ref{lemm:sigprops}  and \ref{lemm:muprops} guarantee that we have that $\left(\mu_{n}(e_0,\delta_{0},v_{1}^{n-1}), \Sigma_{n}\right) \in  \mathcal{M}(e_{0},\delta_{0},m,L,R,\Delta)$. 

Via (\ref{eq:unlabeleduptonow}) we have
\begin{multline}
    \inf_{\substack{x\in H_e\\ v_{1}^{n-1}\in {([-\frac{\Delta}{2},\frac{\Delta}{2}]^m)}^{(n-1)}}}f_{\rs{e}_{n}|\rs{v}_{1}^{n-1},\rs{e}_{0},\rs{\delta}_0}(x|{v}_{1}^{n-1},e_{0},{\delta}_{0})= \\ \inf_{\substack{x\in H_e\\ v_{1}^{n-1}\in {([-\frac{\Delta}{2},\frac{\Delta}{2}]^m)}^{(n-1)}}} N(x;\mu_{n}(e_{0},\delta_{0},v_{1}^{n-1}),\Sigma_{n} ).\label{eq:refmeatendchainequality}
    \end{multline} As for any choice of $v_{1}^{n-1}\in {([-\frac{\Delta}{2},\frac{\Delta}{2}]^m)}^{(n-1)}$ we have $\left(\mu_{n}(e_0,\delta_{0},v_{1}^{n-1}), \Sigma_{n}\right) \in  \mathcal{M}(e_{0},\delta_{0},m,L,R,\Delta)$ gives that
    \begin{multline}
    \inf_{\substack{x\in H_e\\ v_{1}^{n-1}\in {([-\frac{\Delta}{2},\frac{\Delta}{2}]^m)}^{(n-1)}}} N(x;\mu_{n}(e_{0},\delta_{0},v_{1}^{n-1}),\Sigma_{n} ) \ge \\ \inf_{\substack{x\in H_e\\ (\mu,\Sigma)\in \mathcal{M}(e_{0},\delta_{0},m,L,R,\Delta) } }N(x;\mu,\Sigma),\label{eq:indylb}
\end{multline}where we note that the lower bound in (\ref{eq:indylb}) does not depend on $n$ or $v_{1}^{n-1}$. Furthermore (\ref{eq:indylb}) is a minimization of a strictly positive function over the compact (closed and bounded set) given by $\mathcal{C}=\{x\in\mathbb{R}^{m},\mu\in\mathbb{R}^{m},\Sigma\in\mathbb{S}_{+}^{m}:x\in H_{e}, (\mu,\Sigma)\in \mathcal{M}(e_{0},\delta_{0},m,L,R,\Delta) \}$. The function minimized, $N(x;\mu,\Sigma)$ is continuous on $\mathcal{C}$ since $\Sigma \succ 0_{m\times m}$. Thus, for some $\epsilon>0$
\begin{align}\label{eq:epspos}
\inf_{\substack{x\in H_e\\ (\mu,\Sigma)\in \mathcal{M}(e_{0},\delta_{0},m,L,R,\Delta) } }N(x;\mu,\Sigma) \ge \epsilon.
\end{align}  Connecting the chain of inequalities (\ref{eq:countableadditivity}), (\ref{eq:rectangleBound}), (\ref{eq:secondinchain}), (\ref{eq:minleav}), (\ref{eq:refmeatendchainequality}), (\ref{eq:indylb}), and (\ref{eq:epspos})  gives:
\begin{subequations}
\begin{multline}
    \mathbb{P}_{\rs{e}_{n},\rs{\delta}_{n}|\rs{e}_{0},\rs{\delta}_{0}}[(\rs{e}_{n},\rs{\delta}_{n})\in F|\rs{e}_{0}={e}_{0},\rs{\delta}_{0}={\delta}_{0}] \ge \\  \mathbb{P}_{\rs{e}_{n},\rs{\delta}_{n}|\rs{e}_{0},\rs{\delta}_{0}}[(\rs{e}_{n},\rs{\delta}_{n})\in H|\rs{e}_{0}={e}_{0},\rs{\delta}_{0}={\delta}_{0}]
    \end{multline} and, finally
    \begin{multline}
    \mathbb{P}_{\rs{e}_{n},\rs{\delta}_{n}|\rs{e}_{0},\rs{\delta}_{0}}[(\rs{e}_{n},\rs{\delta}_{n})\in H|\rs{e}_{0}={e}_{0},\rs{\delta}_{0}={\delta}_{0}] \ge \frac{\kappa}{\Delta^m}\epsilon.  \label{eq:rectangleBound2}
\end{multline}  
\end{subequations}
Thus, choosing $\xi = \frac{\kappa}{\Delta^m}\epsilon$ establishes (\ref{eq:whatwesetouttoprove}). 
Thus, if $F$ has $\lambda(F)>0$, for any subsequence $\{n_i\}\in\mathbb{N}$ and $e_{0},\delta_{0}$ there exists $\xi>0$ such that 
\begin{align}\label{eq:finalseries}
 \sum_{i=1}^{r}\mathbb{P}_{\rs{e}_{n_{i}},\rs{\delta}_{n_i}|\rs{e}_{0},\rs{\delta}_{0}}[(\rs{e}_{n_{i}},\rs{\delta}_{n_i})\in F|\rs{e}_{0}={e}_{0},\rs{\delta}_{0}={\delta}_{0}]\ge  r\xi .
\end{align} The series thus diverges as $r\rightarrow \infty$ for any initial condition $e_{0},\delta_{0}$. This implies that $F$ is not weakly transient per Definition \ref{def:weakTrans}, thus all weakly transient sets have Lebesgue measure $0$.
\end{proof}
Theorem \ref{thm:existanceCrit} guarantees that an invariant measure that is equivalent to $\lambda$ exists if all weakly transient sets have $\lambda$-measure 0. Combining Lemma \ref{lemm:transientmeasure0} with Theorem \ref{thm:existanceCrit}  proves Lemma \ref{lemm:invarexist}. 
\subsection{Proof of Lemma \ref{lemm:ergo}}
We now demonstrate that the Markov chain describing (jointly) the dither and innovation processes satisfies some ergodic properties; in particular that the sequence of random variables $(\rvec{e}_{t},\rvec{\delta}_{t})$ converge in distribution to the invariant measure. We begin again with some definitions and a key result from the survey \cite{mcmcReview}. 

\begin{definition}[\cite{mcmcReview}]\label{def:pi}
A Markov chain $\{\rs{z}_i\}$ on some state space $\mathbb{X}$ is called $\phi$-\textit{irreducible} if there exists a nonzero $\sigma$-finite measure $\phi$ such that for all measurable $\mathcal{A}\subset \mathbb{X}$ with $\phi(\mathcal{A})>0$ and all initial conditions $\rs{z}_{0}=z_{0}$ with $z_0\in\mathbb{X}$ we can find an integer $n$ such that 
\begin{align}
    \mathbb{P}_{\rs{z}_{n}|\rs{z}_{0}}[\rs{z}_{n}\in\mathcal{A}|\rs{z}_{0}= {z}_{0}] > 0.
\end{align}
\end{definition} 
%The proof of $\phi-\textit{irreducibility}$, for $\phi$ the Lebesgue measure, is immediate from the above accessibility discussion. For the sake of completeness, we give a more formal proof of this fact below. 
\begin{definition}[\cite{mcmcReview}]\label{def:ap}
A Markov chain on $\mathbb{X}$ is called aperiodic if there does not exist $d> 1$ and disjoint nonempty measurable subsets $\mathcal{Z}_{0},\mathcal{Z}_{1}, \dots \mathcal{Z}_{d-1}$ such that when ${z}_{n-1}\in\mathcal{Z}_{i}$
\begin{align}
    \mathbb{P}_{\rs{z}_{n}|\rs{z}_{n-1}}[\rs{z}_{n}\in\mathcal{Z}_{i+1\text{ mod } d }|\rs{z}_{n-1}={z}_{n-1}] = 1.
\end{align} 
\end{definition}
\begin{definition}[Total Variation]
Define the total variation norm between two probability measures $\mathbb{P}_{1}:\mathbb{B}(\mathbb{X})\rightarrow[0,1]$ and $\mathbb{P}_{2}:\mathbb{B}(\mathbb{X})\rightarrow[0,1]$ defined on the same measure space via
\begin{align}
    \lVert \mathbb{P}_{1} - \mathbb{P}_{2} \rVert_{\mathrm{T.V.}} \overset{\Delta}{=}     \sup_{\mathcal{A}\in\mathbb{\mathbb{X}}}|\mathbb{P}_1(\mathcal{A})-\mathbb{P}_2(\mathcal{A})|.
\end{align} 
\end{definition}
\begin{theorem}[{\cite[Theorem 4]{mcmcReview}}]\label{thm:converg}
Consider a Markov chain $\{\rs{r}_{i}\}$ on a countably generated state space that is aperiodic, $\phi$-irreducible, and admits an invariant measure $\mathbb{P}_{\mathrm{inv}}$ that is absolutely continuous with respect to Lebesgue measure. 
For $\lambda$-almost every initial condition ${r}_{0}$ we have 
\begin{align}
    \lim_{n\rightarrow \infty }\lVert \mathbb{P}_{\rs{r}_{n}|\rs{r}_{0}}[\rs{r}_{n}\in \circ | \rs{r}_{0}={r}_{0}]-\mathbb{P}_{\mathrm{inv}}(\circ) \rVert_{\mathrm{T.V.}} =0.
    \end{align} Furthermore, the law of large numbers holds in the following sense. Assume the initial state of the chain $\rs{z}_0$ is a random variable that is absolutely continuous with respect to $\lambda$. For all measurable functions $\eta$ such that $\mathbb{E}_{\rs{r}\sim \mathbb{P}_{\mathrm{inv}}}[|\eta(\rs{r})|]<\infty$ and 
     $\mathbb{E}_{\rs{r}_0}[|\eta(\rs{r}_0)|]<\infty$
    we have
    \begin{align}\label{eq:lln_def}
        \lim_{N\rightarrow \infty }\frac{1}{N+1}\sum\limits_{i=0}^{N} \eta(\rs{r}_{i}) \overset{\mathrm{a.s.}}{=} \mathbb{E}_{\rs{r}\sim\mathbb{P}_{\mathrm{inv}}}[\eta(\rs{r})].
    \end{align}
\end{theorem}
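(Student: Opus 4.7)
My plan is to prove this via the now-classical Meyn--Tweedie machinery built on Nummelin splitting, coupling, and Birkhoff's ergodic theorem. The first step is to replace the given irreducibility measure $\phi$ by the maximal irreducibility measure $\psi$ and isolate an absorbing Harris set $H\in\mathbb{B}(\mathbb{S})$ on which the chain is positive Harris recurrent. Standard arguments (existence of an invariant probability $\mathbb{P}_{\infty}$ together with $\psi$-irreducibility on a countably generated space) force $\mathbb{P}_{\infty}(H)=1$. Because $\mathbb{P}_{\infty}\ll\lambda$ and, in the application here, Lemma \ref{lemm:invarexist} gives a strictly positive invariant density, the complement of $H$ has Lebesgue measure zero, which is exactly what the ``$\lambda$-almost every initial condition'' qualifier requires.

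The second step is to construct a small set and lift it to a regenerative atom. A general result in the $\psi$-irreducible, countably generated setting produces a set $C$, an integer $m\ge 1$, a constant $\varepsilon>0$, and a probability measure $\nu$ with $P^{m}(x,\cdot)\ge \varepsilon\,\nu(\cdot)$ for every $x\in C$; aperiodicity lets us take $m=1$ after possibly shrinking $C$. Nummelin's splitting technique then lifts the chain to $\mathbb{S}\times\{0,1\}$, creating a genuine accessible atom $\alpha=C\times\{1\}$ whose marginal dynamics recover the original chain. Finiteness of $\mathbb{P}_{\infty}$ plus Kac's formula give positive Harris recurrence at $\alpha$ and a regenerative structure with i.i.d.\ inter-visit cycle lengths of finite mean.

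The third step delivers the two conclusions. For total-variation convergence I run two copies of the chain, one from $r_{0}\in H$ and one from $\mathbb{P}_{\infty}$, and couple them at their successive joint hitting times of $\alpha$ by drawing their next transitions jointly from the minorizing measure $\nu$; aperiodicity forces the gcd of return times at $\alpha$ to be one, so the coupling time is almost-surely finite and the coupling inequality yields $\|P^{n}(r_{0},\cdot)-\mathbb{P}_{\infty}\|_{\mathrm{T.V.}}\to 0$ for every $r_{0}\in H$. For the law of large numbers I view the stationary chain as a measure-preserving shift on the canonical path space: $\psi$-irreducibility makes every shift-invariant event trivial (its indicator would be a $\mathbb{P}_{\infty}$-a.s.\ constant harmonic function), so the shift is ergodic and Birkhoff's theorem gives (\ref{eq:lln_def}) when $r_{0}\sim\mathbb{P}_{\infty}$. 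The upgrade to an arbitrary absolutely continuous initial law follows by coupling the chain started from $r_{0}$ to a stationary copy at $\alpha$ and applying a Cesàro argument to the $\mathrm{T.V.}$-difference.

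The main obstacle is verifying the small-set/Nummelin-splitting step in enough generality to cover the hypotheses exactly as stated, together with the careful alignment of $\psi$-null, $\mathbb{P}_{\infty}$-null, and $\lambda$-null sets on $H$. Once the regenerative atom is in hand, both the $\mathrm{T.V.}$ convergence and the LLN are essentially automatic consequences of coupling and Birkhoff's theorem, and the ``$\lambda$-almost every'' qualifier reduces to the null-set alignment, which in the present application is immediate from the strictly positive invariant density supplied by Lemma \ref{lemm:invarexist}.
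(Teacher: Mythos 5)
The paper does not prove Theorem~\ref{thm:converg}: it is imported verbatim as a cited black box from~\cite[Theorem~4]{mcmcReview}, and the surrounding text only verifies its hypotheses for the specific chain $\{\rs{e}_t,\rs{d}_t\}$ through Lemmas~\ref{lemm:invarexist} and~\ref{lemm:tv_conv_hy}. Your sketch reproduces the standard small-set/Nummelin-splitting/coupling machinery, with Birkhoff's ergodic theorem for the law of large numbers, which is exactly the argument underlying the cited reference, so the route is faithful to what the paper implicitly relies on. One clarification is worth recording: the cited theorem and your regeneration argument deliver total-variation convergence only for $\mathbb{P}_{\infty}$-almost every initial condition; promoting this to $\lambda$-almost every, as the theorem is phrased here, requires $\lambda\ll\mathbb{P}_{\infty}$, not merely $\mathbb{P}_{\infty}\ll\lambda$. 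You correctly observe that the strictly positive invariant density of Lemma~\ref{lemm:invarexist} closes this gap in the SISO application (it yields equivalence of $\lambda$ and $\mathbb{P}_{\infty}$), but since that property is not implied by the hypothesis ``absolutely continuous with respect to Lebesgue measure,'' a tighter restatement of the general theorem would say ``equivalent to Lebesgue measure.'' The remaining details you leave open---producing a one-step small set after shrinking (aperiodicity alone only guarantees $m$-step minorization along a set of lags with gcd one), and the coupling-plus-Ces\`aro upgrade of the LLN to general absolutely continuous initial laws---are routine and reasonably deferred to~\cite{mcmcReview} or to Meyn--Tweedie; they do not undermine the argument.
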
  In the present setting, the state space $\mathbb{X}=\mathbb{D}^m$. The Borel $\sigma$ - algebra on $\mathbb{D}^{m}$ is countably generated, and the Lebesgue measure  on $\mathbb{D}^{m}$ (denoted  $\lambda$) is $\sigma-$finite. Thus, to guarantee that the n-step conditional probability measures for the Markov chain $\{\rs{e}_{t},\rs{\delta}_{t}\}$ defined by (\ref{eq:transitionpdf}) will converge to the stationary distribution in total variation, and to verify that the law of large numbers holds in the sense of (\ref{eq:lln_def}), we can verify that the chain is $\lambda$-irreducible and aperiodic.
\begin{lemma}\label{lemm:tv_conv_hy}
The Markov chain induced by (\ref{eq:transitionpdf}) and (\ref{eq:transitionpdf}) is $\lambda$-irreducible and aperiodic.
\end{lemma}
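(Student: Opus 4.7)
The strategy rests on a single observation: the one-step transition density $f_{t+1|t}$ in (\ref{eq:transitionpdf}) is \emph{strictly positive} on $\mathbb{S}\times\mathbb{S}$. Indeed, since $W>0$ the Gaussian factor is positive for every pair of arguments, and the uniform factor $1/\Delta$ is positive, so $f_{1|0}(e',d'\vert e,d)>0$ for all $(e,d),(e',d')\in\mathbb{S}$. This is essentially the same fact exploited (uniformly in $n$) in the proof of Lemma~\ref{lemm:transientmeasure0}, but for the present lemma a single step is enough.

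For $\lambda$-irreducibility I would take $\phi=\lambda$ (which is $\sigma$-finite on $\mathbb{S}$) and verify Definition~\ref{def:pi} with $n=1$. Given any $\mathcal{A}\in\mathbb{B}(\mathbb{S})$ with $\lambda(\mathcal{A})>0$ and any initial state $(e_0,d_0)\in\mathbb{S}$, the transition kernel (\ref{eq:transitionkernel}) yields
\[
\mathbb{P}[(\rs{e}_{1},\rs{d}_{1})\in\mathcal{A}\mid \rs{e}_{0}=e_{0},\rs{d}_{0}=d_{0}]=\iint_{\mathcal{A}} f_{1|0}(e',d'\vert e_0,d_0)\,de'\,dd'>0,
\]
since a strictly positive integrand is integrated over a set of positive Lebesgue measure.

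For aperiodicity, which I read in the nontrivial sense $d\ge 2$ (the $d=1$ decomposition $\mathcal{Z}_0=\mathbb{S}$ is always available, so otherwise no chain would be aperiodic), I would argue by contradiction. Suppose disjoint nonempty $\mathcal{Z}_0,\ldots,\mathcal{Z}_{d-1}\in\mathbb{B}(\mathbb{S})$ realize a $d$-cycle. Fix any $z\in\mathcal{Z}_0$; the cyclic hypothesis forces $\mathbb{P}[(\rs{e}_1,\rs{d}_1)\in\mathcal{Z}_1\mid z_0=z]=1$. But the conditional law of $(\rs{e}_1,\rs{d}_1)$ given this $z$ is equivalent to $\lambda$ on $\mathbb{S}$ (its density is strictly positive), so this is possible only if $\lambda(\mathbb{S}\setminus\mathcal{Z}_1)=0$. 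Consequently $\mathcal{Z}_2$, being disjoint from $\mathcal{Z}_1$, must satisfy $\lambda(\mathcal{Z}_2)=0$. Now pick any $z'\in\mathcal{Z}_1$ (possible by nonemptiness); the cyclic hypothesis demands $\mathbb{P}[(\rs{e}_1,\rs{d}_1)\in\mathcal{Z}_2\mid z_0=z']=1$, but the left-hand side equals zero because the one-step kernel is absolutely continuous with respect to $\lambda$ and $\lambda(\mathcal{Z}_2)=0$. This contradiction closes the argument.

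There is no substantial obstacle here; the only care required is to interpret Definition~\ref{def:ap} in the nontrivial sense $d\ge 2$ and to observe that the cyclic hypothesis is a condition for \emph{every} point in each $\mathcal{Z}_i$, so producing a single witness $z\in\mathcal{Z}_0$ and a single witness $z'\in\mathcal{Z}_1$ suffices for both equalities above.
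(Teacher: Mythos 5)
Your proposal is correct and follows the same strategy as the paper: use the fact that the one-step transition density $f_{t+1|t}$ in (\ref{eq:transitionpdf}) is strictly positive on all of $\mathbb{S}\times\mathbb{S}$ (Gaussian in $e$ times uniform in $d$, both everywhere positive since $W>0$), take $\phi=\lambda$ and $n=1$ for irreducibility, and argue aperiodicity by contradiction from the same positivity. The irreducibility half is essentially identical to the paper's.

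On aperiodicity, your argument is actually tighter than the paper's. The paper jumps from $\mathbb{P}[(\rs{e}_{t+1},\rs{d}_{t+1})\in\mathcal{X}_{2}\mid \cdot]=1$ to the claim that $\mathcal{X}_{2}$ must equal the whole state space, and then invokes disjointness with a nonempty $\mathcal{X}_{1}$; strictly, positivity of the density only yields $\lambda(\mathbb{S}\setminus\mathcal{X}_{2})=0$, so disjointness gives $\lambda(\mathcal{X}_{1})=0$, which is not the same as $\mathcal{X}_{1}=\emptyset$ and does not immediately contradict nonemptiness. Your proof closes this gap: from $\lambda(\mathbb{S}\setminus\mathcal{Z}_{1})=0$ you deduce $\lambda(\mathcal{Z}_{2\bmod d})=0$ by disjointness, then pick a witness $z'\in\mathcal{Z}_{1}$ and use absolute continuity of the one-step kernel with respect to $\lambda$ to conclude the forced transition probability into that null set must be $0$, not $1$. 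One notational nit: for $d=2$ the target from $\mathcal{Z}_{1}$ is $\mathcal{Z}_{0}$ rather than $\mathcal{Z}_{2}$, but since the argument only uses disjointness from $\mathcal{Z}_{1}$ it goes through unchanged; also your explicit remark that the definition should be read with $d\ge 2$ is a reasonable (and standard) reading of Definition~\ref{def:ap}, which as stated in the paper would be vacuously violated by $d=1$, $\mathcal{Z}_{0}=\mathbb{S}$.
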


\begin{proof}
We first demonstrate $\lambda-$irreducibility. Let $\mathcal{A}\subset\mathbb{D}^{m}$ be any set of positive Lebesgue measure. Take $(e_0,\delta_{0})\in\mathbb{D}^{m}$. We have, by (\ref{eq:transitionpdf}),
\begin{multline}
    \mathbb{P}_{\rs{e}_{1},\rs{\delta}_{1}|\rs{e}_0,\rs{\delta}_0}[(\rs{e}_1,\rs{\delta}_1)\in\mathcal{A}|\rs{e}_0=e_0,\rs{\delta}_0=\delta_{0}] =\\ \iint_{\mathcal{A}} f_{t+1|t}(e_1,\delta_1|e_0,\delta_0)de_1d\delta_1,\end{multline}
    \begin{multline}
        \iint_{\mathcal{A}} f_{t+1|t}(e_1,\delta_1|e_0,\delta_0)de_1d\delta_1 =\\ \iint_{\mathcal{A}} \frac{1}{\Delta^m}\frac{e^{-\frac{1}{2}(e_{1}-M(e_{0},{\delta}_{0}))^{\mathrm{T}}W^{-1}(e_{1}-M(e_{0},{\delta}_{0}))}}{\sqrt{(2\pi)^m \det{W} }}de_1d\delta_1,
\end{multline} and, finally, \begin{align}
 \iint_{\mathcal{A}} \frac{1}{\Delta^m}\frac{e^{-\frac{1}{2}(e_{1}-M(e_{0},{\delta}_{0}))^{\mathrm{T}}W^{-1}(e_{1}-M(e_{0},{\delta}_{0}))}}{\sqrt{(2\pi)^m \det{W} }}de_1d\delta_1 >0.  \nonumber
\end{align} This established that taking $n=1$ always allows us to satisfy the requirements for $\lambda$-irreducibility. This allows the proof of aperiodicity to follow immediately. We proceed by contradiction. Assume the chain is periodic (i.e., assume that the chain is ``not aperiodic" via Definition \ref{def:ap}); assume that one has a set of $d>1$ disjoint nonempty measurable subsets $\mathcal{S}_{0},\mathcal{S}_{1},\dots,\mathcal{S}_{d-1}\subset \mathbb{D}^{m}$ such that for all $t$ and $i$ when $(e_{t},\delta_{t})\in\mathcal{S}_{i}$, $\mathbb{P}_{\mathbf{e}_{t+1}|\mathbf{e}_{t}}[(\mathbf{e}_{t+1},\boldsymbol{\delta}_{t+1})\in\mathcal{S}_{i+1 \mod d}|\mathbf{e}_{t}=e_{t},\boldsymbol{\delta}_{t}=\delta_{t}] = 1$. Take $(e_t,{\delta}_t)\in\mathcal{S}_{0}$. By assumption
\begin{align}
    \mathbb{P}[(\rs{e}_{t+1},\rs{\delta}_{t+1})\in\mathcal{S}_{1}|\rs{e}_t=e_t, \rs{\delta}_t={\delta}_t] = 1. 
\end{align} Note that by (\ref{eq:transitionpdf}), $\lambda(\mathcal{S}_{1})>0$ or else  $\mathbb{P}[(\rs{e}_{t+1},\rs{\delta}_{t+1})\in\mathcal{S}_{1}|\rs{e}_t=e_t, \rs{\delta}_t={\delta}_t]=0$. By our work proving the irreducibly condition, it must be that $\mathcal{S}_{1}=\mathbb{R}$, i.e., the whole state space. This is a contradiction, since the hypothesis of Definition \ref{def:ap} is that $\mathcal{S}_{0}$ is nonempty and $\mathcal{S}_{0}\cap\mathcal{S}_{1}=\emptyset$.
\end{proof}
Lemma \ref{lemm:tv_conv_hy} verifies the hypothesis of Theorem \ref{thm:converg} and thus proves Lemma \ref{lemm:ergo}.  Note that the convergence in total variation implies weak convergence (i.e., convergence in distribution).

\end{document}